\documentclass[11pt,print,draftcls,onecolumn,romanappendices]{ieeecolor}
\usepackage{generic}
\usepackage[noadjust]{cite}
\usepackage{graphicx}
\usepackage{nccmath}
\usepackage{amsmath,amsfonts,amssymb}
\usepackage{mathtools}
\usepackage{url}
\usepackage{color}
\usepackage{enumerate}
\usepackage{subcaption}
\usepackage[format=hang]{caption}
\usepackage[hidelinks]{hyperref}
\usepackage{textcomp}
\def\BibTeX{{\rm B\kern-.05em{\sc i\kern-.025em b}\kern-.08em
		T\kern-.1667em\lower.7ex\hbox{E}\kern-.125emX}}
\markboth{Draft}
{Yan \MakeLowercase{\textit{et al.}}}%
\usepackage{stfloats}
\setcounter{MaxMatrixCols}{20}
\usepackage{epstopdf}
\usepackage[cal=boondoxo]{mathalfa}
\DeclareMathAlphabet{\pazocal}{OMS}{zplm}{m}{n}

\usepackage[ruled,linesnumbered]{algorithm2e}

\makeatletter
\newcommand{\removelatexerror}{\let\@latex@error\@gobble}
\makeatother

\newcounter{storealgline}
\IncMargin{.5em}

\DontPrintSemicolon
\makeatletter
\renewcommand{\Indentp}[1]{%
	\advance\leftskip by #1
	\advance\skiptext by -#1
	\advance\skiprule by #1}%
\renewcommand{\Indp}{\algocf@adjustskipindent\Indentp{\algoskipindent}}
\renewcommand{\Indm}{\algocf@adjustskipindent\Indentp{-\algoskipindent}}
\makeatother

\setlength{\belowcaptionskip}{-10pt}

\newtheorem{thm}{Theorem}
\newtheorem{prob}{Problem}

\newtheorem{prop}[thm]{Proposition}
\newtheorem{cor}[thm]{Corollary}
\newtheorem{lem}[thm]{Lemma}
\newtheorem{rem}{Remark}
\newtheorem{exmp}{Example}

\DeclareMathOperator{\col}{col}
\DeclareMathOperator{\cs}{colspan}
\DeclareMathOperator{\rank}{rank}
\DeclareMathOperator{\diag}{diag}

\newcommand{\W}{\mathbb{W}}
\newcommand{\T}{\mathbb{T}}    
\newcommand{\R}{\mathbb{R}}     
\newcommand{\Z}{\mathbb{Z}}             
\newcommand{\Zp}[1]{\mathbb{Z}^+_{#1}}
\newcommand{\Znn}[1]{\mathbb{Z}^{\geq0}_{#1}}
\newcommand{\LTI}[1]{\mathfrak{L}^{#1}}

\newcommand{\n}[1]{\mathtt{n}\left(#1\right)}    
\newcommand{\lag}[1]{\mathtt{L}\left(#1\right)}

\newcommand*{\END}{\hfill\mbox{\rule[0pt]{1.3ex}{1.3ex}}}
\newcommand{\proj}[2]{\pi_{#1}\left(#2\right)}

\newcommand{\revise}[1]{{\color{black} #1}}

\newcommand{\B}{\mathfrak{B}}
\newcommand{\F}{\mathcal{F}}
\newcommand{\Hk}{\mathcal{H}}

\newcommand{\bint}[1]{{|[#1]}}

\makeatletter
\providecommand{\bigsqcap}{%
	\mathop{%
		\mathpalette\@updown\bigsqcup
	}%
}
\newcommand*{\@updown}[2]{%
	\rotatebox[origin=c]{180}{$\m@th#1#2$}%
}
\makeatother
\begin{document}

\title{Distributed Data-driven Predictive Control via Dissipative Behavior Synthesis}

\author{Yitao Yan, Jie Bao and Biao~Huang, \IEEEmembership{Fellow, IEEE}
\thanks{Y. Yan and J. Bao are with the School of Chemical Engineering, UNSW Sydney, NSW 2052, Australia. (e-mail:  y.yan@unsw.edu.au; j.bao@unsw.edu.au).}
\thanks{B. Huang is with the Department of Chemical and Materials Engineering, University of Alberta, 116 St. and 85 Ave., Edmonton, AB, Canada T6G 2R3. (e-mail: biao.huang@ualberta.ca)}
}
\markboth{Draft}{}

\maketitle
\global\csname @topnum\endcsname 0
\global\csname @botnum\endcsname 0
\begin{abstract}
	This paper presents a distributed data-driven predictive control (DDPC) approach using the behavioral framework. It aims to design a network of controllers for an interconnected system with linear time-invariant (LTI) subsystems such that a given global (network-wide) cost function is minimized while desired control performance (e.g., network stability and disturbance rejection) is achieved using dissipativity in the quadratic difference form (QdF). By viewing dissipativity as a behavior and integrating it into the control design as a virtual dynamical system, the proposed approach carries out the entire design process in a unified framework with a set-theoretic viewpoint. \revise{This leads to an effective data-driven distributed control design, where the global design goal can be achieved by  distributed optimization based on the local QdF conditions. The approach is illustrated by an example throughout the paper.}
\end{abstract}

\begin{IEEEkeywords}
	behavioral systems theory, dissipativity, data-driven predictive control, distributed control.
\end{IEEEkeywords}

\IEEEpeerreviewmaketitle
\allowdisplaybreaks

\section{Introduction}
\IEEEPARstart{T}{he} ubiquitous monitoring systems implemented in the industry are collecting \revise{a} tremendous amount of operation data. These data sets contain rich information of the processes and can be an accurate way to describe their dynamical features. As a result, there has been a gradual shift on the attention of control design from model-based to data-centric methods \cite{Stanley:2018,Wang:2019b}. On the other hand, due to the improvement of energy and economic efficiency, the design of modern industrial processes is often of large scale with different units interconnected through a very complex network. However, the associated problem with this design is that the complex interactions among subsystems make analysis and control design a challenge. This is especially so when only locally measured data sets are available because they are measured \emph{under interconnection}.

The behavioral systems theory developed by Willems \cite{Willems:1991} is expected to provide an ideal framework for data-driven analysis and control. In this framework, a dynamical system is viewed as the set of trajectories admissible through it, and a dynamical model is only \emph{one way of representing the system} instead of defining it. Input and output variables are treated equally in this framework instead of input causing output. This abandonment of causality is useful in practice because in many cases it is hard to tell which variables are the cause and which are the effect, and this is even less meaningful when only data are available. Furthermore, interconnection is viewed as variables in different subsystems sharing the same trajectory rather than signal flowing from one subsystem to another \cite{Willems:1997}. This viewpoint is especially helpful in complex interconnections because the the actual direction of signal flows may be unclear. 

Among various developments in the behavior framework, perhaps the most notable one in the direction of data-driven analysis is that the columns of the Hankel matrix constructed from a measured trajectory (with persistently exciting input) for a controllable linear time-invariant (LTI) system spans the entire finite-length trajectory space \cite{Willems:2005}. This has since led to several relaxed conditions such as extension to multiple trajectories \cite{vanWaarde:2020} and relaxed excitation condition \cite{Markovsky:2020}, as well as applications such as dissipativity verification \cite{Yan:2019a,Romer:2019} and predictive control structures \cite{Coulson:2019,Wei:2020,Berberich:2020a,Coulson:2021}. These works exclusively consider stand-alone systems instead of large-scale interconnected ones. For complex dynamics, one of the most effective control structures is distributed control due to its balance between feasibility and flexibility \cite{Tippett:2013,Tippett:2014}. Model-based control methods of interconnected systems with subsystems having parametric uncertainties have been developed in \cite{Yan:2019}. In \cite{Huang:2021}, a decentralized predictive control (which is a special case of distributed control) strategy using modified algorithm in \cite{Coulson:2019} has been applied to power systems. In all of the aforementioned data-driven control approaches, \revise{disturbance rejection was not considered}. Recently, a complete set-theoretic analysis and control design procedure has been developed in \cite{Yan:2021}, in which the network of the subsystems itself is treated as a dynamical system, resulting in a flexible plug-and-play structure that allows the subsystems to admit different types of representations (e.g., model/data mixture). The control design has also been formulated in a generic way, covering a range of different specific problems. However, \cite{Yan:2021} only provides \revise{guidelines for} the verification of the existence of desired controlled behavior implementable by distributed controllers. To the best of the authors' knowledge, a receding horizon distributed data-driven predictive control (DDPC) structure for complex interconnected systems that involves disturbance attenuation is yet to be formulated.

In this paper, a new DDPC procedure for the control of interconnected systems is developed. We assume that all subsystems are LTI but their models are unknown. The data for all subsystems are collected locally \revise{in} the presence of interconnection. The proposed approach aims to design a network of controllers that minimizes a given cost function (e.g., the economic cost of the entire plant operation and/or tracking errors) while achieving required network-wide conditions (e.g., stability and disturbance attenuation), which are specified using dissipativity with respect to a desired supply rate in quadratic difference form (QdF). Dissipativity is known for its ability to capture different types of dynamic features depending on the context and the physical meaning of the supply rate \cite{Willems:1972}. In particular, dynamic quadratic supply rates are effective in capturing the features of LTI systems \cite{Willems:1998,Kojima:2005, Tippett:2013}. \revise{Unlike previous works, in which dissipativity is treated as a property of the system \cite{Tippett:2013,Tippett:2014,Yan:2019},} this paper treats dissipativity itself as a behavior (similar to \cite{Willems:2007a}) and integrates it into the controlled system as a virtual dynamical system, resulting in a DDPC structure under a unified framework. \revise{This allows for an effective data-driven distributed control design, where the global design goal can be achieved by  distributed optimization based on local QdF conditions.}

The rest of this paper is organized as follows. Section \ref{sec:preliminaries} introduces the necessary preliminaries and formulates the problem to be solved. Section \ref{sec:interconbehavior} discusses the interconnection of finite-length behaviors. Section \ref{sec:dissipativity} gives a treatment of dissipativity in the behavioral perspective. The procedure of the proposed DDPC design and implementation is outlined in Section \ref{sec:controldesgin}. A numerical example is presented in Section \ref{sec:example} and we conclude the paper in Section \ref{sec:conclusion}.

\textbf{Notation.} We adopt the conventional notations $\R$, $\R^\mathrm{n}$, $\R^{\mathrm{m}\times \mathrm{n}}$, $\Z$, $\Zp{}$ etc. Spaces with unspecified but finite dimensions are denoted as $\R^\bullet$, $\R^{\bullet\times \mathrm{n}}$, etc. The set of all non-negative real numbers is denoted by $\R^{\geq0}$. The set of all positive (respectively, non-negative) integers no greater than $T$ is denoted as $\Zp{T}$ (respectively, $\Znn{T}$). The set of all symmetric matrices with dimension $\mathrm{n}\times\mathrm{n}$ is denoted as $\mathbb{S}^\mathrm{n}$. $I_\mathrm{n}$ and $0_{\mathrm{m}\times \mathrm{n}}$ denote, respectively, an $\mathrm{n}\times \mathrm{n}$ identity matrix and an $\mathrm{m}\times \mathrm{n}$ zero matrix. The subscripts are dropped when they are clear from the context. The Moore-Penrose inverse of a matrix $A$ is denoted as $A^\dagger$ and \revise{we} define $A^\perp\coloneqq I-A^\dagger A$, $A_\perp\coloneqq I-AA^\dagger$. For two matrices $A$ and $B$, $\col(A,B)$ and $\diag(A,B)$ stack them vertically and diagonally, respectively. For a set $A$ with $N$ elements, denote $\{A_i\}_{i=1}^N=\{A_1,A_2,\ldots,A_N\}$. Lastly, $\col\{A_i\}_{i=1}^N\coloneqq\col(A_1,A_2,\ldots,A_N)$ and analogously for $\diag\{A_i\}_{i=1}^N$. \revise{A finite dimensional space with generic variable $w$ is denoted as $\W$ and its dimension as $\mathrm{w}$} (that is, the dimension of a variable $w$ is represented by the upright w). \revise{In an interconnected system with collective variable $w$, the variable of the $i$th subsystem is denoted as $w^i$, which can be partitioned into (multi-variable) components $w^i=\col(w_1^i,w_2^i,\ldots,w_n^i)$. The space and dimension of the $j$th component, $w_j^i$, are denoted as $\mathbb{W}_j^i$ and $\mathrm{w}_j^i$, respectively.}

\section{Preliminaries and Problem Formulation}\label{sec:preliminaries}
\subsection{Behavioral Systems Theory}
In \revise{the} behavioral framework, a dynamical system is a triple $\Sigma=(\T,\W,\B)$ where $\T$ is the time axis, $\W$ is the signal space and $\B\subset\W^\T$ is the behavior. In data-driven control, trajectories obtained in data sets are typically of finite length. Without loss of generality, all trajectory samples in a data set can be assumed to have the same length, say $T$ steps. Then, the data set partially represents the behavior restricted to the interval $[1,T]$ denoted by
\begin{equation}
	\B_\bint{1,T}=\left\{w|\exists w'\in\B, \hat{w}=\hat{w}'_\bint{1,T}\right\},
\end{equation}
where $\hat{w}_\bint{1,T}\coloneqq\col(w(1),w(2),\ldots,w(T))$.

The \emph{manifest} variable $w$ contains all variables of interest such as exogenous inputs and outputs (although they are not distinguished from each other \emph{a priori}). A dynamical system is called \emph{time-invariant} if $\sigma\B\subset\B$, where $\sigma$ is a shift operator, i.e., $\sigma w\revise{(k)}=w\revise{(k+1)}$. A time-invariant system has a finite memory span, i.e., a finite time span \revise{after which the future trajectory is completely determined by its past} \cite{Willems:1991}. In this paper, \revise{we assume $\mathbb{T}\subset\mathbb{Z}^{\geq 0}$}, and that all systems are  time-invariant. In this case, the memory span can be represented in terms of the \emph{lag} of the system $\lag{\B}$, i.e., the smallest integer such that $w_\bint{k,k+\lag{\B}}\in \B_\bint{k,k+\lag{\B}}$, $\forall k\in\T$ implies $w\in\B$ \cite{Maupong:2017}. Two trajectories $\hat{w}_1,\hat{w}_2\in\B_\bint{1,L}$ can be weaved together to create a longer trajectory according to the following lemma.
\begin{lem}[\cite{Markovsky:2005}]\label{lem:weaving}
	Let $\Sigma$ be a time-invariant system. Let $\hat{w}_1,\hat{w}_2\in\B_\bint{1,L}$ with $L>\lag{\B}$. If $\hat{w}_{1\bint{L-l+1,L}}=\hat{w}_{2\bint{1,l}}$, with $l\geq\lag{\B}$, then $\col\left(\hat{w}_1,\hat{w}_{2\bint{l+1,L}}\right)\in\B_\bint{1,2L-l}$.
\end{lem}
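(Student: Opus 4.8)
The plan is to exhibit a genuine full-length trajectory in $\B$ whose restriction to $\bint{1,2L-l}$ is the weave, and to certify it window-by-window through the memory-span characterization of the lag. Write $\ell\coloneqq\lag{\B}$, $N\coloneqq 2L-l$, and $\hat{w}\coloneqq\col\left(\hat{w}_1,\hat{w}_{2\bint{l+1,L}}\right)$. First I would unfold the indices: by construction $\hat{w}_\bint{1,L}=\hat{w}_1$, while $\hat{w}(k)=\hat{w}_2\left(k-(L-l)\right)$ for $k\in\{L+1,\ldots,N\}$. The overlap hypothesis $\hat{w}_{1\bint{L-l+1,L}}=\hat{w}_{2\bint{1,l}}$ extends the second description across the seam: on $\{L-l+1,\ldots,L\}$ the values $\hat{w}_1(k)$ and $\hat{w}_2\left(k-(L-l)\right)$ already coincide, so $\hat{w}$ agrees with $\hat{w}_1$ on $\bint{1,L}$ and with a delayed copy of $\hat{w}_2$ on $\bint{L-l+1,N}$, two windows whose overlap $\bint{L-l+1,L}$ contains exactly $l$ samples.

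Next I would lift the weave to $\B$. Since $\hat{w}_1,\hat{w}_2\in\B_\bint{1,L}$, pick full trajectories $v_1,v_2\in\B$ with $v_{i\bint{1,L}}=\hat{w}_i$ and set $v(k)\coloneqq v_1(k)$ for $k\le L$ and $v(k)\coloneqq v_2\left(k-(L-l)\right)$ for $k>L$; the overlap identity makes $v$ coincide with the delayed copy of $v_2$ throughout $\bint{L-l+1,N}$ as well, and $v_\bint{1,N}=\hat{w}$. By the definition of $\lag{\B}$, to conclude $v\in\B$ it suffices that $v_\bint{k,k+\ell}\in\B_\bint{k,k+\ell}$ for every $k$. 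Here the role of the hypothesis $l\ge\ell$ becomes decisive through a covering argument: a window $\bint{k,k+\ell}$ fits entirely inside $\bint{1,L}$ once $k\le L-\ell$ and entirely inside the delayed-$v_2$ region once $k\ge L-l+1$, and these ranges leave no gap because the only possibly uncovered indices, $L-\ell+1\le k\le L-l$, form an empty set precisely when $l\ge\ell$. On the first range each window is a window of $v_1\in\B$ and on the second a window of $v_2\in\B$, so every window is admissible and the lag property yields $v\in\B$, hence $\hat{w}\in\B_\bint{1,N}$.

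The step I expect to be the main obstacle is closing the second case cleanly. A window of the delayed copy of $v_2$ is a restriction of $v_2\in\B$ to an \emph{earlier} interval, so its tuple of values is immediately seen to lie in $\B_\bint{k-(L-l),k+\ell-(L-l)}$, whereas the lag test asks for membership in $\B_\bint{k,k+\ell}$. On $\T\subset\Z^{\ge0}$ only forward-shift invariance $\sigma\B\subset\B$ is available, and this produces inclusions in the wrong direction. Bridging the gap requires the shift-invariance of the length-$(\ell+1)$ restricted behaviors, i.e.\ that $\B_\bint{k,k+\ell}$ does not depend on $k$; this is exactly what completeness of a time-invariant LTI behavior guarantees, and is the content encoded by the lag. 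I would therefore state this shift-invariance of the memory window explicitly at the outset, after which the index bookkeeping above is routine.
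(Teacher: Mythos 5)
The paper does not actually prove this lemma --- it is imported verbatim from \cite{Markovsky:2005} --- so there is no in-paper argument to compare against; judged on its own, your proof is correct and follows the standard route. Lifting $\hat{w}_1,\hat{w}_2$ to full trajectories $v_1,v_2\in\B$, splicing them with a shift of $L-l$, and certifying the splice window-by-window via the memory-span definition of $\lag{\B}$ is exactly the weaving argument in the cited literature (there it is often phrased via the state property: an overlap of length $l\geq\lag{\B}$ forces the two trajectories to pass through the same state at the seam, which is the same covering computation $L-\ell+1>L-l\iff l\geq\ell$ that you carry out). Your flagged ``main obstacle'' is a genuine subtlety and you resolve it correctly: with only $\sigma\B\subset\B$ on $\T\subset\Z^{\geq0}$ one gets $\B_{\bint{k+1,k+1+\ell}}\subset\B_{\bint{k,k+\ell}}$, i.e.\ inclusions in the wrong direction, and the delayed-$v_2$ windows land in $\B_{\bint{k-(L-l),\,k-(L-l)+\ell}}$ rather than $\B_{\bint{k,k+\ell}}$. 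The lemma as stated for a bare ``time-invariant system'' therefore implicitly assumes that the length-$(\ell+1)$ restricted behaviors are independent of the window position; this does hold for every system the paper applies the lemma to (LTI behaviors with kernel representations, for which $\B_{\bint{k,k+\ell}}$ is the kernel of a $k$-independent banded matrix), and making that assumption explicit at the outset, as you propose, is the right way to close the argument.
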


If $w$ admits a partition $w=(w_1,w_2)$, in which, for all $w_1\in\W_1^\T$, there exists $w_2\in\W_2^\T$ such that $(w_1,w_2)\in\B$ ($\W_1$ and $\W_2$ are conformable partitions of $\W$ according to that of $w$), then $w_1$ is said to be \emph{free}. Examples of free variables include references and disturbances. This condition can be represented conveniently using the projection operator $\pi$ defined by
\begin{equation}
	\begin{split}
		&\proj{w_i}{\B}=\{w_i\mid\exists \ell_j, j\in\Zp{\mathrm{w}}\setminus\{i\},\\ &\qquad\qquad\qquad \ (\ell_1,\ldots,\ell_{i-1},w_i,\ell_{i+1},\ldots,\ell_\mathrm{w})\in\B\},
	\end{split}
\end{equation}
i.e., $w_1$ is free if $\proj{w_1}{\B}=\W_1^\T$. If $w_1$ is maximally free, that is, if none of the elements in $w_2$ are free, then $\left(w_1,w_2\right)$ is an \emph{input/output partition} of $w$. The dimensions of $w_1$ and $w_2$ are called input and output cardinalities and denoted as $\mathtt{m}(\B)$ and $\mathtt{p}(\B)$, respectively. \revise{Note that they are invariants of a given behavior, i.e., their values are independent of the representation of the behavior.}

In many cases, a dynamical system requires the aid of auxiliary variables to describe its behavior. In such a case, the full system is the quadruple $\Sigma^{full}=(\T,\W,\mathbb{L},\B^{full})$ where the \emph{full behavior} $\B^{full}\subset(\W\times\mathbb{L})^\T$, in which the auxiliary variable $\ell$ is called the \emph{latent variable}. The \emph{manifest behavior} corresponding to this full behavior is then
\begin{equation}\label{eq:manifestbehavior}
	\B=\left\{w\mid \exists \ell, \ (w,\ell)\in\B^{full}\right\},
\end{equation}
or equivalently, $\B=\proj{w}{\B^{full}}$. The latent variable $\ell$ is said to have the \emph{property of state} if, for two trajectories $(w_1,\ell_1)$, $(w_2,\ell_2)\in\B^{full}$, $\ell_1\revise{(k)}=\ell_2\revise{(k)}$ implies that the concatenation of the two trajectories at step $k$ is also a trajectory in $\B^{full}$ \cite{Polderman:1998}. In such a case, the latent variable is called a \emph{state variable} and we always have that $\lag{\B^{full}}=1$. The lowest number of state variables needed to describe $\B^{full}$ is called the \emph{state cardinality} of $\B$ and denoted by $\n{\B}$.

\subsection{Linear Time-Invariant Behaviors}
A dynamical system $\Sigma=(\T,\W,\B)$ is an LTI system if, in addition to time invariance, $\W$ is a vector space (e.g., $\W=\R^\mathrm{w}$) and $\B$ is a linear subspace of $\W^\T$ \revise{that is closed in the topology of pointwise convergence} \cite{Willems:2005,Markovsky:2006}. The set of all LTI systems with dimension $\mathrm{w}$ is denoted as $\LTI{\mathrm{w}}$ and that with unspecified dimension as $\LTI{\bullet}$. An LTI system always admits a representation of the form $R(\sigma)w=0$ where $R\in\mathbb{R}^{\bullet\times\mathrm{w}}[\sigma]$ is a polynomial matrix with variable $\sigma$. This form is called the \emph{kernel} representation of $\B$ and is denoted as $\B=\revise{\ker(R(\sigma))}$ \cite{Polderman:1998}. A kernel representation is called minimal if $R(\sigma)$ is of full row rank, and the order of such a minimal representation equals the lag $\lag{\B}$. An LTI system is controllable if $\rank(R(\lambda))$ remains unchanged for all $\lambda\in\mathbb{C}$. The set of controllable LTI systems with dimension $\mathrm{w}$ is denoted as $\LTI{\mathrm{w}}_\mathrm{contr}$. 

If an LTI system is described as the projection of a full behavior with latent variable $\ell$, then it also admits a \emph{latent variable} representation $R(\sigma)w=M(\sigma)\ell$ where $M\in\mathbb{R}^{\bullet\times\mathrm{l}}[\sigma]$. Let $\Sigma\in\LTI{\mathrm{w}}$ and suppose a measured trajectory \revise{$\hat{w}=\mathrm{col}\left(w(1),w(2),\ldots,w(T)\right)\in\B_{|[1,T]}$} is available. Then it is possible to construct a Hankel matrix of order $L$ as
\revise{\begin{equation}\label{eq:Hankelmatrix}
	\Hk_L(\revise{\hat{w}})=\begin{bmatrix}
		w(1) & w(2) & \cdots & w(T-L+1)\\
		w(2) & w(3) & \cdots & w(T-L+2)\\
		\vdots & \vdots & \ddots & \vdots\\
		w(L) & w(L+1)&\cdots& w(T)
	\end{bmatrix}.
\end{equation}}
If a set of trajectories $\mathcal{W}=\left\{\hat{w}_1,\hat{w}_2,\ldots,\hat{w}_n\right\}$ is available, then a mosaic-Hankel matrix of order $L$ can be constructed as 
\begin{equation}\label{eq:mosaicHankelmatrix}
	\Hk_L(\mathcal{W})=\begin{bmatrix}
		\Hk_L(\revise{\hat{w}}_1) & \Hk_L(\revise{\hat{w}}_2) & \cdots & \Hk_L(\revise{\hat{w}}_n)
	\end{bmatrix}.
\end{equation}
A trajectory $w$ is called \emph{persistently exciting} of order $L$ if $\rank\left(\Hk_L(\revise{\hat{w}})\right)=L\mathrm{w}$ \cite{Willems:2005}, and the trajectories in the set $\mathcal{W}$ are called \emph{collectively persistently exciting} of order $L$ if $\rank\left(\Hk_L(\mathcal{W})\right)=L\mathrm{w}$ \cite{vanWaarde:2020}. Obviously, the persistent excitation of \revise{every} trajectory in $\mathcal{W}$ implies collective persistent excitation, but the latter does not require any of its trajectories to be persistently exciting on its own. Since all subsequent developments in this paper apply to Hankel matrices and mosaic-Hankel matrices in the same way, we use them interchangeably and denote both of them as \revise{$\Hk_L(\mathcal{W})$ (for a  Hankel matrix, $\mathcal{W}$ contains only one trajectory).}

\begin{lem}[Fundamental Lemma \cite{Willems:2005,vanWaarde:2020}]\label{lem:behaviorpara}
	\revise{Let $\B\in\LTI{\mathrm{w}}_\mathrm{contr}$ and let $\mathcal{W}\subset\B_{|[1,T]}$ be a set of its trajectories. If $w$ admits an input/output partition and the input trajectories are (collectively) persistently exciting of order $L+\n{\B}$}, then $\cs\left(\Hk_L(\revise{\mathcal{W}})\right)=\B_{|[1,L]}$. In other words, for all $\hat{v}\in\B_\bint{1,L}$, there exists a vector $g\in\R^\bullet$ such that $\hat{v}=\Hk_L(\revise{\mathcal{W}})g$.
\end{lem}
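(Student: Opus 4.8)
The plan is to establish the two inclusions $\cs(\Hk_L(\mathcal{W})) \subseteq \B_\bint{1,L}$ and $\B_\bint{1,L} \subseteq \cs(\Hk_L(\mathcal{W}))$ separately, with essentially all of the work concentrated in the second. The first inclusion is immediate: each column of $\Hk_L(\mathcal{W})$ is the length-$L$ restriction of a shift of some $\hat{w}_i \in \mathcal{W} \subset \B_\bint{1,T}$, so by time invariance every column lies in $\B_\bint{1,L}$, and since $\B$ is a linear subspace, so does any linear combination of columns. Hence $\cs(\Hk_L(\mathcal{W})) \subseteq \B_\bint{1,L}$.

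For the reverse inclusion I would argue by a dimension count. Because the column span is already contained in $\B_\bint{1,L}$, it suffices to show $\rank(\Hk_L(\mathcal{W})) = \dim(\B_\bint{1,L})$. I would first recall the standard formula $\dim(\B_\bint{1,L}) = L\,\mathtt{m}(\B) + \n{\B}$, valid in the relevant regime $L \geq \lag{\B}$, which counts the $\mathtt{m}(\B)$ freely assignable input values at each of the $L$ steps together with the $\n{\B}$ degrees of freedom carried by the initial state. Writing $m = \mathtt{m}(\B)$ and $n = \n{\B}$, the task then reduces to the matching lower bound $\rank(\Hk_L(\mathcal{W})) \geq Lm + n$.

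To produce this lower bound I would pass to a minimal input/state/output representation of $\B$, with state dimension $n$ and a controllable pair. Partitioning every trajectory into $(u,y)$ according to the given input/output partition, I would collect the initial states associated with the columns into a matrix $X$ and form the stacked data matrix $\col(X, \Hk_L(u))$, where $\Hk_L(u)$ is built from the input part. The crux is the claim that $\col(X, \Hk_L(u))$ attains full row rank $n + Lm$: the block $\Hk_L(u)$ is of full row rank $Lm$ by the persistency-of-excitation hypothesis, and controllability is precisely what renders the $n$ additional state rows independent of the input rows. Granting this, the output rows of $\Hk_L(\mathcal{W})$ are a fixed linear image of $\col(X, \Hk_L(u))$ through the observability/Toeplitz structure of the representation, and since for $L \geq \lag{\B}$ the observability matrix has full column rank $n$, this image map is injective. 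Therefore $\rank(\Hk_L(\mathcal{W})) = n + Lm = \dim(\B_\bint{1,L})$, which together with the first inclusion closes the proof.

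I expect the full-row-rank claim for $\col(X, \Hk_L(u))$ to be the main obstacle, and it is exactly here that the persistency order $L + \n{\B}$, rather than merely $L$, is consumed: the extra $\n{\B}$ is what is needed to excite the state directions on top of the input window. In the mosaic-Hankel setting this step demands additional care, since individual trajectories in $\mathcal{W}$ need not be persistently exciting on their own; I would accordingly work with the concatenated state/input matrix assembled across all trajectory blocks and invoke the \emph{collective} persistency condition, verifying that the distinct initial states of the separate blocks do not degrade the rank furnished by controllability.
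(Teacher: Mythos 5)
First, a point of reference: the paper does not prove this lemma --- it is imported as the Fundamental Lemma with citations to \cite{Willems:2005} and \cite{vanWaarde:2020}, so there is no in-paper argument to compare yours against. Measured against the proofs in those references, your outline follows the standard route: the easy inclusion $\cs(\Hk_L(\mathcal{W}))\subseteq\B_\bint{1,L}$ from linearity and time invariance, a dimension count reducing the converse to $\rank(\Hk_L(\mathcal{W}))=L\,\mathtt{m}(\B)+\n{\B}$, and a factorization of the trajectory data through the stacked state/input matrix $\col(X,\Hk_L(u))$ via the observability and Toeplitz maps of a minimal input/state/output representation. That architecture is correct.

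The genuine gap is that the step you yourself flag as ``the crux'' --- that $\col(X,\Hk_L(u))$ attains full row rank $\n{\B}+L\,\mathtt{m}(\B)$ under controllability and (collective) persistency of excitation of order $L+\n{\B}$ --- is asserted rather than proved, and it is precisely the substantive content of the lemma; everything else in your outline is routine linear algebra. Saying that ``controllability is what renders the state rows independent of the input rows'' names the destination without supplying the road: the actual argument (in \cite{Willems:2005}) proceeds by contradiction, extending a hypothesized left annihilator of $\col(X,\Hk_L(u))$ to a left annihilator of the depth-$(L+\n{\B})$ input Hankel matrix using the Cayley--Hamilton theorem and the controllability of the pair $(A,B)$, which contradicts persistency of excitation; \cite{vanWaarde:2020} reworks this for mosaic-Hankel data, which is the part you defer to ``additional care.'' A secondary issue: your dimension formula $\dim\B_\bint{1,L}=L\,\mathtt{m}(\B)+\n{\B}$ and the injectivity of the image map both assume $L\geq\lag{\B}$, a hypothesis the lemma does not impose. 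For shorter windows the statement still holds, but $\n{\B}$ must be replaced by the dimension of the $L$-step observable subspace, and the rank deficiency of the image map exactly offsets the drop in $\dim\B_\bint{1,L}$; in the context of this paper the restriction is harmless, since every application takes $L>\lag{\B}$.
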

\begin{rem}
		An alternative condition to check whether $\cs\left(\Hk_L(\revise{\mathcal{W}})\right)$ parameterizes $\B_{|[1,L]}$ was developed in \cite{Markovsky:2020}. It states that if $L>\lag{\B}$, then $\cs\left(\Hk_L(\revise{\mathcal{W}})\right)=\B_{|[1,L]}$ if and only if $\rank\left(\Hk_L(\revise{\mathcal{W}})\right)=\mathtt{m}\left(\B\right)L+\n{\B}$. While it is a more relaxed condition than that given in Lemma \ref{lem:behaviorpara}, verifying it is rather difficult because it requires the exact values of $\lag{\B}$ and $\n{\B}$, which are generally not easy to obtain for interconnected systems (See Lemma \ref{lem:interconproperty}).
\end{rem}

Recently, there have been rapid developments on predictive control based on Lemma \ref{lem:behaviorpara}. In essence, the Hankel matrix $\Hk_L(\revise{\mathcal{W}})$ is partitioned into input and output Hankel matrices $\Hk_L(\revise{\mathcal{U}})$ and $\Hk_L(\revise{\mathcal{Y}})$ according to the input/output partition of $w$, \revise{where $\mathcal{U}$ and $\mathcal{Y}$ are defined similarly to $\mathcal{W}$ for $u$ and $y$, respectively}. Each of them is then further partitioned into ``past'' and ``future'', i.e., $\Hk_L(\revise{\mathcal{Y}})=\col(\Hk_{past}(\revise{\mathcal{Y}}),\Hk_{future}(\revise{\mathcal{Y}}))$ and similarly for $\Hk_L(\revise{\mathcal{U}})$, correspond to the measured historical trajectory and the to-be-predicted future trajectory. The controller then aims to find a $g$ such that \cite{Markovsky:2008}
\begin{equation}\label{eq:DPC}
    \begin{bmatrix}
        \hat{y}_{past}^*\\ \hat{u}_{past}^* \\ \tilde{\hat{y}}_{future}
    \end{bmatrix}=\begin{bmatrix}
        \Hk_{past}(\revise{\mathcal{Y}}) \\\Hk_{past}(\revise{\mathcal{U}}) \\ \Hk_{future}(\revise{\mathcal{Y}})
    \end{bmatrix}g,
\end{equation}
where $\hat{y}_{past}^*$ and $\hat{u}_{past}^*$ are historical input/output trajectories and $\tilde{\hat{y}}_{future}$ is the desired future output. Further developments based on this structure include the addition of regularization condition on $g$ \cite{Coulson:2019} and slack variables \cite{Berberich:2020a} to compensate for noise in data.
\subsection{Behavioral Interconnections}
As a framework based on external variables, interconnection of dynamical systems is a very natural operation in terms of behaviors. Let $\Sigma^1=(\T,\W^1_1\times\W^1_2,\B^1)$ and $\Sigma^2=(\T,\W^2_1\times\W^2_2,\B^2)$, with \revise{$\W^1_2=\W^2_1$}. Then, the (partial) interconnection of $\Sigma^1$ and $\Sigma^2$ through $w^1_2$ and $w^2_1$ has the overall behavior 
\begin{equation*}
	\B=\left\{(w^1,w^2)\mid (w^1_1,w^1_2)\in\B^1, (w^2_1,w^2_2)\in\B^2, w^1_2=w^2_1\right\}.
\end{equation*}
In other words, interconnection is (a portion of) the variables in each subsystem \emph{sharing} the same trajectories \cite{Willems:1997,Trentelman:1999a}. If $\W^1=\W^2$, then the interconnection is called \emph{full} interconnection. In such a case, the interconnected system is denoted as $\Sigma=\Sigma^1\wedge\Sigma^2$ with behavior $\B=\B^1\cap\B^2$. All partial interconnections can be augmented into full interconnections by treating variables from other subsystems that are not shared as free variables \cite{Willems:1997}. If, on the other hand, $\W^1$ and $\W^2$ are completely different signal spaces (i.e., physically isolated), they can still be seen as ``interconnected''. Such an ``interconnection'' is denoted as $\Sigma=\Sigma^1\sqcap\Sigma^2$ with behavior $\B=\B^1\times\B^2$. Interconnections can also be viewed as isolated subsystems ``plugged'' into the network. By viewing the network interconnection as a dynamical system $\Sigma^\Pi=\revise{(\mathbb{T}, \bigtimes_{i=1}^{N}\mathbb{W}^i,\B^\Pi)}$ with its \emph{own behavior} \cite{Yan:2021}, an interconnected system can be constructed systematically as
\begin{equation}\label{eq:systemintercon}
	\Sigma=\left(\bigsqcap_{i=1}^N\Sigma^i\right)\wedge\Sigma^\Pi=\left(\T,\bigtimes_{i=1}^N\W^i,\B\right),
\end{equation}
in which $\bigsqcap_{i=1}^N\Sigma^i=\Sigma^1\sqcap\Sigma^2\sqcap\cdots\sqcap\Sigma^N$ and $\bigtimes_{i=1}^N\B^i=\B^1\times\B^2\times\cdots\times\B^N$. The interconnected behavior is then
\begin{equation}\label{eq:intercon}
	\B=\left(\bigtimes_{i=1}^{N}\B^i\right)\cap\B^\Pi.
\end{equation}
\revise{In data-driven analysis and control, trajectories measured for each subsystem are always under interconnection and as such local trajectories are from the set $\proj{w^i}{\B}$. Therefore, the behavior $\B^i$ is not available to construct $\B$ through \eqref{eq:intercon}}. Identities used in this paper concerning $\proj{w^i}{\B}$ are summarized in the following lemma.
\begin{lem}[\cite{Yan:2021}]\label{lem:interconbehaviourproj}
	Let $\Sigma$ be of the form \eqref{eq:systemintercon}, then (i) $ \proj{w^i}{\B}\subset\B^i$; (ii) $ \B=\left(\bigtimes_{i=1}^{N}\proj{w^i}{\B}\right)\cap\B^\Pi$.
\end{lem}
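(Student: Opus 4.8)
The plan is to derive both parts directly from the definition of the projection operator $\pi$ and the interconnected behavior \eqref{eq:intercon}, establishing (i) first and then using it as the essential ingredient for the nontrivial inclusion in (ii).

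First I would prove (i). Fix a subsystem index $i$ and take any $w^i\in\proj{w^i}{\B}$. By the definition of $\pi$ (applied to the block of coordinates $w^i$), there exist trajectories $w^j$ for all $j\neq i$ such that the collective trajectory $w=(w^1,\ldots,w^N)$ belongs to $\B$. Since \eqref{eq:intercon} gives $\B=\left(\bigtimes_{j=1}^{N}\B^j\right)\cap\B^\Pi\subset\bigtimes_{j=1}^{N}\B^j$, membership $w\in\B$ forces $w^j\in\B^j$ for every $j$, in particular $w^i\in\B^i$. Hence $\proj{w^i}{\B}\subset\B^i$.

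For (ii) I would verify the two inclusions separately. The forward inclusion is immediate: if $w=(w^1,\ldots,w^N)\in\B$, then each $w^i\in\proj{w^i}{\B}$ with the remaining components of $w$ serving as witnesses, so $w\in\bigtimes_{i=1}^{N}\proj{w^i}{\B}$, and $w\in\B^\Pi$ by \eqref{eq:intercon}; hence $w$ lies in the right-hand set. For the reverse inclusion, take $w\in\left(\bigtimes_{i=1}^{N}\proj{w^i}{\B}\right)\cap\B^\Pi$. Then $w^i\in\proj{w^i}{\B}$ for each $i$, and part (i) upgrades this to $w^i\in\B^i$, so $w\in\bigtimes_{i=1}^{N}\B^i$. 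Combined with $w\in\B^\Pi$, \eqref{eq:intercon} yields $w\in\B$.

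The whole argument is a matter of unwinding definitions, so I do not expect a genuine obstacle; the one point deserving care is the reverse inclusion in (ii). Here it is not enough to know that each $w^i$ is \emph{some} admissible local trajectory — the projection $\proj{w^i}{\B}$ is in general strictly smaller than $\B^i$ because interconnection restricts the trajectories, and one might worry that projecting away the other subsystems discards the coupling needed to rebuild $\B$. The resolution is that part (i) shows the projections still sit inside the local behaviors $\B^i$, so their product lands in $\bigtimes_{i=1}^{N}\B^i$, while the coupling is re-imposed entirely by the separate factor $\B^\Pi$. Recognizing that part (i) carries all the weight of this step is the structural observation that makes the proof transparent.
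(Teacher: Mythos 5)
Your proof is correct; the paper does not reproduce a proof of this lemma (it is cited from \cite{Yan:2021}), but your argument is the standard set-theoretic unwinding of the definitions of $\pi$ and \eqref{eq:intercon}, and in particular your observation that part (i) is what makes the reverse inclusion in (ii) go through is exactly the right structural point.
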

It is easy to see that the first statement still holds for behaviors restricted to a finite number of steps, i.e., $\proj{w^i}{\B}_\bint{1,T}\subset\B^i_\bint{1,T}$. However, the second statement does not hold for finite-length behavior in general. This point will be discussed in details in Section \ref{sec:interconbehavior}. 
\subsection{Dissipativity}
\revise{Originally defined on a system with state-space representations, a system $x(k+1)=f(x(k),u(k))$, $y(k)=h(x(k),u(k))$ is dissipative with respect to a \emph{supply rate} $s(y,u)$ if there exists a \emph{storage function} $V(x)\geq0$ such that 
	\begin{equation}\label{eq:dissconventional}
		V(x(k+1))-V(x(k))\leq s(y(k),u(k))
	\end{equation}
for all trajectories in the system \cite{Willems:1972}.} The supply rate $s(y,u)$ can admit specific forms to describe various characteristics of the system. For example, the $(Q,S,R)$-type supply rate $s(y,u)=y^\top Qy+2y^\top Su+u^\top Ru$ can be used to describe the truncated $\pazocal{L}_2$ gain from $u$ to $y$ by specifying $Q=-I$, $S=0$ and $R=\gamma^2 I$. As such, input/output stability can be represented using dissipativity.
	
\revise{The state information is often unavailable in data-driven control and the storage function can be defined on the trajectories of the manifest variable \cite{Willems:1998}.} 
In discrete-time systems, the quadratic functional on the trajectories of $w$ is called the quadratic difference form (QdF). QdFs can be defined either on the future \cite{Kojima:2005} or the history \cite{Wang:2019}. In the context of data-driven control, we adopt the latter and define the QdF as
\begin{equation}
	\begin{split}
		Q_{\Phi}(w)\revise{(k)}&=\sum_{p=0}^{K_\Phi}\sum_{q=0}^{K_\Phi}w^\top\revise{(k-p)}\Phi_{pq}w\revise{(k-q)}\\
		&=\hat{w}_\bint{k-K_\Phi,k}^\top \widetilde{\Phi}\hat{w}_\bint{k-K_\Phi,k},
	\end{split}
\end{equation}
where $K_\Phi$ is the order of the QdF and the symmetric matrix $\widetilde{\Phi}=\widetilde{\Phi}^\top=\begin{bmatrix}
	\Phi_{K_\Phi K_\Phi} & \cdots & \Phi_{K_\Phi0}\\
	\vdots & \ddots & \vdots\\
	\Phi_{0K_\Phi} & \cdots & \Phi_{00}
\end{bmatrix}$ is the coefficient matrix. \revise{Both the supply rate $s(w)$ and storage function $V(w)$ can be represented in a QdF form, as functions of trajectories \cite{Willems:1998,Willems:2002,Trentelman:2002,Kojima:2005,Wang:2019}, i.e.,  $s(w)(k)=Q_{\Phi}(w)(k)$ and $V(w)(k)=Q_\Psi(w)(k)$.} QdF-type supply rates cover a wide range of performance specifications such as frequency-weighted disturbance attenuation \cite{Tippett:2013,Tippett:2014,Yan:2019}, filtering \cite{Trentelman:2002} and finite spectrum fault diagnosis \cite{Liu:2005,Li:2020}.

While dissipativity is originally defined as a property of a dynamical system, viewing it as a behavior in its own right provides much more insights into the dynamics it represents \cite{Willems:2007a}. Detailed treatments of dissipativity and QdFs in the behavioral context are presented in Section \ref{sec:dissipativity}.
\subsection{Problem Formulation}
In this paper we consider the control of an interconnected system $\Sigma_p$ in the form of \eqref{eq:systemintercon}. \revise{The interconnected system consists of $N_p$ linear and time-invariant subsystems $\Sigma_p^i$ ($i=1,\cdots, N_p$) and a network interconnection, which represents how the manifest variables of the subsystems are connected to each other.} The behavior of the \revise{network interconnection} $\Sigma_p^\Pi$ can be represented as $\B_p^\Pi=\revise{\ker(\Pi_p(\sigma))}$ where $\Pi_p(\sigma)$ is a \revise{polynomial} matrix. \revise{The representation for $\B_{p\bint{1,L}}^\Pi$ is $\widetilde{\Pi}_p\hat{w}_{p\bint{1,L}}=0$, where $\widetilde{\Pi}_p$ is the extended coefficient matrix of $\Pi_p(\sigma)$ whose construction is readily available in the literature (e.g., \cite{Yan:2019}).} This representation can be viewed as a generalized version of the description of topology of the network (see \cite{Tippett:2014}, for example) without explicit input/output partitions. Furthermore, \revise{it allows for the inclusion of dynamics in the network interconnection \emph{itself} such as transport delays among subsystems (e.g., mass and energy flows among process units), which is common in modern chemical plants. This view also facilitates the dissipative behavior synthesis, as detailed in Section \ref{sec:dissintercon}. For a network interconnection without delays, $\Pi_p$ reduces to a constant matrix and} $\lag{\B_p^\Pi}=0$.

The layout of the controlled system is depicted in Fig. \ref{fig:fig-plantwide}, in which $N_c$ to-be-designed distributed controllers $\Sigma_c^j$ are connected through the network interconnection $\Sigma_c^\Pi$ with behavior $\B_c^\Pi=\revise{\ker(\Pi_c(\sigma))}$. The network interconnection $\Sigma_{pc}^\Pi$ connecting the networked systems $\Sigma_p$ and controllers $\Sigma_c$ can be described by the representation
\begin{equation}\label{eq:systemcontrollernetwork}
	\Pi_{\revise{ip}}\revise{(\sigma)}w_p=\Pi_{\revise{ic}}\revise{(\sigma)}w_c.
\end{equation}
Note that we do not require $N_p=N_c$ or any specific structure on $\Pi_{\revise{ip}}$ and $\Pi_{\revise{ic}}$ in the setup. In other words, the proposed structure allows the possibilities for one controller to control multiple subsystems, several controllers to control one subsystem, or several controllers to control several subsystems cooperatively. The behavior of the controlled interconnected system with these components can then be constructed as
\begin{equation}
	\B_{pc}=\proj{w_p}{\B_{pc}^{full}}, \ \B_{pc}^{full}=(\B_p\times\B_c)\cap\B_{pc}^\Pi.
\end{equation}
In such a case, we say that the controlled behavior $\B_{pc}$ is \emph{implementable} by the controllers $\{\Sigma_c^j\}_{j=1}^{N_c}$ with $\Sigma_c^\Pi$ through $\Sigma_{pc}^\Pi$ \cite{Yan:2021}.
\begin{figure}
	\centering
	\includegraphics[width=0.5\linewidth]{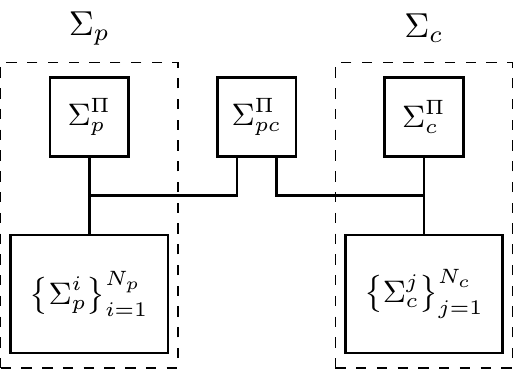}
	\caption{Structure of the Controlled System}
	\label{fig:fig-plantwide}
\end{figure}

\begin{exmp}\label{exmp:layout}
	\begin{figure}
		\centering
		\includegraphics[width=0.7\linewidth]{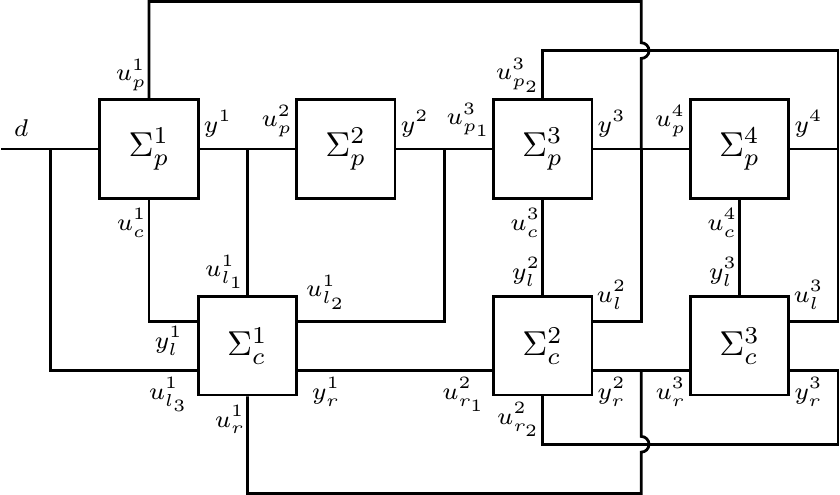}                                             
		\caption{Illustrative Diagram of an Interconnected System with Distributed Controllers}
		\label{fig:intercon}
	\end{figure}
	Consider an interconnected system depicted in Fig. \ref{fig:intercon}. It includes 4 process units (subsystems) $\{\Sigma_p^i\}_{i=1}^4$ interconnected through physical mass and energy flows which are controlled by a network of controllers $\{\Sigma_c^j\}_{j=1}^3$. For the clarity of illustration, we still adopt the conventional notations of disturbance input ($d$), interconnecting input ($u_p$), control input ($u_c$), system output ($y$), controller local input ($u_l$), remote input ($u_r$), local output ($y_l$) and remote output ($y_r$). \revise{Furthermore, we denote $u_p^3=\col(u_{p_1}^3,u_{p_3}^2)$, and similarly for $u_l^1$ and $u_r^2$ and $u_p^4$. Denote the variable of the first subsystem as $w_p^1=\col(y^1,u_p^1,u_c^1,d)$ and that of the other ones as $w_p^i=\col(y^i,u_p^i,u_c^i)$. Similarly, denote the variable of each distributed controller as $w_c^j=\col(y_l^j,y_r^j,u_l^j,u_r^j)$. Lastly, we denote $w_p=\col\{w_p^i\}_{i=1}^4$ and $w_c=\col\{w_c^j\}_{j=1}^3$.}
	
	We emphasize that, \revise{while the conventional naming of variables ($u$ and $y$) is adopted to illustrate the system and controller networks in Fig. \ref{fig:intercon}}, there is no assumed causality among the variables nor a pre-assigned ordering of subsystems. In fact, for complex interconnections, the actual direction of flow is often unclear. For instance, in a chemical process (in which context such an interconnected system is referred to as a plantwide system), the direction of physical mass/energy flows is determined by the physics in each subsystem as well as their interactions, and hence can change during operation. In this behavior-based framework, these interactions are treated as sharing of variables rather than a pre-defined set of variables of ``inputs'' and  ``outputs''. 
	
	According to Fig. \ref{fig:intercon}, we can construct the coefficient matrix $\Pi_p$, using the interconnections
	$u_p^1=y^3, \ u_p^2=y^1, \ u_p^3=\col(y^2,y^4), \ u_p^4=y^3$, as
	\begin{equation*}\label{eq:systemnetworkcoeff}
	\setlength{\arraycolsep}{3pt}
			    \Pi_p=\begin{bmatrix}
				0 & I & 0 & 0 & 0 & 0 & -I & 0 & 0 & 0 & 0 & 0 \\
				-I & 0 & 0 & 0 & 0 & I & 0 & 0 & 0 & 0 & 0 & 0 \\
				0 & 0 & 0 & 0 & \begin{pmatrix}
					-I\\0
				\end{pmatrix} & 0 & 0 & I & 0 & \begin{pmatrix}
				0\\-I
			\end{pmatrix} & 0 & 0\\
				0 & 0 & 0 & 0 & 0 & 0& -I & 0 & 0 & 0 & I & 0
			\end{bmatrix}.
			\end{equation*}
	Similarly, matrices $\Pi_{\revise{ip}}$ and $\Pi_{\revise{ic}}$ in \eqref{eq:systemcontrollernetwork} can be constructed using the system-controller interconnections
\begin{equation*}
	\begin{split}
		&u_l^1=\col(y^1, y^2, d), \ y_l^1=u_c^1,\\
	&u_l^2=y^3, \ y_l^2=u_c^3, \ u_l^3=y^4, \ y_l^3=u_c^4.
	\end{split}
\end{equation*}
Assuming that each controller shares its measured local input with its interconnected controller, except for $\Sigma_c^1$, which only shares $u_{l_2}^1$, then the network interconnection of the controllers has the behavior
\begin{equation*}
	\begin{split}
		& u_r^1=y_r^2, \ u_r^2=\col(y_r^1,y_r^3), \ u_r^3=y_r^2,\\
		& y_r^1=\begin{bmatrix}
			0 & I
		\end{bmatrix}u_l^1, \ y_r^2=u_l^2, \ y_r^3=u_l^3,
	\end{split}
\end{equation*}
through which $\Pi_c$ can be constructed similar to $\Pi_p$.
 \END
\end{exmp}

The main goal in this paper is to design the distributed controllers such that, in addition to the minimization of a global cost function, the controlled interconnected behavior is rendered dissipative with respect to a desired supply rate specified as a QdF, i.e., $s_d(w_p)=Q_{\Phi_d}(w_p)$. The problem to be solved is formulated as follows.
\revise{\begin{prob}\label{prob:problem}
    As depicted in Fig. 1, given an interconnected system $\Sigma_p$, whose subsystems $\Sigma_p^i\in\LTI{\mathrm{w}_\mathrm{p}^i}_\mathrm{contr}$, with locally measured data sets $\mathcal{W}_p =\{\mathcal{W}_p^i\}^{N_p}_{i=1}$, where $\mathcal{W}^i_p$ is the set of trajectories collected from the $i$th subsystem with manipulated variable and exogenous variable trajectories (e.g., $u_c^i$ and $d$ in Fig. \ref{fig:intercon}) persistently exciting, the controller network interconnection $\Sigma_c^\Pi$ connecting $N_c$ distributed controllers $\{\Sigma_c^j\}_{j=1}^{N_c}$, and the system-controller network interconnection $\Sigma_{pc}^\Pi$, find a set of local conditions for each distributed controller such that, when implementing these conditions under the network interconnections, the controlled interconnected system behavior
			\begin{enumerate}[1.]
				\item is dissipative with respect to a desired supply rate $s_d(w_p)$ that implies the desired global control performance (e.g., stability and disturbance rejection);
				\item admits $w_f$ as a free variable, where $w_{f}$ contains a subset of the elements in $w_p$ (e.g., $d$ in Fig. \ref{fig:intercon}).
			\end{enumerate}
			Furthermore, design a DDPC procedure that ensures the dissipativity of the interconnected system while minimizing the network cost function 
			\begin{equation}\label{eq:cost}
				J\left(\hat{w}_{c\bint{k,k+L^+}}\right)=\sum_{j=1}^{N_c}\sum_{m=0}^{L^+}{\revise{J^j}}\left(w_{c}^j(k+m)\right).
			\end{equation}
\end{prob}}

The first goal concerns the performance requirement of the control algorithm specified by a QdF supply rate. The second goal states that the chosen variables $w_f$ must remain free after control. Depending on the context, $w_f$ may contain variables such as exogenous disturbances (in disturbance attenuation or filtering) or fault signals (in fault diagnosis). The necessity of this requirement stems from the very reason why the behavioral framework is effective in handling complex interconnections: the absence of the assumption of causality. Specifically, while viewing all variables equally converts a signal flow problem into a variable sharing problem, resulting in a simple plug-and-play structure \eqref{eq:systemintercon}, such balanced view of variables does not prescribe $w_f$ as the input, so it is possible that $w_f$ becomes restricted in the controlled behavior. We therefore need the last requirement to ensure the freedom of $w_f$ after interconnecting with the controller network.
\section{Interconnection of Finite-Length Behaviors}\label{sec:interconbehavior}
While the relationship in Lemma \ref{lem:interconbehaviourproj}(ii) gives the method to construct the interconnected behavior from locally measured data under interconnection, the situation becomes less obvious when all data trajectories are of finite length. In particular, since the set of measured $L$-step trajectories for the interconnected system $\Sigma$ is from the set $\B_\bint{1,L}$, the locally measured trajectories are from the set $\proj{w^i}{\B_\bint{1,L}}$. Furthermore, while it is generally possible to get the exact representation for $\B^\Pi$, the fact that all subsystems are with finite-step trajectories means that the final intersection is with two finite-length behaviors. However, it can be shown (see Propositions \ref{prop:truncintersect} and \ref{prop:truncproj}) that the operation of restriction to finite-length behavior does \emph{not} commute with the intersection nor the projection operations, meaning that the combination of $L$-step local behaviors measured under interconnection and the network interconnection behavior does \emph{not} recover the $L$-step interconnected behavior in general. Fortunately, it is possible to do so for systems with time-invariant subsystems and network interconnections, provided that the trajectories are long enough. In order to show this, we firstly give upper bounds for the lag and state cardinality of the interconnected system constructed from those of its subsystems and network interconnection.
\begin{lem}\label{lem:interconproperty}
	Given an interconnected system \eqref{eq:systemintercon} with behavior \eqref{eq:intercon}, we have
	\begin{enumerate}[(i)]
		\item $\lag{\B}\leq\max\left\{\lag{\B^\Pi},\left\{\lag{\B^i}\right\}_{i=1}^N\right\}$;
		\item $\n{\B}\leq\n{\B^\Pi}+\sum_{i=1}^{N}\n{\B^i}$.
	\end{enumerate}
\end{lem}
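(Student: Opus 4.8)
The plan is to treat the two statements separately, in each case exploiting that $\B=\left(\bigtimes_{i=1}^N\B^i\right)\cap\B^\Pi$ is assembled from products and intersections, so that the stated bounds should follow by combining representations (or membership checks) of the components. For (i), I would argue directly from the defining property of the lag. Set $\ell^*=\max\left\{\lag{\B^\Pi},\left\{\lag{\B^i}\right\}_{i=1}^N\right\}$ and show that $\ell^*$ is a valid window length for $\B$, whence $\lag{\B}\leq\ell^*$ by minimality. First I would assume $w_{\bint{k,k+\ell^*}}\in\B_{\bint{k,k+\ell^*}}$ for all $k\in\T$. Since $\B\subset\bigtimes_{i=1}^N\B^i$ and $\B\subset\B^\Pi$, and since restriction to a window preserves inclusion and commutes with the product, this yields $w^i_{\bint{k,k+\ell^*}}\in\B^i_{\bint{k,k+\ell^*}}$ for every $i$ and $w_{\bint{k,k+\ell^*}}\in\B^\Pi_{\bint{k,k+\ell^*}}$, for all $k$. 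Restricting each of these windowed trajectories further to a subwindow of length $\lag{\B^i}$ (respectively $\lag{\B^\Pi}$), which is permissible because $\ell^*$ dominates every component lag, and invoking the definition of the lag for each component behavior, I conclude $w^i\in\B^i$ for all $i$ and $w\in\B^\Pi$. Hence $w\in\left(\bigtimes_{i=1}^N\B^i\right)\cap\B^\Pi=\B$, so $\ell^*$ is admissible and $\lag{\B}\leq\ell^*$.

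For (ii), the idea is to construct a (generally non-minimal) latent-variable representation of $\B$ by stacking minimal state representations of the components and then to invoke the minimality built into the definition of $\n{\B}$. Let $(\B^i)^{full}$ be a minimal state representation of $\B^i$ with state $x^i$ of dimension $\n{\B^i}$, and let $(\B^\Pi)^{full}$ be one of $\B^\Pi$ with state $x^\Pi$ of dimension $\n{\B^\Pi}$. I would define a full behavior $\B^{full}$ on the manifest variable $w=(w^1,\ldots,w^N)$ with the stacked latent variable $x=\col(x^1,\ldots,x^N,x^\Pi)$ by requiring $(w^i,x^i)\in(\B^i)^{full}$ for each $i$ together with $(w,x^\Pi)\in(\B^\Pi)^{full}$. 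Projecting out $x$ recovers exactly $\B=\left(\bigtimes_{i=1}^N\B^i\right)\cap\B^\Pi$, so this is a valid representation of $\B$, and its latent dimension is $\sum_{i=1}^N\n{\B^i}+\n{\B^\Pi}$.

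The crux, and the step I expect to require the most care, is verifying that $x$ actually has the \emph{property of state} for $\B^{full}$, since only then does its dimension upper-bound $\n{\B}$. For this I would take two trajectories $(w,x),(w',x')\in\B^{full}$ with $x(k)=x'(k)$; componentwise this means $x^i(k)=x'^i(k)$ for all $i$ and $x^\Pi(k)=x'^\Pi(k)$. The property of state of each $\B^i$ then places the concatenation of $(w^i,x^i)$ and $(w'^i,x'^i)$ at step $k$ in $(\B^i)^{full}$, and likewise the property of state of $\B^\Pi$ places the concatenation of $(w,x^\Pi)$ and $(w',x'^\Pi)$ in $(\B^\Pi)^{full}$. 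Because these component concatenations agree on the shared manifest variable $w$, the concatenated trajectory lies in $\B^{full}$, so $x$ has the property of state. Since $\n{\B}$ is by definition the least number of state variables over all such representations, this gives $\n{\B}\leq\sum_{i=1}^N\n{\B^i}+\n{\B^\Pi}$, completing the proof.
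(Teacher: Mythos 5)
The paper omits this proof entirely (``straightforward and therefore omitted''), so there is no authorial argument to compare against; your proposal supplies the natural one and it is correct. Both halves are sound --- the restriction/sub-window argument showing $\ell^*=\max\left\{\lag{\B^\Pi},\left\{\lag{\B^i}\right\}_{i=1}^N\right\}$ is an admissible memory span for $\B$, and the stacked-state construction with the componentwise property-of-state verification for (ii) --- and together they are exactly what the omitted ``straightforward'' proof was standing in for.
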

The proof of this lemma is straightforward and therefore omitted. Note that the actual lag and state cardinality of the interconnected system $\Sigma$ depends on the subsystems and the network interconnection, so only the upper bounds of them can be computed explicitly. We are now ready to give a bound on the length of the trajectories required for the subsystems such that the way to construct the interconnected behavior, as given in Lemma \ref{lem:interconbehaviourproj}(ii), applies to finite-length behaviors as well.

\begin{thm}\label{thm:behaviorPWinvariant}
	Let $\Sigma$ be an interconnected system of the form \eqref{eq:systemintercon}, in which the subsystems are with lags $\lag{\B^i}$ and the network interconnection is with lag $\lag{\B^\Pi}$. If the number of steps $L$ satisfies $L>\max\left\{\lag{\B^\Pi},\left\{\lag{\B^i}\right\}_{i=1}^N\right\}$, then the $L$-step local behaviors \emph{under interconnection}, together with the network interconnection behavior, recover the $L$-step interconnected behavior $\B_\bint{1,L}$, i.e.,
	\begin{equation}\label{eq:interconinvariant}
		\left(\bigtimes_{i=1}^N\proj{w^i}{\B_\bint{1,L}}\right)\cap\B^\Pi_\bint{1,L}=\B_\bint{1,L}.
	\end{equation}
\end{thm}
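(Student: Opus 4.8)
The plan is to prove the two set inclusions separately, with only one of them requiring the length bound. For the inclusion $\B_\bint{1,L}\subseteq\left(\bigtimes_{i=1}^N\proj{w^i}{\B_\bint{1,L}}\right)\cap\B^\Pi_\bint{1,L}$, I would argue directly from the definitions and note that it holds for every $L$: any $\hat w\in\B_\bint{1,L}$ is the restriction of some $\bar w\in\B$, and since $\B\subseteq\B^\Pi$ by \eqref{eq:intercon}, its restriction lies in $\B^\Pi_\bint{1,L}$; moreover each component $w^i$ lies in $\proj{w^i}{\B_\bint{1,L}}$ by the very definition of the projection. The substance of the theorem is therefore the reverse inclusion, which is exactly the assertion that restriction to $[1,L]$ commutes with the interconnection \eqref{eq:intercon} once $L$ is large enough; this is precisely what fails for short horizons (cf. Propositions \ref{prop:truncintersect} and \ref{prop:truncproj}).

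The key tool I would isolate first is a finite-horizon completeness statement: for an LTI behavior $\mathcal{G}=\ker(G(\sigma))$ of lag $\lag{\mathcal{G}}$ and any $L>\lag{\mathcal{G}}$, a finite trajectory belongs to $\mathcal{G}_\bint{1,L}$ if and only if it satisfies the difference equations $G(\sigma)\hat w(k)=0$ on every window fitting inside $[1,L]$, i.e.\ if and only if it lies in the kernel of the associated extended coefficient matrix $\widetilde G$. One direction is immediate (a restriction of a trajectory of $\mathcal{G}$ satisfies every equation whose window is contained in $[1,L]$); the converse is the genuine content, and it is here that the length bound is indispensable, since extendability of an equation-satisfying finite trajectory is exactly the completeness property encoded by the lag (using a row-reduced representation the forward extension is always solvable). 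Combining this with Lemma \ref{lem:interconproperty}(i), which yields $\lag{\B}\le\max\left\{\lag{\B^\Pi},\left\{\lag{\B^i}\right\}_{i=1}^N\right\}<L$, gives the finite-horizon description $\B_\bint{1,L}=\ker(\widetilde R)$ for the kernel representation $R=\col(R^1,\dots,R^N,\Pi)$ of $\B$ obtained from \eqref{eq:intercon} by stacking kernel representations of the $\B^i$ (each padded to act on the full variable) with that of $\B^\Pi$.

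With this in hand, the reverse inclusion reduces to an equation count. Take $\hat w$ in the left-hand side. From $\hat w\in\B^\Pi_\bint{1,L}$ it satisfies every fitting $\Pi$-equation (no length condition is needed here, as $\hat w$ is literally a restriction of a $\B^\Pi$-trajectory). From $w^i\in\proj{w^i}{\B_\bint{1,L}}$ there is a full trajectory in $\B_\bint{1,L}$ whose $i$th component is $w^i$; extending it into $\B\subseteq\bigtimes_j\B^j$ exhibits $w^i$ as the restriction of a trajectory of $\B^i=\ker(R^i)$, so $w^i$ satisfies every fitting $R^i$-equation. Since the windows fitting in $[1,L]$ for a given row depend only on that row's degree and on $L$, and not on which other rows are stacked alongside it, the fitting equations of $R$ are exactly the union of those of the $R^i$ and of $\Pi$; hence $\hat w\in\ker(\widetilde R)$. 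The completeness statement, valid because $L>\lag{\B}$, then delivers $\hat w\in\B_\bint{1,L}$.

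The main obstacle, and the place where the hypothesis $L>\max\left\{\lag{\B^\Pi},\left\{\lag{\B^i}\right\}_{i=1}^N\right\}$ does all the work, is the converse half of the completeness statement: that a finite trajectory satisfying all fitting difference equations actually extends to an element of the infinite-horizon behavior. Its failure for short horizons is exactly what makes restriction fail to commute with intersection and projection; once $L$ exceeds every subsystem and interconnection lag, Lemma \ref{lem:interconproperty}(i) forces $L>\lag{\B}$ and the extension is guaranteed. I would also record, as a small point used implicitly above, that projection onto a subsystem's variables commutes with restriction in time (they act on independent indices), so that $\proj{w^i}{\B_\bint{1,L}}=\left(\proj{w^i}{\B}\right)_\bint{1,L}$; this identifies \eqref{eq:interconinvariant} as the genuine finite-horizon counterpart of Lemma \ref{lem:interconbehaviourproj}(ii).
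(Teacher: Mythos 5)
Your decomposition into two inclusions is sound, and your direct argument for $\B_\bint{1,L}\subseteq\left(\bigtimes_{i=1}^N\proj{w^i}{\B_\bint{1,L}}\right)\cap\B^\Pi_\bint{1,L}$ is correct (and more economical than the paper's, which reaches the same inclusion by chaining Lemma \ref{lem:interconbehaviourproj}(ii) with Propositions \ref{prop:truncintersect}--\ref{prop:truncproj}). The gap is in the reverse inclusion, specifically in the ``finite-horizon completeness statement'' you isolate as the key tool. As you state it --- for $\mathcal{G}=\ker(G(\sigma))$ and any $L>\lag{\mathcal{G}}$, membership in $\mathcal{G}_\bint{1,L}$ is equivalent to satisfying all fitting $G$-equations --- it is false for non-minimal representations: take $G(\sigma)=\col(\sigma,\sigma-1)$ on a scalar variable, so that $\mathcal{G}=\{0\}$ and $\lag{\mathcal{G}}=0$; for $L=1>\lag{\mathcal{G}}$ no equation fits inside $[1,1]$, so the kernel of the extended coefficient matrix is all of $\R$ while $\mathcal{G}_\bint{1,1}=\{0\}$. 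The representation you apply the lemma to, $R=\col(R^1,\dots,R^N,\Pi)$, is exactly of this non-minimal, non-row-reduced type, and $\lag{\B}$ can be strictly smaller than its row degrees. Your justification of the extension step (``using a row-reduced representation the forward extension is always solvable'') therefore refers to a \emph{different} representation of $\B$ than the one whose fitting equations you verified; the two extended coefficient matrices need not have the same kernel for a given $L$, and showing that they do under the actual hypothesis $L>\max\left\{\lag{\B^\Pi},\left\{\lag{\B^i}\right\}_{i=1}^N\right\}$ (which does exceed every row degree of the stacked $R$ when the $R^i$ and $\Pi$ are chosen minimal and row-reduced) is precisely the step you cannot skip.

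The fix is available but collapses onto the paper's argument: for the stacked representation, $\ker(\widetilde R)=\left(\bigcap_i\ker(\widetilde R^i)\right)\cap\ker(\widetilde\Pi)$; with each block minimal and row-reduced and $L$ exceeding each block's lag, each block's finite-horizon kernel equals the corresponding restricted behavior, so what remains to prove is exactly $\left(\bigtimes_i\B^i_\bint{1,L}\right)\cap\B^\Pi_\bint{1,L}=\left(\left(\bigtimes_i\B^i\right)\cap\B^\Pi\right)_\bint{1,L}$ --- the statement that restriction commutes with intersection once $L$ exceeds all component lags, i.e.\ the paper's Proposition \ref{prop:truncintersect}, which is proved there via unique extendability (Lemmas \ref{lem:truncextension} and \ref{lem:truncagree}). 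Your equation-counting route does not circumvent that proposition; it repackages it, and as written leaves its hard half unproved. (A side remark: your claim that projection commutes with restriction in time is actually correct as an equality, even though the paper's Proposition \ref{prop:truncproj} only records one inclusion; this part is not where the difficulty lies.)
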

\begin{proof}
	See Appendix \ref{appx:proofthm:behaviorPWinvariant}.
\end{proof}
\begin{rem}\label{rem:lagofintercon}
	The lower bound given in Theorem \ref{thm:behaviorPWinvariant} is not necessarily the shortest length required because the lag of the interconnected system may be smaller than the largest lag among all subsystems and the network interconnection, as stated in Lemma \ref{lem:interconproperty}(i). However, as previously discussed, the lag of the interconnected system depends on its components and their interactions, which are particularly hard to analyze in the data-driven context because all trajectories are measured under interconnections and the complete local behaviors are unavailable. The lower bound of $L$ given in this theorem, on the other hand, is a definitive bound regardless of the lag of the interconnected system.
\end{rem}

The ability to construct the interconnected behavior from the projections and the network interconnection for infinite-length behaviors has been illustrated in our previous work \cite{Yan:2021}. Here we see in Theorem \ref{thm:behaviorPWinvariant} that it is also true for finite-length behaviors if all subsystems and the network interconnection are time-invariant and the number of steps of the trajectories is longer than the largest lags among all of them. Combining Theorem \ref{thm:behaviorPWinvariant} with Lemma \ref{lem:weaving}, we immediately have the following corollary.
\begin{cor}
	Let $\Sigma$ be an interconnected system of the form \eqref{eq:systemintercon}, in which every $\Sigma^i$ is time-invariant and $\B^\Pi=\revise{\ker(\Pi(\sigma))}$. Let $\hat{w}_1^i,\hat{w}_2^i$ be two trajectories of $\B^i_\bint{1,L}$ measured under interconnection. If $\hat{w}_1,\hat{w}_2\in\B^\Pi_\bint{1,L}$ and $\hat{w}_{1\bint{L-l+1,L}}=\hat{w}_{2\bint{1,l}}$, with $l>\max\left\{\lag{\B^\Pi},\left\{\lag{\B^i}\right\}_{i=1}^N\right\}$, then
	\begin{equation}
		\hat{w}\coloneqq\col\left\{\col\left(\hat{w}_1^i,\hat{w}^i_{2\bint{l+1,L}}\right)\right\}_{i=1}^N\in\B_\bint{1,2L-l}.
	\end{equation}
\end{cor}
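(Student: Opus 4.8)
The plan is to lift the two local data tuples to genuine trajectories of the full interconnected behavior \emph{first}, using Theorem \ref{thm:behaviorPWinvariant} at horizon $L$, and only then to weave them at the level of $\Sigma$ using Lemma \ref{lem:weaving}. The reason for this ordering is exactly the non-commutativity of horizon restriction with the product and intersection operations (cf. Propositions \ref{prop:truncintersect} and \ref{prop:truncproj}): if one instead wove each $\Sigma^i$ separately one would obtain $\col\left(\hat{w}_1^i,\hat{w}_{2\bint{l+1,L}}^i\right)\in\B^i_\bint{1,2L-l}$, but since $\proj{w^i}{\B}\subset\B^i$ by Lemma \ref{lem:interconbehaviourproj}(i) the inclusion runs the wrong way to re-apply Theorem \ref{thm:behaviorPWinvariant} at the longer horizon $2L-l$, and $\left(\bigtimes_{i=1}^N\B^i_\bint{1,2L-l}\right)\cap\B^\Pi_\bint{1,2L-l}$ is in general a strict superset of $\B_\bint{1,2L-l}$.

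First I would record that, as the local trajectories are measured under interconnection, $\hat{w}_1^i,\hat{w}_2^i\in\proj{w^i}{\B_\bint{1,L}}$ for each $i$, hence $\hat{w}_1,\hat{w}_2\in\bigtimes_{i=1}^N\proj{w^i}{\B_\bint{1,L}}$. Together with the hypothesis $\hat{w}_1,\hat{w}_2\in\B^\Pi_\bint{1,L}$ and the bound $L\geq l>\max\left\{\lag{\B^\Pi},\left\{\lag{\B^i}\right\}_{i=1}^N\right\}$, Theorem \ref{thm:behaviorPWinvariant} applies at horizon $L$ and yields $\hat{w}_1,\hat{w}_2\in\B_\bint{1,L}$.

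Next I would apply Lemma \ref{lem:weaving} to the interconnected system $\Sigma$ itself. Since each $\Sigma^i$ and $\Sigma^\Pi=\ker(\Pi(\sigma))$ are time-invariant, the intersection $\B=\left(\bigtimes_{i=1}^N\B^i\right)\cap\B^\Pi$ is time-invariant; moreover, by Lemma \ref{lem:interconproperty}(i), $\lag{\B}\leq\max\left\{\lag{\B^\Pi},\left\{\lag{\B^i}\right\}_{i=1}^N\right\}<l\leq L$, so the hypotheses $L>\lag{\B}$ and $l\geq\lag{\B}$ of Lemma \ref{lem:weaving} both hold. The overlap condition $\hat{w}_{1\bint{L-l+1,L}}=\hat{w}_{2\bint{1,l}}$ is precisely its splicing hypothesis, so the lemma gives $\col\left(\hat{w}_1,\hat{w}_{2\bint{l+1,L}}\right)\in\B_\bint{1,2L-l}$.

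The final step is a reindexing identification. The weaving in Lemma \ref{lem:weaving} is defined pointwise in time: the resulting trajectory agrees with $\hat{w}_1$ on the first $L$ samples and with the $l$-shifted tail of $\hat{w}_2$ on the remaining $L-l$ samples. Reading that trajectory componentwise across the subsystems returns exactly $\hat{w}=\col\left\{\col\left(\hat{w}_1^i,\hat{w}_{2\bint{l+1,L}}^i\right)\right\}_{i=1}^N$, so $\hat{w}$ and $\col\left(\hat{w}_1,\hat{w}_{2\bint{l+1,L}}\right)$ are merely the subsystem-major and time-major vectorizations of one and the same collective trajectory. Since membership in $\B_\bint{1,2L-l}$ is a property of that trajectory and not of its vectorization, $\hat{w}\in\B_\bint{1,2L-l}$, as claimed. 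The only genuine subtlety---and the step I would flag as the principal obstacle---is committing to this order of operations (lift via Theorem \ref{thm:behaviorPWinvariant}, then weave), so that the horizon-restriction obstruction is never triggered; the component-wise reading of the overlap condition and the equivalence of the two stackings are then routine.
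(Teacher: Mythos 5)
Your proposal is correct and follows exactly the route the paper intends: the corollary is stated there as an immediate consequence of combining Theorem \ref{thm:behaviorPWinvariant} (to lift the locally measured, network-consistent data to elements of $\B_\bint{1,L}$) with Lemma \ref{lem:weaving} applied to $\B$ itself, using Lemma \ref{lem:interconproperty}(i) to certify the lag bounds. Your write-up simply makes explicit the order of operations and the subsystem-major versus time-major reindexing, both of which are consistent with the paper.
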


So far, the discussions on the construction of interconnected behavior are carried out in an entirely representation-free manner. We now direct our attention to the interconnected system concerned in this paper. Since all subsystems as well as the network interconnection are LTI systems, the interconnected system is also LTI. However, since all local trajectories are measured under interconnection, a Hankel matrix constructed from them \emph{cannot} parameterize the entire local behavior. In addition, interconnecting with other subsystems may lead to increase in the lag and state cardinality of the locally measured behavior. The following theorem gives a bound on the number of steps of persistent excitation to ensure effective parameterization and an explicit representation of the $L$-step interconnected behavior.
\begin{thm}\label{thm:behaviorPW}
	Let $\Sigma$ be an interconnected system of the form \eqref{eq:systemintercon}, in which $\Sigma^i\in\LTI{\mathrm{w}^i}_{\mathrm{contr}}$ and $\B^\Pi=\revise{\ker(\Pi(\sigma))}$. Let $\mathcal{W}^i\subset\B^i_\bint{1,T}$ be a set of trajectories in the $i$th subsystem measured under interconnection. \revise{If, in each subsystem, the trajectories for its free variable are collectively persistently exciting} of order \revise{$L+\n{\B^\Pi}+\sum_{i=1}^{N}\n{\B^i}$}, then
	\begin{equation}\label{eq:Hankelproj}
		\cs\left(\Hk_L(\revise{\mathcal{W}^i})\right)=\proj{w^i}{\B_\bint{1,L}}.
	\end{equation}
	Furthermore, if \revise{$L>\max\left\{\lag{\B^\Pi},\left\{\lag{\B^i}\right\}_{i=1}^N\right\}$}, then, for all $\hat{v}\in\B_\bint{1,L}$, there exists a vector $z\in\R^\bullet$ such that
	\begin{equation}\label{eq:Msystem}
		\hat{v}=\widehat{\Hk}_{L}(\revise{\widetilde{\Pi}}\widehat{\Hk}_{L})^\perp z\revise{\eqqcolon}\F z,
	\end{equation}
	where \revise{$\widehat{\Hk}_L=\col\left\{\diag\{\Hk_L(\mathcal{W}^i)(k)\}_{i=1}^N\right\}_{k=1}^L$, $\Hk_L(\mathcal{W}^i)(k)$ denotes the block rows of $\Hk_L(\mathcal{W}^i)$ corresponding to $v^i(k)$}. In other words, 
		\begin{equation}\label{eq:HankelPW}
		\cs(\F)=\B_\bint{1,L}.
	\end{equation}
\end{thm}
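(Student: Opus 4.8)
The plan is to establish the set identity $\cs(\F)=\B_\bint{1,L}$ directly, which simultaneously yields \eqref{eq:Msystem} (the $\supseteq$ inclusion) and its converse. The strategy is to unfold the definition $\F=\widehat{\Hk}_{L}(\widetilde{\Pi}\widehat{\Hk}_{L})^\perp$ into a constrained column space, identify that column space with $\left(\bigtimes_{i=1}^N\proj{w^i}{\B_\bint{1,L}}\right)\cap\B^\Pi_\bint{1,L}$, and then invoke Theorem \ref{thm:behaviorPWinvariant} to collapse this intersection to $\B_\bint{1,L}$.

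First I would dispatch the Moore--Penrose projection factor. Writing $M\coloneqq\widetilde{\Pi}\widehat{\Hk}_{L}$, the identity $MM^\dagger M=M$ gives $MM^\perp=0$, while $M^\perp x=x$ for every $x\in\ker(M)$; hence $\cs(M^\perp)=\ker(M)$. Since $\F=\widehat{\Hk}_{L}M^\perp$, this yields $\cs(\F)=\widehat{\Hk}_{L}\,\cs(M^\perp)=\widehat{\Hk}_{L}\ker\!\left(\widetilde{\Pi}\widehat{\Hk}_{L}\right)$. Because the condition $g\in\ker(\widetilde{\Pi}\widehat{\Hk}_{L})$ is exactly $\widehat{\Hk}_{L}g\in\ker(\widetilde{\Pi})$, the set of attainable vectors $\widehat{\Hk}_{L}g$ is precisely those that land in $\ker(\widetilde{\Pi})$, so
\begin{equation*}
	\cs(\F)=\cs\!\left(\widehat{\Hk}_{L}\right)\cap\ker\!\left(\widetilde{\Pi}\right).
\end{equation*}

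Next I would identify the two factors. By construction $\ker(\widetilde{\Pi})=\B^\Pi_\bint{1,L}$, since $\widetilde{\Pi}$ is the extended coefficient matrix representing the $L$-step network behavior. For the first factor, the block-diagonal, time-major layout $\widehat{\Hk}_{L}=\col\{\diag\{\Hk_L(\mathcal{W}^i)(k)\}_{i=1}^N\}_{k=1}^L$ means that multiplying by $z=\col(g^1,\ldots,g^N)$ produces, at each step $k$, the stacked block $\col\{\Hk_L(\mathcal{W}^i)(k)g^i\}_{i=1}^N$; that is, the $i$th component is the value at step $k$ of the arbitrary element $\Hk_L(\mathcal{W}^i)g^i$ of $\cs(\Hk_L(\mathcal{W}^i))$, with the $g^i$ varying independently as $z$ ranges freely. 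Invoking the first part of this theorem, $\cs(\Hk_L(\mathcal{W}^i))=\proj{w^i}{\B_\bint{1,L}}$, I obtain $\cs(\widehat{\Hk}_{L})=\bigtimes_{i=1}^N\proj{w^i}{\B_\bint{1,L}}$, read off in the same time-major ordering in which $\widetilde{\Pi}$ acts.

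Combining the two steps gives $\cs(\F)=\left(\bigtimes_{i=1}^N\proj{w^i}{\B_\bint{1,L}}\right)\cap\B^\Pi_\bint{1,L}$, and since $L>\max\{\lag{\B^\Pi},\{\lag{\B^i}\}_{i=1}^N\}$, Theorem \ref{thm:behaviorPWinvariant} identifies the right-hand side with $\B_\bint{1,L}$, establishing \eqref{eq:HankelPW}. The main obstacle, and the only step needing real care, is the bookkeeping behind $\cs(\widehat{\Hk}_{L})=\bigtimes_{i=1}^N\proj{w^i}{\B_\bint{1,L}}$: one must verify that the interleave-by-time-step construction of $\widehat{\Hk}_{L}$ uses exactly the coordinate ordering under which $\widetilde{\Pi}$ represents $\B^\Pi_\bint{1,L}$, so that the intersection computed in these coordinates is the genuine interconnected behavior and not a permuted variant, and hence that Theorem \ref{thm:behaviorPWinvariant} applies verbatim. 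Everything else follows mechanically from the $\perp$ identity and the two cited results.
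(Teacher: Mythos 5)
Your argument for the second conclusion is correct and follows essentially the same route as the paper: you unfold $\F=\widehat{\Hk}_L(\widetilde{\Pi}\widehat{\Hk}_L)^\perp$ via $\cs(M^\perp)=\ker(M)$, identify $\cs(\F)$ with $\left(\bigtimes_{i=1}^N\proj{w^i}{\B_\bint{1,L}}\right)\cap\B^\Pi_\bint{1,L}$, and collapse that intersection using Theorem \ref{thm:behaviorPWinvariant}. The paper runs the same calculation in the opposite direction (it writes the intersection as the system $\hat{w}=\widehat{\Hk}_L g$, $\widetilde{\Pi}\hat{w}=0$ and solves for $g$ with the $Ax=b$ lemma), but the content is identical, and your concern about the time-major coordinate ordering is the right piece of bookkeeping.

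The genuine gap is that you never prove the first conclusion, \eqref{eq:Hankelproj}: you ``invoke the first part of this theorem,'' which is circular, since that identity is itself part of the statement to be established. It is also the substantive, non-linear-algebraic content of the result --- one must explain why a Hankel matrix built from \emph{locally} measured data under interconnection, with persistent excitation of the inflated order $L+\n{\B^\Pi}+\sum_{i=1}^{N}\n{\B^i}$, parameterizes the \emph{projected} behavior $\proj{w^i}{\B_\bint{1,L}}$ rather than $\B^i_\bint{1,L}$. The paper's argument is to view the whole interconnected system $\Sigma$ as a single controllable LTI system whose free variables are the union of the subsystems' free variables; by Lemma \ref{lem:interconproperty}(ii) the stated excitation order dominates $L+\n{\B}$, so the Fundamental Lemma (Lemma \ref{lem:behaviorpara}) gives $\cs\left(\Hk_L(\mathcal{W})\right)=\B_\bint{1,L}$ for the stacked global data, and the sub-matrix of rows belonging to $w^i$ is exactly $\Hk_L(\mathcal{W}^i)$, whose column span is therefore the projection. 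Note that your proof as written never uses the persistent excitation hypothesis at all, which should have been a warning that an essential step was being bypassed.
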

\begin{proof}
	See Appendix \ref{appx:proofthm:behaviorPW}.
\end{proof}

In Theorem \ref{thm:behaviorPW}, controllability of the subsystems has been stated as an assumption. While it has been shown in recent studies that controllability can be verified directly from data \cite{Mishra:2020}, it again requires the exact value of $\n{\B}$ which is difficult to obtain. For systems with only a known upper bound on $\n{\B}$, which is the case here, it is only possible to verify uncontrollability, but not controllability. Note that the free variables that require persistent excitation not only include $w_{fc}$ specified in Problem \ref{prob:problem} but also all manipulated variables in all subsystems, \revise{as illustrated in the following example}. 

\begin{exmp}[Example \ref{exmp:layout} continued]\label{exmp:construction}
		For the system depicted in Fig. \ref{fig:intercon}, each subsystem needs to construct Hankel matrices based on their local measurements $\mathcal{W}_p^i$ with the free variables persistently exciting. Free variables in this example includes $d$, $u_c^1$, $u_c^3$ and $u_c^4$. Note that, although $u_p^i$ is conventionally considered as an input to $\Sigma_p^i$, it is \emph{not} a free variable because it is restricted by the rest of the interconnected system. Persistent excitation is therefore not required. This again illustrates that an input to a subsystem in conventional control block diagram may not necessarily be a free variable and cannot be treated as an ``input'' in the behavioral framework. Furthermore, due to the restriction on $u_p^i$, we cannot obtain the entirety of $\B_p^i$ from $\mathcal{W}_p^i$ because, had it been a stand-alone system, $u_p^i$ should be free. However, using Theorem \ref{thm:behaviorPW}, we are able to obtain $\B_p$ from local data and the network interconnection, even though only part of the local behaviors can be obtained. \END
\end{exmp}


From Theorem \ref{thm:behaviorPW}, we see that the requirement \revise{on} the order of local persistent excitation \revise{involves a much larger bound} than if the subsystem were stand-alone. This is because although the set of trajectories in the projected behavior is smaller than that in the local behavior, the complexity of the former is much higher due to its convoluted ``internal dynamics'' that is the rest of the interconnected system. Furthermore, the Hankel matrix $\Hk_L(\revise{\mathcal{W}}^i)$ does \emph{not} parameterize the $L$-step local behavior $\B_\bint{1,L}^i$, but rather that of the component $w^i$ in the interconnected system $\Sigma$, which is a subset of the local behavior. According to Theorem \ref{thm:behaviorPWinvariant}, this is enough to obtain the $L$-step interconnected behavior as long as the trajectories parameterized are long enough. Note that the matrix $\F $ in \eqref{eq:Msystem} no longer maintains the structure of a Hankel matrix, but the parameterization is equally effective.

\revise{\begin{rem}
	Theorem \ref{thm:behaviorPW} highlights several points discussed in Theorem \ref{thm:behaviorPWinvariant} applied to the context of data-driven distributed control of interconnected LTI systems. Firstly, the interconnected behavior \emph{cannot} be obtained by simply stacking the Hankel matrices of the subsystems even though they represent the behavior of the subsystems under the network interconnection, i.e., $\B_\bint{1,L}\neq\cs(\widehat{\mathcal{H}}_L)$. Rather, it needs to incorporate the behavior of the network interconnection $\revise{\ker(\Pi_p(\sigma))}$. Secondly, to obtain a single representation for the common behavior of several systems (i.e., intersection of behaviors), the number of steps should be sufficiently long. This is one of the basic underlying assumptions for all developments in Section \ref{sec:controldesgin}, as noted in Section \ref{sec:preamble}.
\end{rem}}

\section{Dissipativity as Behavior}\label{sec:dissipativity}
Traditionally, dissipativity is used as an (input-output) property of a dynamical system,  describing its dynamic features \cite{Willems:1972}.  However, dissipative dynamical systems, where the \emph{behaviors} of the systems are described by dissipativity, form a class of systems in its own right. The view of dissipativity as a behavior allows it to be analyzed as a system and interconnected with other systems, leading to a clear view on the properties of a dissipative behavior that aids effective control design (see Remark \ref{rem:dissipativebehaviour} for a detailed discussion).

\subsection{Dissipative Dynamical Systems}
In the interest of data-driven analysis and control, we introduce the concept of dissipative dynamical systems in discrete time analogous to the continuous-time counterpart in \cite{Willems:2007a}. A dynamical system $\Sigma_s=(\T,\R,\B_s)$, where the manifest variable $s$ represents the rate of ``supply'' absorbed by the system, is said to be dissipative if, for any given $k_0\in\T$, there exists $C\in\R$ such that
\begin{equation}\label{eq:diss}
	-\sum_{k=k_0}^{k_1}s\revise{(k)}\leq C, \ \forall k_1\geq k_0.
\end{equation}
In this definition, the concept of dissipativity is formulated as \emph{a distinct dynamical system} with the supply rate as its manifest variable. The energy ``stored'' in the system corresponding to the supply $s$, called the storage function $V$, can serve the purpose of a latent variable. The latent variable representation of such a full behavior is given in the following proposition.

\begin{prop}\label{prop:dissipativity}
	The dynamical system $\Sigma_s=(\T,\R,\B_s)$ is dissipative if and only if there exists a latent variable dynamical system $\Sigma^{full}_s=(\T,\R,\R^{\geq0},\B_s^{full})$ with latent variable $V$ such that the full behavior $\B_s^{full}$ has the representation
	\begin{equation}\label{eq:dissstep}
		V\revise{(k)}-V\revise{(k - 1)}\leq s\revise{(k)}, \ \forall k\in\T,
	\end{equation}
	\revise{where $V\geq0$, as required by its signal space $\R^{\geq0}$.}
\end{prop}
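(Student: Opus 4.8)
The statement is the discrete-time, behavioral counterpart of Willems' classical dissipation theorem, so the plan is to prove the two implications separately: the ``if'' direction by a telescoping argument, and the ``only if'' direction by exhibiting an explicit nonnegative storage, namely the discrete-time \emph{available storage}. At the level of behaviors, proving the equivalence amounts to showing that $\B_s$ coincides with the manifest behavior (in the sense of \eqref{eq:manifestbehavior}) of some full behavior $\B_s^{full}$ described by \eqref{eq:dissstep}.

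For sufficiency, suppose a latent variable $V\geq0$ obeying $V(k)-V(k-1)\leq s(k)$ is given. I would sum this inequality from $k=k_0$ to $k=k_1$; the left-hand side telescopes to $V(k_1)-V(k_0-1)$, yielding $V(k_1)-V(k_0-1)\leq\sum_{k=k_0}^{k_1}s(k)$. Rearranging and discarding the nonnegative term $V(k_1)$ gives $-\sum_{k=k_0}^{k_1}s(k)\leq V(k_0-1)$, so \eqref{eq:diss} holds with $C=V(k_0-1)$, a constant depending only on $k_0$. Hence every $s$ admitting such a $V$ is dissipative.

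For necessity, assume $\Sigma_s$ dissipative. For each $s\in\B_s$ I would define
\[
	V(k)=\sup_{k_1\geq k}\left(-\sum_{j=k+1}^{k_1}s(j)\right),
\]
with the empty sum at $k_1=k$ read as zero, and then verify three properties. First, finiteness: applying \eqref{eq:diss} with $k_0=k+1$ bounds $-\sum_{j=k+1}^{k_1}s(j)$ above uniformly in $k_1$, so the supremum is finite. Second, nonnegativity: the $k_1=k$ term contributes $0$, whence $V(k)\geq0$ and the signal space $\R^{\geq0}$ is respected. Third, the dissipation inequality: splitting off the $j=k$ term gives $\sup_{k_1\geq k}(-\sum_{j=k}^{k_1}s(j))=-s(k)+V(k)$, while the enlarged range $k_1\geq k-1$ defining $V(k-1)$ also contains the zero term, so $V(k-1)=\max\{0,\,V(k)-s(k)\}\geq V(k)-s(k)$, i.e., $V(k)-V(k-1)\leq s(k)$. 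Collecting the pairs $(s,V)$ yields a full behavior with representation \eqref{eq:dissstep} projecting onto $\B_s$.

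The main obstacle is the finiteness claim in the necessity direction: this is the single place where the dissipativity hypothesis is genuinely needed, everything else being algebraic. A secondary technical point I would flag is the left endpoint of the time axis: since $\T\subset\Z^{\geq0}$, the quantities $V(k_0-1)$ and $V(k-1)$ at the initial instant are not covered by the construction and must be treated as a free boundary value of the storage (equivalently, the inequality is imposed for $k$ beyond the initial time); neither the telescoping nor the available-storage argument is otherwise affected.
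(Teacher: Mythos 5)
Your proof is correct and follows essentially the same route as the paper's: telescoping with $C=V(k_0-1)$ for sufficiency, and the available-storage supremum $V(k)=\sup_{k_1\geq k}\bigl(-\sum_{j=k+1}^{k_1}s(j)\bigr)$ for necessity. Your derivation of the dissipation inequality via the exact identity $V(k-1)=\max\{0,\,V(k)-s(k)\}$ is in fact slightly cleaner than the paper's case split on where the supremum is attained, and your remark on the left endpoint of $\T$ flags a boundary detail the paper glosses over.
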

\begin{proof}
	See Appendix \ref{appx:proofprop:dissipativity}.
\end{proof}

Despite the similarities between \eqref{eq:dissstep} and \eqref{eq:dissconventional}, the former provides several insights into the nature of dissipative dynamical systems. To begin with, the storage function $V$ has the property of state. This has been reflected in the traditional concept of dissipativity, as the original definition of dissipativity is with the storage function defined as a function of state variables \cite{Willems:1972,Willems:1972a}. Furthermore, \eqref{eq:dissstep} defines a time-invariant behavior. In other words, \emph{dissipative dynamical systems are time-invariant systems}. A direct implication of these two insights is that the lag of $\B^{full}$ defined in \eqref{eq:dissstep} is always 1. \revise{In addition, the dissipative behavior described by \eqref{eq:dissstep} is nonlinear.}

\begin{rem}\label{rem:dissipativebehaviour}
	By viewing dissipativity as the behavior of a dissipative dynamical system rather than a property of a given dynamical system, it possesses various system properties such as time-invariance and lag. As will be discussed in the next section, this view facilitates the analysis on the lag of the dissipative behavior of a system, and hence the length of the receding horizon during optimization.
\end{rem}
\subsection{Interconnection with a Dissipative Dynamical System}\label{sec:dissintercon}
While the primary application of dissipativity is to extract useful dynamical features of a given dynamical system for effective control design, the dynamical features can be understood much more clearly if we treat the feature extraction process as the interconnection of the given system $\Sigma$ and a \emph{virtual} dissipative dynamical system $\Sigma_s$ through some network interconnection $\Sigma^\Pi$. Since the lag of $\Sigma_s$ without the aid of the storage function is difficult to obtain, we treat the storage function as a manifest variable as well. In other words, the dissipative dynamical system used here is \emph{defined} by the triple $\Sigma_{sV}=(\T,\R^2,\B_{sV})$, where the manifest variable is the pair $(s,V)$ and the behavior $\B_{sV}$ is described by \eqref{eq:dissstep}. By doing so, it is clear that $\lag{\B_{sV}}=1$. The behavior of the supply rate $s$, in the sense of \eqref{eq:diss}, can be extracted from $\B_{sV}$ as $\B_s=\proj{s}{\B_{sV}}$. 

Suppose a dynamical system $\Sigma$ is given by the triple $\Sigma=(\T,\W,\B)$. The \emph{virtual network interconnection} connecting the physical system $\Sigma$ and the virtual system $\Sigma_{sV}$ is given by $\Sigma_s^\Pi=(\T,\W\times\R^2,\B_{s}^\Pi)$, where the network interconnection behavior $\B_{s}^\Pi$ is to be specified. As a result, the ``interconnected'' system $\Sigma_{ps}$ has the behavior
\begin{equation}\label{eq:invariant+diss}
	\B_{ps}=(\B\times\B_{sV})\cap\B_{s}^\Pi.
\end{equation}
The sub-behavior of $\B$ that is dissipative with respect to the supply rate $s$ with storage function $V$ specified by $\B_{s}^\Pi$, or $(s,V)$-dissipative, is given by $\proj{w}{\B_{ps}}$. Obviously, if $\B\subset\proj{w}{\B_{ps}}$, then the entire behavior $\B$ is $(s,V)$-dissipative. Since $\proj{w}{\B_{ps}}\subset\B$ by construction (see Lemma \ref{lem:projinout}), we have that $\B$ is $(s,V)$-dissipative if and only if $\B=\proj{w}{\B_{ps}}$. In the traditional sense, the network interconnection behavior $\B_{s}^\Pi$ specifies the storage function and supply rate associated with the dynamical system $\Sigma$. It will become apparent that viewing the search for dissipativity of a given system as an interconnection facilitates the analysis and control design.

As explained in the beginning of this section, $\Sigma_{s}^\Pi$ specifies the interconnection between $w$ and the pair $(s,V)$. If the network interconnection is specified using QdFs, i.e.,
\begin{equation}\label{eq:dissnetwork}
	s\revise{(k)}=Q_\Phi(w)\revise{(k)}, \ V\revise{(k)}=Q_\Psi(w)\revise{(k)},
\end{equation}
with orders $K_\Phi$ and $K_\Psi$, respectively, then the resulting behavior $\B_{s}^\Pi$ is obviously time-invariant. The upper bound of the lag can be obtained as $\lag{\B_{s}^\Pi}\leq\max\{K_\Phi,K_\Psi\}$. Suppose, in addition, that $\Sigma\in\LTI{\mathrm{w}}$ and that $\B_{\bint{1,L}}=\mathrm{colspan}\left(\F\right)$. Since $\Sigma$ is time-invariant, viewing any step within the $L$-step trajectory as ``current step'' does not change the representation. Suppose that the trajectories are re-indexed to the interval $[k-L^-,k+L^+]$, with $L^++L^-+1=L$. Then the behavior $\proj{w}{\B_{ps}}$ has the representation
\begin{subequations}\label{eq:LTI+diss}
	\begin{align}
		\hat{w}_{k}\coloneqq\hat{w}_{\bint{k-L^-,k+L^+}}&=\F z_{k},\label{eq:LTI+diss1}\\
		Q_{\Phi}(w)\revise{(k)}-Q_{\nabla\Psi}(w)\revise{(k)}&\geq0 ,\label{eq:LTI+diss2}
	\end{align}
\end{subequations}
where $Q_{\nabla\Psi}(w)\revise{(k)}\coloneqq Q_{\Psi}(w)\revise{(k)}-Q_{\Psi}(w)\revise{(k-1)}$, called the \emph{rate of change} of the storage, has the coefficient matrix
\begin{equation}
	\nabla\widetilde{\Psi}\coloneqq\diag(0_\mathrm{w},\widetilde{\Psi})-\diag(\widetilde{\Psi},0_\mathrm{w}).
\end{equation}
It is not difficult to see that bound of the lag of this behavior \eqref{eq:LTI+diss} is given by
\begin{equation}\label{eq:lagdissproj}
	\lag{\proj{w}{\B_{ps}}}\leq\max\{\lag{\B},K_\Phi,K_\Psi+1\}.
\end{equation}

We now present conditions to deduce the dissipativity of $\B$ on the infinite time axis from that on a finite interval. The key issue lies in the number of steps needed to verify the dissipativity condition.
\begin{prop}\label{prop:Hankeldiss}
	Let $\B\in\LTI{\mathrm{w}}$ with $\B_{\bint{k+L^-,k+L^+}}$ represented by \eqref{eq:LTI+diss1}. Let $\B_{sV}$ be a dissipative behavior with representation \eqref{eq:dissstep} and let the interconnection between $\B$ and $\B_{sV}$ be represented by \eqref{eq:dissnetwork}. Assume that $L^-=\max\{K_\Phi,K_\Psi+1\}$. If $L^->\lag{\B}$, $\widetilde{\Psi}\geq0$ and, for all $l\in\Znn{L^+}$,\revise{
		\begin{equation}\label{eq:PhiLdissSOS}
			z_k^\top\F_l^\top (\widehat{\Phi}-\nabla\widehat{\Psi})\F_lz_k\geq0
		\end{equation}
	holds for any $z_k$, where $\F_l$ denotes the sub-matrix of} $\F$ corresponding to the trajectory segment $\hat{w}_{\bint{k-L^-+l,k+l}}$, $\widehat{\Phi}=\diag(0_{k_\Phi\mathrm{w}},\widetilde{\Phi})$, $\nabla\widehat{\Psi}=\diag(0_{k_\Psi\mathrm{w}}, \nabla\widetilde{\Psi})$, $k_\Phi=L^--K_\Phi$ and $k_\Psi=L^--K_\Psi-1$, then $\B=\proj{w}{\B_{ps}}$, where $\B_{ps}$ is given in \eqref{eq:invariant+diss}. In other words, $\B$ is $(Q_\Phi(w),Q_\Psi(w))$-dissipative. \revise{Furthermore, if $L^+=0$, then \eqref{eq:PhiLdissLMI} reduces to}
	\begin{equation}\label{eq:PhiLdissLMI}
	    \revise{\F^\top (\widehat{\Phi}-\nabla\widehat{\Psi})\F\geq0.}
	\end{equation}
\end{prop}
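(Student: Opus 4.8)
The plan is to prove the set equality $\B=\proj{w}{\B_{ps}}$ by establishing the two inclusions. Since $\proj{w}{\B_{ps}}\subset\B$ holds by construction (Lemma~\ref{lem:projinout}), only $\B\subset\proj{w}{\B_{ps}}$ needs work; by the representation \eqref{eq:LTI+diss} of $\proj{w}{\B_{ps}}$ this reduces to showing that every $w\in\B$ admits the storage $V=Q_\Psi(w)$ with $V\geq0$ and satisfies the dissipation inequality \eqref{eq:LTI+diss2}, i.e.\ $Q_\Phi(w)(k)-Q_{\nabla\Psi}(w)(k)\geq0$, at every instant. Nonnegativity of the storage is immediate: since $V(k)=Q_\Psi(w)(k)=\hat{w}_{\bint{k-K_\Psi,k}}^\top\widetilde{\Psi}\hat{w}_{\bint{k-K_\Psi,k}}$ and $\widetilde{\Psi}\geq0$, we have $V(k)\geq0$ for every $w$ and every $k$, so $V$ is a valid storage in the sense of Proposition~\ref{prop:dissipativity}.

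The core of the argument is the dissipation inequality, and I would begin by unwinding the algebraic condition \eqref{eq:PhiLdissSOS}. Because $L^-=\max\{K_\Phi,K_\Psi+1\}$, the scalar $Q_\Phi(w)(k+l)-Q_{\nabla\Psi}(w)(k+l)$ depends only on the length-$(L^-+1)$ segment $\hat{w}_{\bint{k-L^-+l,k+l}}=\F_l z_k$, and the zero-padding in $\widehat{\Phi}=\diag(0_{k_\Phi\mathrm{w}},\widetilde{\Phi})$ and $\nabla\widehat{\Psi}=\diag(0_{k_\Psi\mathrm{w}},\nabla\widetilde{\Psi})$ (with $k_\Phi=L^--K_\Phi$, $k_\Psi=L^--K_\Psi-1$) aligns $\widetilde{\Phi}$ and $\nabla\widetilde{\Psi}$ to the most recent entries of that segment. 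Hence $z_k^\top\F_l^\top(\widehat{\Phi}-\nabla\widehat{\Psi})\F_l z_k=Q_\Phi(w)(k+l)-Q_{\nabla\Psi}(w)(k+l)$, so \eqref{eq:PhiLdissSOS} states precisely that the dissipation inequality holds at the instants $k+l$, $l\in\Znn{L^+}$, for every trajectory parameterized by $\F$.

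It then remains to lift this finite, window-wise statement to the whole time axis. I would fix an arbitrary $w\in\B$ and an arbitrary instant $t$ and consider $\hat{w}_{\bint{t-L^-,t}}$. Since $\B$ is LTI, hence time-invariant, and $\cs(\F)=\B_{\bint{1,L}}$, projecting $\cs(\F)$ onto its first $L^-+1$ block rows gives $\cs(\F_0)=\B_{\bint{1,L^-+1}}$; thus $\hat{w}_{\bint{t-L^-,t}}\in\B_{\bint{t-L^-,t}}=\cs(\F_0)$ and equals $\F_0 z$ for some $z$. Applying \eqref{eq:PhiLdissSOS} with $l=0$ yields $Q_\Phi(w)(t)-Q_{\nabla\Psi}(w)(t)\geq0$, and since $t$ is arbitrary the dissipation inequality holds everywhere. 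The role of $L^->\lag{\B}$ is to guarantee, through the lag bound \eqref{eq:lagdissproj}, that $\lag{\proj{w}{\B_{ps}}}\leq L^-$, so that these length-$(L^-+1)$ window conditions genuinely characterize the infinite-horizon dissipative behavior; where overlapping segments must be composed into a single admissible trajectory, I would invoke the weaving Lemma~\ref{lem:weaving}, whose overlap hypothesis is met precisely because $L^->\lag{\B}$. Together with $V\geq0$ this gives $w\in\proj{w}{\B_{ps}}$, hence $\B=\proj{w}{\B_{ps}}$, i.e.\ $\B$ is $(Q_\Phi(w),Q_\Psi(w))$-dissipative.

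Finally, when $L^+=0$ the only index is $l=0$ and $\F_0=\F$, so \eqref{eq:PhiLdissSOS} becomes $z_k^\top\F^\top(\widehat{\Phi}-\nabla\widehat{\Psi})\F z_k\geq0$ for all $z_k$; as $\F^\top(\widehat{\Phi}-\nabla\widehat{\Psi})\F$ is symmetric, nonnegativity of this quadratic form for every $z_k$ is equivalent to the semidefiniteness condition \eqref{eq:PhiLdissLMI}, $\F^\top(\widehat{\Phi}-\nabla\widehat{\Psi})\F\geq0$. I expect the main obstacle to be the second step: carefully matching the padded coefficient matrices $\widehat{\Phi}$, $\nabla\widehat{\Psi}$ and the submatrices $\F_l$ to the pointwise QdF evaluations, and making the finite-to-infinite extension airtight via time invariance and the lag bound, rather than any single nontrivial inequality.
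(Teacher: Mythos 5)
Your proof is correct, but it formalizes the argument along a different route from the paper. You argue at the level of individual trajectories: fix $w\in\B$ and an instant $t$, note that by time-invariance the window $\hat{w}_{\bint{t-L^-,t}}$ lies in $\B_{\bint{1,L^-+1}}=\cs(\F_0)$, and apply the $l=0$ instance of \eqref{eq:PhiLdissSOS} to obtain the pointwise dissipation inequality at $t$; together with $\widetilde{\Psi}\geq0$ this exhibits $(Q_{\Phi}(w),Q_{\Psi}(w))$ as an admissible supply/storage pair, so $w\in\proj{w}{\B_{ps}}$, the reverse inclusion being automatic. The paper instead works entirely with sets of trajectories: it reads \eqref{eq:LTI+diss2} as defining a behavior $\B_{sw}=\proj{w}{\left(\W^\T\times\B_s\right)\cap\B_{s}^\Pi}$ whose lag is bounded by $L^-$, deduces from the hypothesis that $\B_{\bint{k-L^-,k+L^+}}=\B_{\bint{k-L^-,k+L^+}}\cap\B_{sw\bint{k-L^-,k+L^+}}$, commutes restriction with intersection via Proposition~\ref{prop:truncintersect} (which is where $L>\max\{\lag{\B},\lag{\B_{sw}}\}$ enters), identifies $\B\cap\B_{sw}$ with $\proj{w}{\B_{ps}}$ via Lemma~\ref{lem:projinout}, and lifts the equality of restricted behaviors to the unrestricted time axis with Lemma~\ref{lem:truncagree}. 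Your route is more elementary: it needs only the parameterization $\cs(\F_0)=\B_{\bint{1,L^-+1}}$ and time-invariance, and in fact uses only the $l=0$ condition (the $l\geq1$ instances are redundant here since time-invariance gives $\cs(\F_l)\subset\cs(\F_0)$); the weaving Lemma~\ref{lem:weaving} you invoke is not actually needed, and your account of where $L^->\lag{\B}$ is used is the one slightly vague spot, whereas the paper pins it to Lemma~\ref{lem:truncagree}. What the paper's set-theoretic version buys is uniformity with the behavioral machinery reused in Theorem~\ref{thm:behaviorPWinvariant} and Theorem~\ref{thm:controlledexist}; both arguments share the same looseness at the start of the time axis, where the QdFs reference $w$ at negative times.
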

\begin{proof}
	See Appendix \ref{proofprop:Hankeldiss}.
\end{proof}
\begin{rem}
	The assumption that $L^-=\max\{K_\Phi,K_\Psi+1\}$ does not lead to loss of generality because, if they were not equal, we could extend the shorter one to match the length either through weaving of the coefficient matrix (if $w_p$ were not long enough) or by starting with a negative value of $l$ (if the orders of the QdFs were not high enough).
\end{rem}

The conditions in Proposition \ref{prop:Hankeldiss} claim that if \eqref{eq:LTI+diss2} is satisfied for all $L$-step trajectories described in \eqref{eq:LTI+diss1} and the ``past trajectory'' is long enough, then the dissipativity condition given by the pair $(Q_\Phi(w),Q_\Psi(w))$ is valid for the entire behavior $\B_p$. It is worth pointing out that the requirement of the length of the ``past trajectory'' is more than needed for the dissipativity to be valid. In fact, $\B$ is $(Q_\Phi(w),Q_\Psi(w))$-dissipative as long as the second condition in Proposition \ref{prop:Hankeldiss} holds for $L>\max\{\lag{\B},K_\Phi,K_\Psi+1\}$. The requirement for longer trajectory is for the purpose of predictive control, in that sufficient overlapping between consecutive receding horizons is needed to ensure effective weaving. 
\subsection{More on QdF-based Behaviors}
In the previous section we \revise{saw} that the representation of dissipative behavior with quadratic supply and storage, i.e., behavior described by \eqref{eq:LTI+diss2}, has the general form
\begin{equation}\label{eq:QTI}
	\hat{w}^\top M\hat{w}\geq0,
\end{equation}
where $M\in\mathbb{S}^\bullet$ is a symmetric coefficient matrix. In this section, we give some further treatments to this type of systems. To begin with, it is easy to see that the behavior defined by \eqref{eq:QTI} is time-invariant and that the lag $\lag{\B}$ is given by the order of the QdF.

\revise{Let $w$ admit a partition $w=(w_1,w_2)$. The matrix $M$ can be partitioned accordingly, resulting in the representation}
\begin{equation}\label{eq:dissmanifest}
	\begin{bmatrix}
		\hat{w}_1\\ \hat{w}_2
	\end{bmatrix}^\top \begin{bmatrix}
		Q&S\\S^\top &R
	\end{bmatrix}\begin{bmatrix}
		\hat{w}_1\\ \hat{w}_2
	\end{bmatrix}\geq0,
\end{equation}
where $Q$, $R$ are symmetric matrices. The immediate question is that whether the \revise{behavior of one of the partitions, say $w_1$,} has a similar form. Interestingly, it is \emph{not} always the case, and the exact manifest behavior depends on the signature of $R$.
\begin{prop}\label{prop:dissmani}
	Let $\Sigma=(\T,\W,\B)$ be a latent variable dynamical system whose $L$-step behavior $\B_\bint{1,L}$ has the representation \eqref{eq:dissmanifest}. Then the \revise{behavior of $w_1$} is given by
	\begin{equation}\label{eq:dissmani}
		\begin{cases}
			\hat{w}_1^\top \left(Q-SR^\dagger S^\top \right)\hat{w}_1\geq0, & \text{if } R\leq0 \text{ and } R_\perp S^\top \hat{w}=0;\\
			\W_1^{{\Zp{L}}}, & \text{otherwise}.
		\end{cases}
	\end{equation}
\end{prop}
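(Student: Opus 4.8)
The plan is to characterize the projection $\proj{w_1}{\B}$ where $\B$ is defined by the quadratic inequality \eqref{eq:dissmanifest}. By definition, $\hat{w}_1\in\proj{w_1}{\B}$ if and only if there exists some $\hat{w}_2$ such that $(\hat{w}_1,\hat{w}_2)$ satisfies \eqref{eq:dissmanifest}. First I would expand the quadratic form as a function of $\hat{w}_2$ with $\hat{w}_1$ treated as a parameter, writing it as
\begin{equation*}
	\hat{w}_2^\top R\hat{w}_2 + 2\hat{w}_1^\top S\hat{w}_2 + \hat{w}_1^\top Q\hat{w}_1 \geq 0.
\end{equation*}
The question of membership in the projection then becomes: for the given $\hat{w}_1$, does there exist $\hat{w}_2$ making this scalar quadratic (in $\hat{w}_2$) nonnegative? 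This is a question about the range of a quadratic function, and its answer splits according to the definiteness of $R$.

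\textbf{The case analysis on the signature of $R$.}
The key observation is that whether the inequality is satisfiable for \emph{every} $\hat{w}_1$, or only for a restricted set, depends entirely on the sign of $R$. I would organize the argument into the mutually exclusive cases governed by the eigenvalues of $R$. If $R$ has any positive eigenvalue (i.e.\ $R\not\leq0$), then by choosing $\hat{w}_2$ large along the corresponding eigendirection the quadratic tends to $+\infty$, so the inequality is satisfiable for arbitrary $\hat{w}_1$; hence $\proj{w_1}{\B}=\W_1^{\Zp{L}}$. A similar ``free'' conclusion holds when $R\leq0$ but the linear term $S^\top\hat{w}_1$ has a component outside the range of $R$, i.e.\ $R_\perp S^\top\hat{w}_1\neq0$: the term $2\hat{w}_1^\top S\hat{w}_2$ is then unbounded in an appropriate direction of $\hat{w}_2$ lying in the kernel of $R$, again forcing satisfiability. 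These two sub-cases together constitute the ``otherwise'' branch yielding $\W_1^{\Zp{L}}$.

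\textbf{The nondegenerate branch via completing the square.}
The substantive branch is $R\leq0$ together with the compatibility condition $R_\perp S^\top\hat{w}_1=0$, i.e.\ $S^\top\hat{w}_1\in\mathrm{range}(R)$. Here I would complete the square using the Moore--Penrose inverse $R^\dagger$. Writing $\hat{w}_2 = -R^\dagger S^\top\hat{w}_1 + \zeta$ and using the identities $RR^\dagger S^\top\hat{w}_1 = S^\top\hat{w}_1$ (which holds precisely because $R_\perp S^\top\hat{w}_1=0$, recalling $R_\perp = I-RR^\dagger$) and $R R^\dagger R = R$, the cross term is absorbed and the quadratic reduces to
\begin{equation*}
	\zeta^\top R\zeta + \hat{w}_1^\top\left(Q - SR^\dagger S^\top\right)\hat{w}_1 \geq 0.
\end{equation*}
Since $R\leq0$, the term $\zeta^\top R\zeta$ is maximized at $\zeta=0$, where it equals zero; thus the supremum over admissible $\hat{w}_2$ of the left side of \eqref{eq:dissmanifest} equals $\hat{w}_1^\top(Q-SR^\dagger S^\top)\hat{w}_1$. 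The inequality is therefore satisfiable for some $\hat{w}_2$ if and only if this reduced Schur-complement form is nonnegative, which is exactly the first branch of \eqref{eq:dissmani}.

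\textbf{Main obstacle.}
The routine calculations are the completion of the square and the unboundedness arguments; the step requiring the most care is the clean handling of the degenerate directions of $R$ via the pseudoinverse. Specifically, I expect the main obstacle to be verifying rigorously that the compatibility condition $R_\perp S^\top\hat{w}_1=0$ is both necessary and sufficient for $S^\top\hat{w}_1$ to lie in $\mathrm{range}(R)$, and then confirming that the pseudoinverse identities used in completing the square are valid only under this condition. One must be careful that the generalized Schur complement $Q-SR^\dagger S^\top$ correctly captures the optimum precisely when the linear term is range-compatible, since outside that situation the kernel directions of $R$ make the form unbounded above rather than yielding a finite extremum.
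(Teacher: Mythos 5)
Your proposal is correct and follows essentially the same route as the paper's proof: a case split on the signature of $R$, with the Schur complement $Q-SR^\dagger S^\top$ obtained by optimizing over $\hat{w}_2$ in the range-compatible case and unboundedness arguments handling the rest. The only difference is organizational --- you complete the square where the paper finds the stationary point via differentiation and Lemma \ref{lem:Ax=b}, and you collapse the paper's three sub-cases for $R\not\leq0$ (positive definite, positive semidefinite, indefinite) into a single ``escape along a positive eigendirection'' argument, which is a mild streamlining rather than a different method.
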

\begin{proof}
	See Appendix \ref{appx:proofprop:dissmani}.
\end{proof}

From Proposition \ref{prop:dissmani}, we see that the behavior of $w$ is only restricted when $R\leq0$, in which case the manifest behavior is quadratic. Note that this theorem only gives the manifest behavior for $L$ steps. This is because, like LTI systems, the dynamics of the latent variable are reflected by the extension of lag in the manifest behavior. As a result, the manifest behavior should have longer lag than $L$, hence it cannot uniquely define the behavior on the infinite time axis. Fortunately, we will see in subsequent sections that this is enough to carry out effective control design. From the behavior given in \eqref{eq:dissmani}, we can immediately obtain a simple and useful sufficient condition to verify that $w$ is free on the interval. This is given in the following corollary.
\begin{cor}\label{cor:dissfree}
	Let $\B_\bint{1,L}$ be described by \eqref{eq:dissmanifest}. If \revise{$Q\geq0$}, then $\hat{w}_1$ is free.
\end{cor}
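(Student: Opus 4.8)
The plan is to argue directly from the definition of a free variable. By definition, $\hat{w}_1$ is free on $[1,L]$ precisely when $\proj{w_1}{\B_\bint{1,L}}=\W_1^{\Zp{L}}$, that is, when every $\hat{w}_1\in\W_1^{\Zp{L}}$ can be completed to an admissible pair $(\hat{w}_1,\hat{w}_2)\in\B_\bint{1,L}$. Since $\B_\bint{1,L}$ is cut out by the single quadratic inequality \eqref{eq:dissmanifest}, it suffices to exhibit, for an arbitrary $\hat{w}_1$, one $\hat{w}_2$ making the inequality hold.

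First I would make the trivial but decisive choice $\hat{w}_2=0$. Substituting into the block quadratic form in \eqref{eq:dissmanifest} kills the cross term $2\hat{w}_1^\top S\hat{w}_2$ and the term $\hat{w}_2^\top R\hat{w}_2$, leaving exactly $\hat{w}_1^\top Q\hat{w}_1$. By the hypothesis $Q\geq0$ this is nonnegative for every $\hat{w}_1$, so $(\hat{w}_1,0)\in\B_\bint{1,L}$. As $\hat{w}_1$ was arbitrary, $\proj{w_1}{\B_\bint{1,L}}=\W_1^{\Zp{L}}$ and $\hat{w}_1$ is free.

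Alternatively, I would route the argument through Proposition \ref{prop:dissmani}, which already classifies the behavior of $w_1$. In the ``otherwise'' branch that behavior is $\W_1^{\Zp{L}}$ and freeness is immediate. In the remaining branch (when $R\leq0$ and $R_\perp S^\top\hat{w}=0$) the behavior is $\{\hat{w}_1 : \hat{w}_1^\top(Q-SR^\dagger S^\top)\hat{w}_1\geq0\}$; here I would note that $R\leq0$ forces $R^\dagger\leq0$, so $-SR^\dagger S^\top\geq0$, and together with $Q\geq0$ this gives $Q-SR^\dagger S^\top\geq0$, whence the defining inequality holds for every $\hat{w}_1$. Either route yields the claim.

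There is essentially no obstacle here: the content of the corollary is the observation that a positive-semidefinite $(1,1)$-block makes the degenerate completion $\hat{w}_2=0$ universally admissible, so the ``hard part'' is merely recognising that this completion suffices rather than hunting for a nontrivial $\hat{w}_2$. The only point requiring a line of care is the semidefiniteness bookkeeping $R\leq0\Rightarrow -SR^\dagger S^\top\geq0$ in the alternative proof, which the direct proof avoids entirely; I would therefore present the $\hat{w}_2=0$ argument as the proof and relegate the Proposition \ref{prop:dissmani} route to a remark.
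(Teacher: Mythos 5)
Your proposal is correct. The paper gives no explicit proof of this corollary; it is presented as an immediate consequence of Proposition \ref{prop:dissmani}, which is exactly your second route: the ``otherwise'' branch of \eqref{eq:dissmani} is all of $\W_1^{\Zp{L}}$, and in the $R\leq0$ branch the observation $R^\dagger\leq0$ gives $Q-SR^\dagger S^\top\geq0$, so the quadratic constraint on $\hat{w}_1$ is vacuous. Your primary argument --- completing any $\hat{w}_1$ with $\hat{w}_2=0$ so that \eqref{eq:dissmanifest} collapses to $\hat{w}_1^\top Q\hat{w}_1\geq0$ --- is a genuinely more elementary route: it works directly from the definition of freeness and the solution set of \eqref{eq:dissmanifest}, needs no case analysis on the signature of $R$, no pseudoinverse bookkeeping, and does not depend on Proposition \ref{prop:dissmani} at all. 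What the paper's route buys in exchange is consistency of exposition (the corollary is literally read off from the already-computed projected behavior \eqref{eq:dissmani}), but as a self-contained proof your $\hat{w}_2=0$ argument is the cleaner one, and your instinct to present it as the proof and demote the Proposition \ref{prop:dissmani} route to a remark is sound.
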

\revise{Note that this corollary is not limited to partition of variables but rather any partition (e.g., partition of past/future trajectories). This determination of free trajectories in QdF-based behaviors is one of the key ingredients to the solution of Problem \ref{prob:problem} (as shown in Theorem \ref{thm:controlledexist}).}

\section{Distributed Control Design}\label{sec:controldesgin}
In this section, we develop the DDPC procedure based on the previously collected trajectory set $\mathcal{W}_p$ and the prescribed desired supply rate $s_d=Q_{\Phi_d}(w_p)$ with coefficient matrix $\widetilde{\Phi}_d$. In the context of behavior, control is viewed as an interconnection to pose extra constraints on the possible outcomes of the uncontrolled system. In other words, the desired controlled behavior has to be already contained in the original system in order for it to be possible \cite{Willems:2002,Yan:2021}. Therefore, the key issue of effective control design lies in the verification of the feasibility of the desired controlled behavior. For stand-alone systems, the control algorithm will be ready once feasibility is verified. The design of distributed control algorithms, on the other hand, is much less straightforward. In this section, we first give conditions for feasibility, then we present the structure for the proposed DDPC algorithm. 
\subsection{Preamble}\label{sec:preamble}
Throughout this section, we will use the relationship $\B_{1\bint{1,L}}\cap\B_{2\bint{1,L}}=(\B_1\cap\B_2)_\bint{1,L}$ without further explanations. While this is \emph{not} true in general, it is true when all systems considered are time-invariant and \revise{when} $L$ is large enough (see Proposition \ref{prop:truncintersect}). An implicit assumption is therefore that the trajectories \revise{are sufficiently long}.

As claimed in Theorem \ref{thm:behaviorPW}, the combination of the local restricted behaviors parameterized by Hankel matrices of collectively persistently exciting measurements and the network interconnection behavior give the $L$-step interconnected behavior. Following the construction of $\F $ in \eqref{eq:Msystem}, the interconnected behavior $\B_p$ can be constructed as
\revise{\begin{equation}\label{eq:systemManifest}
    \hat{w}_{p_k}=\F_pz_{p_k}.
\end{equation}
\revise{Combined with the behaviors of network interconnections $\B_{pc}^\Pi$ and $\B_c^\Pi$, the largest possible set of trajectories (called the largest possible behavior)} of the controlled system is
\begin{equation}\label{eq:controlledLargest}
    \begin{bmatrix}
        \widetilde{\Pi}_{ic}\\ \widetilde{\Pi}_c
    \end{bmatrix}\hat{w}_{c_k}=\begin{bmatrix}
        \widetilde{\Pi}_{ip}\F_p\\ 0
    \end{bmatrix}z_{p_k}.
\end{equation}

For the clarity of presentation, subsequent discussions in this particular section are carried out with two regularity assumptions. Firstly, we focus on the dissipativity design in the interval $[k-L^-,k]$. This means that, within each horizon, we ensure the dissipativity condition for the $k$th step to guarantee the control performance for the step that is actually implemented while planning further steps into the future are  according to a cost function. As shown in Proposition \ref{prop:Hankeldiss}, this allows us to formulate the results in terms of a single set of linear matrix inequalities (LMIs). Secondly, let the LHS coefficient matrix of \eqref{eq:controlledLargest} be invertible. This case occurs when each trajectory of $w_p$ correspond to a single (but not necessarily unique) trajectory of $w_c$. In the general case, two additional situations may arise: If the LHS coefficient matrix is of full row rank, then multiple trajectories of $\hat{w}_c$ may correspond to the same $\hat{w}_p$, giving the controllers more freedom; If the LHS coefficient matrix is rank-deficient, then the controllers may have already excluded a set of trajectories of $\B_p$ (which may include ones that satisfy the global performance requirements) from the controlled behavior before dissipativity synthesis. The general results with both assumptions relaxed are presented in Appendix \ref{appx:controlGeneral}.

\subsection{Existence of Desired Controlled Behavior}\label{sec:existence}
Since all results in this section are within the interval $[k-L^-,k]$, we omit the dependency on $k$ unless specification is required. Under the regularity assumptions, all possible trajectories of $w_c$ in \eqref{eq:controlledLargest} can be represented as
\begin{equation}\label{eq:controllerManifest}
    \hat{w}_c=\begin{bmatrix}
        \widetilde{\Pi}_{ic}\\ \widetilde{\Pi}_c
    \end{bmatrix}^{-1}\begin{bmatrix}
        \widetilde{\Pi}_{ip}\F_p\\ 0
    \end{bmatrix}z_p\eqqcolon\F_cz_p,
\end{equation}
and that of $\hat{w}_c^j$ as the column span of the corresponding rows of $\F_c$, denoted as $\F_c^j$. Note that, different from this special case, in which \eqref{eq:systemManifest} and \eqref{eq:controllerManifest} share the same variable $z_p$, variables parameterizing the space of trajectories of $\hat{w}_p$ and that of $\hat{w}_c$ are different in general.

Within each horizon, for a given $z_p$ corresponding to a past trajectory, denoted as $\hat{w}_{c-}$ (with corresponding coefficient matrix $\F_{c-}$), and a current step of the manipulated variables, denoted as $w_m(k)$ (with corresponding coefficient matrix $\F_{mk}$), all solutions of $z_p$ that lead to the same $\hat{w}_{c-}$ and $w_m(k)$ can be computed as (See Lemma \ref{lem:Ax=b})
\begin{equation}\label{eq:zpCal}
    z_p'=\begin{bmatrix}
        \F_{c-}\\ \F_{mk}
    \end{bmatrix}^\dagger\begin{bmatrix}
        \hat{w}_{c-}\\ w_m(k)
    \end{bmatrix} + \begin{bmatrix}
        \F_{c-}\\ \F_{mk}
    \end{bmatrix}^\perp z_h,
\end{equation}
where $z_h$ is an arbitrary vector. In other words, all possible trajectories corresponding to the implementation of $w_m(k)$ in the trajectory
\begin{equation}\label{eq:controlImp}
    \hat{w}_c=\F_c\begin{bmatrix}
        \F_{c-}\\ \F_{mk}
    \end{bmatrix}^\dagger\begin{bmatrix}
        \F_{c-}\\ \F_{mk}
    \end{bmatrix}z_p\eqqcolon\F_{cc}z_p
\end{equation}
can be represented as
\begin{equation}\label{eq:wpControlled}
    \begin{split}
        \hat{w}_p&=\F_p\begin{bmatrix}
        \F_{c-}\\ \F_{mk}
    \end{bmatrix}^\dagger\begin{bmatrix}
        \F_{c-}\\ \F_{mk}
    \end{bmatrix}z_p+\F_p\begin{bmatrix}
        \F_{c-}\\ \F_{mk}
    \end{bmatrix}^\perp z_h\\
    &\eqqcolon\F_{pc}z_p+\F_hz_h.
    \end{split}
\end{equation}


The controlled behavior is decomposed into 2 components, with $\F_{pc}$ representing the past trajectories and current manipulated variable values, and $\F_h$ dealing with the behavior ``blocked'' by the network interconnection} For stand-alone systems, this concept reduces to the hidden behavior defined in \cite{Willems:2002}. \revise{With the presence of the blocked behavior, a trajectory of the controller network may correspond to a set of trajectories in the interconnected system that are indistinguishable from the viewpoint of controllers. A key issue in the verification of the existence of controlled behavior is hence to verify whether the desired controlled behavior exists that can accommodate the blocked behaviors. Furthermore, note that this decomposition does not restrict the behavior of $\hat{w}_p$ because $\cs(\begin{bmatrix}
    \F_{pc} & \F_c
\end{bmatrix})=\cs(\F_p)$.
\begin{exmp}[Example \ref{exmp:construction} continued]\label{exmp:hidden}
    For the system depicted in Fig. \ref{fig:intercon}, we assume that, while the current step disturbance $d(k)$ can be measured, the measurement is not timely enough to be used in the optimization of the $k$th step. In other words, $d(k)$ is \emph{unknown} in the $k$th optimization horizon. In such a case, predicted $d(k)$ (contained in $y_l^1(k)$) will most likely be different from the actual disturbance at the $k$th step, leading to a different set of outputs compared to the predicted ones, i.e., the current step disturbance is ``blocked'' by the network interconnection. Therefore, for disturbance attenuation, the goal should be to attenuate the effect of $d(k)$ even if its prediction is incorrect.
\end{exmp}
}


Let the desired control performance \revise{be} specified by the QdF $Q_{\Phi_d}(w_p)$. This condition can be integrated into the system as the virtual network interconnection with representation $s_d=Q_{\Phi_d}(w_p)$ and $V=Q_{\Psi}(w_p)$, where the coefficient matrix $\widetilde{\Psi}$ is to be constructed. The resulting largest possible desired controlled behavior, denoted as $\B_{pd}$, is of the form \eqref{eq:LTI+diss} with $\widetilde{\Phi}=\widetilde{\Phi}_d$. The aim of the control design is hence to construct the behaviors of the distributed controllers $\B_c^j$ such that $\B_{pc}\subset\B_{pd}$. \revise{Furthermore, the controlled behavior $\B_{pc}$ should be implementable through the distributed controllers with communication network interconnection $\B_c^\Pi$.} 

\revise{To ensure the feasibility of distributed optimization, we aim to find local conditions for each controller such that the satisfaction of these conditions implies the desired global performance condition described by $s_d$. Within each optimization horizon, each controller determines its current step action based on its past trajectory. Similar to the construction in \eqref{eq:wpControlled}, all possible trajectories of $\hat{w}_c^j$ based on its past can be expressed as
\begin{equation}\label{eq:controllerLocal}
    \hat{w}_c^j=\F_c^j\F_{c-}^{j\dagger}\F_{c-}^jz_p^j+\F_c^j\F_{c-}^{j\perp}z_m^j\eqqcolon\F_{cm}^jz_p^j+\F_{cf}^jz_m^j,
\end{equation}
where $\F_{c-}^j$ is defined similarly as $\F_{c-}$. We see that each choice of $z_p^j$ corresponds to a past trajectory while the choices of $z_m^j$ reveals different possible future outcomes by choosing different current step control actions. Similarly, for the entire network of distributed controllers, all possible trajectories based on a measured past are
\begin{equation}\label{eq:controllerGlobal}
    \hat{w}_c=\F_c\F_{c-}^{\dagger}\F_{c-}z_p+\F_c\F_{c-}^{\perp}z_m\eqqcolon\F_{cm}z_p+\F_{cf}z_m.
\end{equation}
If, for any $z_p$ and $\{z_p^j\}_{j=1}^{N_c}$, there exist $z_m$ and $\{z_m^j\}_{j=1}^{N_c}$ that lead to the desired controlled behavior, then this behavior is implementable by the distributed controllers. Note that, while any combination of $z_p$ and $z_m$ in \eqref{eq:controllerGlobal} leads to a corresponding value of $z_p$ in \eqref{eq:controlImp} and vise versa, the values of $z_p$ in these two equations are typically different for the same trajectory.
\begin{thm}\label{thm:controlledexist}
	For the setup given in Problem~\ref{prob:problem}, the desired controlled behavior exists and is implementable by the distributed controllers for the interval \revise{$[k-L^-,k]$} for any $k>L^-$, in which $L^->\max\{\lag{\B_p},\lag{\B_c^\Pi},\lag{\B_{pc}^\Pi},K_{\Phi_d}\}$, if $\mathrm{rank}\left(\F _{fk}\right)=\mathrm{w_{f}}$, where $\F_{fk}$ is the sub-matrix of $\F_p$ corresponding to $w_f(k)$, and there exist a $\widetilde{\Psi}\in\mathbb{S}^{L^-\mathrm{w_p}}$ and a set of matrices $\{\widetilde{\Phi}_c^j\}_{j=1}^{N_c}$, where $\widetilde{\Phi}_c^j\in\mathbb{S}^{(L^-+1)\mathrm{w}_\mathrm{c}^j}$, such that
	\begin{subequations}\label{eq:controlledexist}
	    \begin{align}
	        &\widetilde{\Psi}\geq0,\\
	        &\F_{cm}^{j\top}\widetilde{\Phi}_c^j\F_{cm}^{j}\geq0, \ \forall j\in\Zp{N_c},\label{eq:DOF}\\
	        &\begin{bmatrix}
	            \F_{pc}^\top\widetilde{\Delta}\F_{pc}-\F_{cc}^\top P^\top\widetilde{\Phi}_cP\F_{cc} &\F_{pc}^\top\widetilde{\Delta}\F_h\\
	            \F_h^\top\widetilde{\Delta}\F_{pc}& \F_h^\top\widetilde{\Delta}\F_h
	        \end{bmatrix}\geq0,\label{eq:minimum}\
	    \end{align}
	\end{subequations}
	where $\widetilde{\Delta}=\widehat{\Phi}_d-\nabla\widetilde{\Psi}$, $\widehat{\Phi}_d$ is defined similarly to $\widehat{\Phi}$ in Proposition \ref{prop:Hankeldiss}, $\widetilde{\Phi}_c=\diag\{\widetilde{\Phi}_c^j\}_{j=1}^{N_c}$, and $P$ is a permutation matrix such that $\col\{\hat{w}_c^j\}_{j=1}^{N_c}=P\hat{w}_c$.

\end{thm}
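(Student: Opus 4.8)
The plan is to read the three relations in \eqref{eq:controlledexist} as, respectively, the storage--nonnegativity hypothesis of Proposition \ref{prop:Hankeldiss}, a family of local feasibility (degrees-of-freedom) conditions, and a single global dissipativity inequality produced by an S-procedure argument, and then to invoke Proposition \ref{prop:Hankeldiss} to promote finite-interval dissipativity to the whole time axis. All computations are carried out on the re-indexed interval $[k-L^-,k]$, so $L^+=0$ and the dissipativity certificate takes the reduced form \eqref{eq:PhiLdissLMI} rather than the full weaved family \eqref{eq:PhiLdissSOS}; the hypothesis $L^->\max\{\lag{\B_p},\lag{\B_c^\Pi},\lag{\B_{pc}^\Pi},K_{\Phi_d}\}$ is exactly what licenses this reduction together with the finite-to-infinite extension of Theorem \ref{thm:behaviorPWinvariant}.

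First I would assign to each distributed controller the QdF behavior $\B_c^j=\{\hat{w}_c^j\mid \hat{w}_c^{j\top}\widetilde{\Phi}_c^j\hat{w}_c^j\geq0\}$ and settle implementability. Substituting the local parameterization \eqref{eq:controllerLocal}, the block of the induced QdF that acts on the past parameter $z_p^j$ is precisely $\F_{cm}^{j\top}\widetilde{\Phi}_c^j\F_{cm}^{j}$, so condition \eqref{eq:DOF} combined with Corollary \ref{cor:dissfree} (identifying $\hat{w}_1$ with the past component) guarantees that every measured past extends to a valid controller trajectory; hence each $\Sigma_c^j$ can always respond to what it observes. Because $\widetilde{\Phi}_c=\diag\{\widetilde{\Phi}_c^j\}_{j=1}^{N_c}$ is block diagonal, these are genuinely local conditions, and the interconnection of the resulting $\B_c^j$ through $\B_c^\Pi$ and $\B_{pc}^\Pi$ yields an implementable controlled behavior $\B_{pc}$.

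Second I would establish dissipativity. By the decomposition \eqref{eq:wpControlled} a controlled trajectory reads $\hat{w}_p=\F_{pc}z_p+\F_hz_h$, while the controller trajectory seen through the network is $\hat{w}_c=\F_{cc}z_p$ from \eqref{eq:controlImp}, depending on $z_p$ but not on the blocked parameter $z_h$. Expanding the quadratic form of \eqref{eq:minimum} against $\col(z_p,z_h)$ therefore yields exactly
\[
\hat{w}_p^\top\widetilde{\Delta}\hat{w}_p\;\geq\;\hat{w}_c^\top P^\top\widetilde{\Phi}_cP\hat{w}_c=\sum_{j=1}^{N_c}\hat{w}_c^{j\top}\widetilde{\Phi}_c^j\hat{w}_c^j
\]
for all $z_p,z_h$. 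Since every trajectory of $\B_{pc}$ makes each summand on the right nonnegative, the left-hand side is nonnegative, which is precisely the reduced certificate \eqref{eq:PhiLdissLMI} for $\B_{pc}$ with supply matrix $\widehat{\Phi}_d$ and storage $\widetilde{\Psi}$. Together with $\widetilde{\Psi}\geq0$ and the lower bound on $L^-$, Proposition \ref{prop:Hankeldiss} then gives dissipativity of $\B_{pc}$ on the entire time axis, i.e. $\B_{pc}\subset\B_{pd}$.

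Finally I would verify that $w_f$ stays free. The decomposition preserves the full column span, $\cs(\begin{bmatrix}\F_{pc}&\F_h\end{bmatrix})=\cs(\F_p)$, and both the controller constraints and the certificate \eqref{eq:minimum} restrict only the $z_p$-directions while holding for every $z_h$; since $w_f$ is exogenous and blocked, its current value is carried entirely by $\F_hz_h$, so the full-row-rank hypothesis $\rank(\F_{fk})=\mathrm{w_{f}}$ on $\F_p$ transfers to the $w_f(k)$-rows of $\F_h$ and lets $z_h$ assign $w_f(k)$ arbitrarily inside $\B_{pc}$. The step I expect to be the main obstacle is the bookkeeping in the dissipativity argument: one must check that the multiplier term is subtracted from the $z_p$ block alone (the controllers see $\F_{cc}z_p$ but not $z_h$), that $P$ correctly reorders $\hat{w}_c$ into $\col\{\hat{w}_c^j\}_{j=1}^{N_c}$, and that summing the local QdFs into $P^\top\widetilde{\Phi}_cP$ keeps the implication intact --- this is where the S-procedure is only sufficient and where the block-diagonal $\widetilde{\Phi}_c$ must be reconciled with the global $\widetilde{\Delta}$.
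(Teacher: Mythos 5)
Your core dissipativity step is correct and follows essentially the paper's route: the same decomposition $\hat{w}_p=\F_{pc}z_p+\F_hz_h$, the same reading of \eqref{eq:DOF} via Corollary \ref{cor:dissfree}, and the same use of \eqref{eq:minimum}. Your direct expansion of the quadratic form of \eqref{eq:minimum} against $\col(z_p,z_h)$, giving $\hat{w}_p^\top\widetilde{\Delta}\hat{w}_p\geq\hat{w}_c^\top P^\top\widetilde{\Phi}_cP\hat{w}_c$ pointwise, is in fact cleaner than the paper's two-stage argument (Schur complement to certify a lower bound over $z_h$, then nonnegativity of that bound); both are valid. However, there are two genuine gaps. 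First, the local-to-global feasibility step is missing: \eqref{eq:DOF} and Corollary \ref{cor:dissfree} only guarantee that each $\Sigma_c^j$ \emph{individually} can extend its observed past, but the right-hand side $\sum_j\hat{w}_c^{j\top}\widetilde{\Phi}_c^j\hat{w}_c^j\geq0$ must be achieved by a \emph{single} network-admissible $\hat{w}_c=\F_{cm}z_p+\F_{cf}z_m$, i.e.\ by one consistent choice of $(z_p,z_m)$, not by $N_c$ independent local choices. The paper closes this by observing $\cs(P\F_{cm})\subset\cs\bigl(\diag\{\F_{cm}^j\}_{j=1}^{N_c}\bigr)$, so that \eqref{eq:DOF} implies $\F_{cm}^\top P^\top\widetilde{\Phi}_cP\F_{cm}\geq0$, and then applies Corollary \ref{cor:dissfree} once more to the \emph{global} QdF in the $(z_p,z_m)$ coordinates to conclude that the global past parameter is free. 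Your sentence ``the interconnection of the resulting $\B_c^j$ \dots\ yields an implementable controlled behavior'' asserts this conclusion without the argument.

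Second, your justification of the freedom of $w_f$ relies on the claim that $w_f(k)$ ``is carried entirely by $\F_hz_h$,'' so that the rank condition on $\F_{fk}$ (a sub-matrix of $\F_p$) ``transfers to the $w_f(k)$-rows of $\F_h$.'' This is not justified in general: the decomposition $\F_p=\F_{pc}+\F_h$ splits \emph{every} row of $\F_p$ between the two terms, so the $w_f(k)$-rows of $\F_h$ alone need not have full row rank, and $w_f(k)$ generally depends on $z_p$ as well. The paper's argument is different and does not need this: $\rank(\F_{fk})=\mathrm{w_f}$ makes $w_f(k)$ free in the uncontrolled behavior, and the fact that the dissipativity certificate \eqref{eq:disscomplete} holds for \emph{all} $z_h$ means the control imposes no additional restriction along the blocked directions, so the freedom of $w_f(k)$ survives in the controlled behavior. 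You should replace the rank-transfer claim with this argument.
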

}
\begin{proof}
	See Appendix \ref{appx:proofthm:controlledexist}.
\end{proof}


\revise{In Theorem 12, \eqref{eq:minimum} implies the non-negativity of its second diagonal block, which means that the behavior blocked by the system-controller network interconnection is $(Q_{\Phi_d}(w_p),Q_{\Psi}(w_p))$-dissipative in order for the desired implementable controlled behavior to exist. This is a generalized version of the hidden behavior condition specified in \cite{Willems:2002} applied to interconnected systems. In addition, in order to ensure dissipativity of the controlled system, the controllers are required to satisfy a stronger condition than the required dissipativity (which only requires non-negativity of the first diagonal block of \eqref{eq:minimum}. This is because the controllers need to ensure $(Q_{\Phi_d}(w_p),Q_{\Psi}(w_p))$-dissipativity based on trajectories in \eqref{eq:wpControlled} with the lowest value of $Q_\Delta(w_p)$ for a given $z_p$ among all $z_h$, which can be achieved by enforcing a stricter dissipativity condition. The dissipativity of the blocked behavior implies the existence of a lower bound for $Q_\Delta(w_p)$ for all $z_h$.

The rank condition in Theorem \ref{thm:controlledexist} is only posed on the sub-matrix concerning the $k$th step of the chosen free variables in the controlled behavior ($w_f$). This ensures that $w_f(k)$ is free in the uncontrolled behavior, a prerequisite for $w_f$ to be free in the controlled behavior. On the other hand, \eqref{eq:DOF} ensures that, for any past trajectories, a $k$th step control action that is admissible through the controller network and renders the controlled system dissipative exists. Effectively, \eqref{eq:DOF} ensures that the past trajectories are free, other than that they must be trajectories of the behavior.} This requirement is due to the receding horizon of DDPC - \revise{to ensure recursive feasibility,} the controlled system cannot pose restriction on the trajectories that has already happened. \revise{As such, the entire trajectory $\hat{w}_f$ is free, achieving the second goal in Problem \ref{prob:problem}.}
\revise{\begin{rem} In Theorem~\ref{thm:controlledexist}, the  dissipativity condition \eqref{eq:controlledexist} is only required for the $k$th step of the trajectories in the controlled behavior for desired control performance/stability, which is actually implemented in every horizon.  In some cases, not ensuring the dissipativity condition for future steps may lead to unrealistic ``optimized'' control actions. Theorem \ref{thm:controlledexist} readily generalizes to including multiple future steps (See Theorem \ref{thm:controlledexistGeneral} in Appendix \ref{appx:controlGeneral}), which addresses this issue. However, such a generalization introduces conservatism because it requires dissipativity to be satisfied for the smallest $Q_\Delta(w_p)$ several steps into the future, a scenario that is unlikely to happen. A possible compromise could be to ensure dissipativity for the $k$th step (which is always necessary) and assume constant future trajectories of $w_f$ (which correspond to a subset of components in $w_c$) at the predicted $w_f(k)$, i.e., we predict $w_f(k)$ corresponding to the smallest $Q_\Delta(w_p)(k)$ and plan the future trajectory based on it. 
\end{rem}}

\subsection{DDPC Implementation Procedure}\label{sec:DDPC implement}
\revise{With the conditions in Theorem \ref{thm:controlledexist} satisfied, local conditions $Q_{\Phi_c^j}(w_c^j)(k)\geq0$ are readily usable to achieve the desired global performance. However, since the controllers optimize for their trajectories in parallel, a concensus condition is required so that the resulting trajectory is admissible through the controller network and the system-controller network, i.e., it should be an admissible trajectory for the behavior represented by \eqref{eq:controlledLargest}. Following Lemma \ref{lem:Ax=b}, a consensus condition that guarantees existence of $z_{p_k}$, hence $\hat{w}_{p_k}$, is that 
\begin{equation}\label{eq:consesus}
    \begin{bmatrix}
        (\widetilde{\Pi}_{ip}\F_p)_\perp \widetilde{\Pi}_{ic}\\ \widetilde{\Pi}_c
    \end{bmatrix}\bar{w}_{c_k}=0,
\end{equation}
where $\bar{w}_{c_k}$ contains known past trajectory $\hat{w}_{c\bint{k-L^-,k-1}}$ and the to-be-determined future trajectory $\hat{w}_{c\bint{k,k+L^+}}$. The online optimization problem for the interval $[k-L^-,k+L^+]$ is therefore formulated as
\begin{equation}\label{eq:DDPC}
	\begin{split}
		\min_{z_{p_k}^1,z_{p_k}^2,\ldots,z_{p_k}^{N_c}} \ &\sum_{j=1}^{N_c}\sum_{m=0}^{L^+}{\revise{J^j}}\left(w_{c}^j(k+m)\right)\\
		\mathrm{s.t.} \ & \bar{w}_{c_k}^j=\F_c^jz_{p_k}^j,\ j\in\Zp{N_c},\\
		& \bar{w}_{c_0}^{j\top}\Phi_c^j\bar{w}_{c_0}^j\geq0, \ j\in\Zp{N_c},\\
		& \eqref{eq:consesus},
	\end{split}  
\end{equation}
where $\bar{w}_{c_k}^j$ is defined similarly to $\bar{w}_{c_k}$ and $\bar{w}_{c_0}^j$ contains past trajectory and to-be-determined $w_c^j(k)$. At the $k$th horizon, the distributed controllers optimize for the first two constraints in parallel and perform consensus to decide a trajectory $\hat{w}_c$. Each controller then implements $w_m^j(k)$.} As illustrated in Section \ref{sec:existence}, recursive feasibility of \eqref{eq:DDPC} is guaranteed upon satisfaction of \eqref{eq:DOF}. The online procedure is a standard distributed optimization problem with coupling consensus constraints, which can be solved using methods such as the classic ADMM \cite{Boyd:2011} or the recently developed ALADIN-$\alpha$ \cite{Engelmann:2020} and ELLADA \cite{Tang:2021}. \revise{The complexity of the local constraints depend on the number of variables in each subsystem and the value of $L^+$. However, since they are optimized in parallel, the complexity is not expected to increase rapidly as the scale of the problem increases. The consensus constraint, while higher in dimensions, is a linear constraint, and the optimization speed depends on the algorithms used.
\begin{rem} The behavioral system approach offers some benefit over model-based control with system identification for distributed control. An input-output model, by definition, describes the relationship between input and output for all possible input signals, i.e., the complete behavior of a system for the unrestricted interval $\mathbb{Z}^{\geq0}$. As such, a model of a subsystem interconnected with other subsystems viewed from its input-output variables will be complex, potentially including the dynamics of all subsystems, leading to a complex controller, e.g., requiring long optimization horizons (the worst case would be the sum of the lags of all subsystems). The behavioral approach, on the other hand, views the dynamics of a subsystem as the projection of the interconnected behavior on the restricted interval (e.g., $[1,L]$) onto the space of its variables. As such, as long as the interval is longer than the largest lag among all subsystems, the control design can be carried out entirely within the restricted interval.
\end{rem}}

\revise{\section{Numerical Example}\label{sec:example}
We continue the example depicted in Fig. \ref{fig:intercon} with transfer matrices of the subsystems as
\begin{subequations}\setlength{\arraycolsep}{1.1pt}
	\begin{align}
		Y^1&=\left[\begin{array}{cc|cc|c}
			\frac{1}{z+0.5}&0&\frac{0.2}{z+0.5}&0&\frac{0.8}{z+0.5}\\0&\frac{1}{z-0.1}&\frac{0.1}{z-0.1}&\frac{0.5}{z-0.1}&\frac{0.6}{z-0.1}
		\end{array}\right]\begin{bmatrix}
			U_p^1 \\U_c^1\\ D
		\end{bmatrix},\\
		Y^2&=\begin{bmatrix}
			0&-\frac{0.5}{z+0.3}\\\frac{z-0.34}{4z^2+1.2z-0.4}&-\frac{0.2}{z-0.2}
		\end{bmatrix}U_p^2,\\
		Y^3&=\left[\begin{array}{cccc|cc}
			\frac{1}{z-0.3}&0&\frac{0.2}{z-0.3}&\frac{0.1}{z-0.3}&\frac{0.5}{z-0.3}&0\\
			\frac{0.1}{z-0.3}&-\frac{0.54}{z-0.6}&\frac{z-0.34}{6.85z^2+6.17z+1.23}&\frac{0.01}{z-0.3}&\frac{0.05}{z-0.3}&\frac{0.9}{z-0.6}
		\end{array}\right]\begin{bmatrix}
		    U_p^3\\ U_c^3
		\end{bmatrix},\\
		Y^4&=\left[\begin{array}{cc|cc}
			\frac{0.1}{z-0.1}&\frac{0.2}{z-0.1}&\frac{1}{z-0.1}&0\\0&-\frac{0.5}{z+0.35}&0&\frac{1}{z+0.35}
		\end{array}\right]\begin{bmatrix}
			U_p^4 \\U_c^4
		\end{bmatrix}.
	\end{align}
\end{subequations}
Note that these models are only used to generate the data to construct Hankel matrices. They are not involved in either offline design or online implementation.

The distributed controllers aim to attenuate the effect of disturbance $d$ to $y^3$ and $y^4$ while maintaining the stability of the outputs of the other subsystems. Specifically, we aim to attenuate the effect of disturbance to $y^3$ and $y^4$ by 20 times at steady state and the attenuation is achieved within approximately 25 steps. In discrete time, this performance requirement can be constructed in terms of a frequency-weighted $\pazocal{H}_\infty$ condition $\|WT_{yd}\|_\infty\leq1$, where $T_{yd}$ is the system from $d$ to $\col(y^3,y^4)$, $W(\sigma)=\frac{3.625\sigma}{\sigma-0.8187}I_4$ is a weighting function that specifies the performance requirements. As illustrated in \cite{Yan:2019}, the weighting function readily transforms into a QdF supply rate that automatically guarantees the stability of the rest of the variables. We omit the details of constructing the QdF due to the page limit. 

As discussed in Example \ref{exmp:hidden}, we assume that the measurement of $d(k)$ is not included during online optimization. In each horizon, we choose $L^-=4$, which is larger than the lags of all subsystems and the network interconnection. Since the main purpose of this example is to demonstrate the use of dissipativity, we focus on the interval $[k-4,k]$ for the $k$th horizon, i.e., $L^+=0$. Online optimization in each horizon determines $y_l^j(k)$, $y_r^j(k)$, $u_l^j(k)$ and $u_r^j(k)$ based on their past measured trajectories and implement $y_l^j(k)$ (i.e., $u_c^i(k)$ from the perspective of the interconnected system). The stage cost in \eqref{eq:DDPC} is chosen as
\begin{equation}
    J^j=\|y_l^j(k)\|^2,
\end{equation}
which aims to minimize the control effort and help avoid the potential controller instability problem as illustrated in \cite{Markovsky:2008}.

The local constraints are found by solving the LMIs in Theorem \ref{thm:controlledexist}. Online DDPC optimization \eqref{eq:DDPC} is carried out with disturbance profile shown in the first plot of Fig. \ref{fig:plot}. As is shown in the plots of $y^3$ and $y^4$, the effect of disturbance is attenuated within the required number of steps with every change in the disturbance. Furthermore, stability of $y^1$ and $y^2$ are guaranteed with the presence of disturbance.
\begin{figure}
		\centering
		\includegraphics[width=0.8\linewidth]{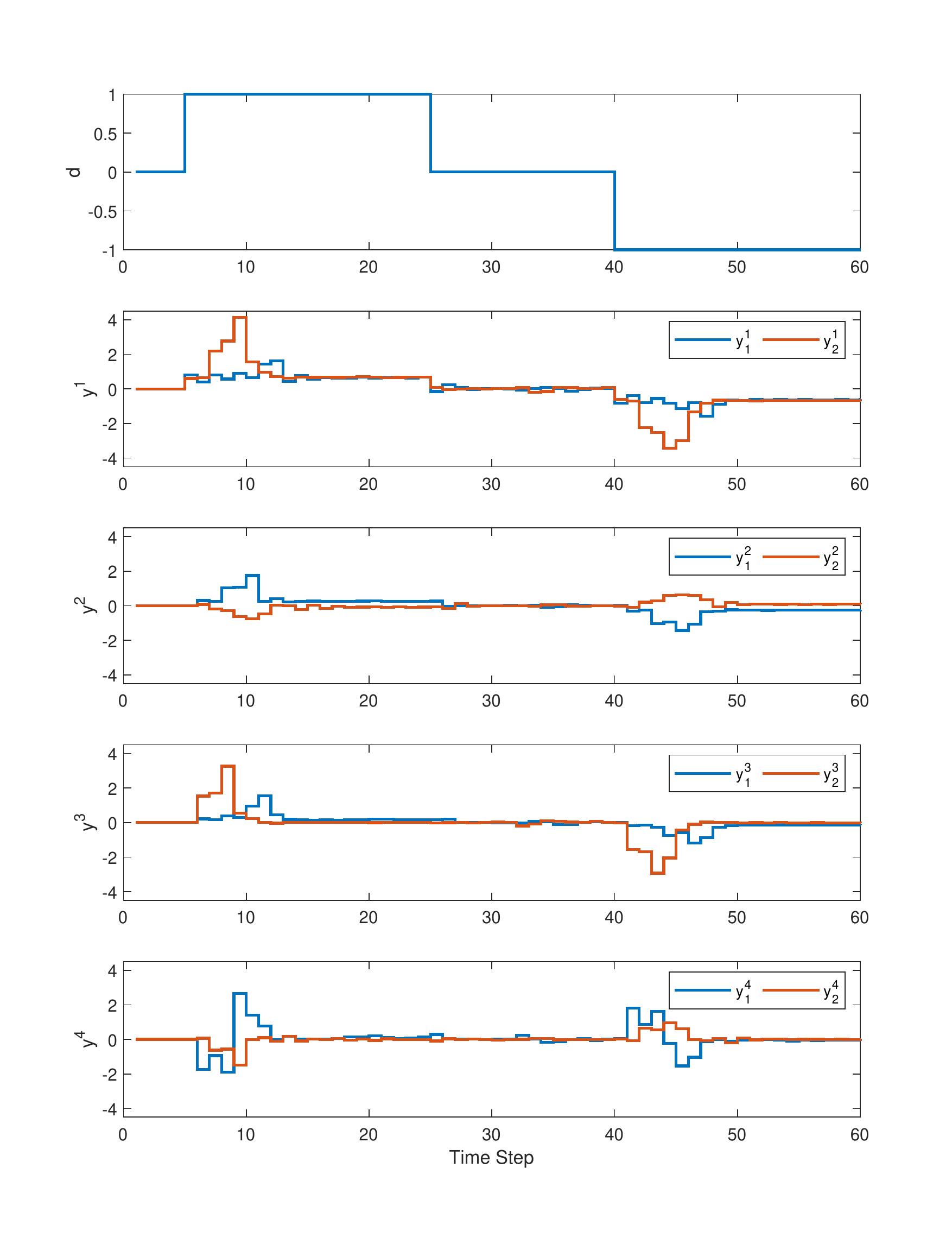}       
		\caption{Disturbance Profile and Controlled Outcome}
		\label{fig:plot}
	\end{figure}}
\section{Conclusion}\label{sec:conclusion}
In this paper, we have developed a structure for Distributed Data-driven Predictive Control of interconnected systems with LTI subsystems. The entire design procedure is carried out in the behavioral framework, which has been shown to facilitate data-driven control design. We have shown that, like the case with infinite time axis, the complete finite-length interconnected behavior can be obtained from that of the subsystems and of the network interconnection, provided that the number of steps is larger than the maximum lag among all subsystems. Furthermore, we have given some treatments to dissipativity from the behavioral perspective and have shown that they can be analyzed and handled like any other types of dynamical systems. \revise{Based on these components, a set of conditions that verifies the existence of desired controlled behavior has been developed, leading to a DDPC algorithm that achieves the global design goal through distributed optimization.}

Future research includes the extension of the above results to handle the noise in measurements. This is a challenging but important issue for data-driven control because it affects both offline design (similar to a model-plant mismatch problem) and online implementation. While unstructured low-rank approximation is a promising way to approximately parameterize the trajectory space \cite{Markovsky:2012}, how dissipativity can be modified to compensate for the error of approximation in the subsystems that may be exacerbated by the complex interconnection is an interesting question worth further investigation.

\appendices
\revise{\section{General Results for Dissipativity Synthesis}\label{appx:controlGeneral}
This appendix presents the general results to Section \ref{sec:controldesgin} with the assumptions in Section \ref{sec:preamble} relaxed. With the general case, the largest behavior for the controllers to choose from needs to be constructed by viewing the behavior of the entire controlled system (i.e., system depicted in Fig. \ref{fig:fig-plantwide}) as a latent variable behavior with manifest variable $w_c$ and latent variable $w_p$ (or $z_p$ to be more direct) and finding its manifest behavior. The manifest behavior can be constructed through the following proposition, whose proof is omitted because it is a direct consequence from the definition of manifest behavior given by \eqref{eq:manifestbehavior} and the application of Lemma \ref{lem:Ax=b}.
\begin{prop}\label{prop:manifestLstep}
	Let $\Sigma^{full}\in\LTI{\mathrm{w+l}}$ with $\B^{full}_\bint{1,L}$ represented by  $\widetilde{R}\hat{w}=\widetilde{M}\hat{\ell}$, then 
	\begin{equation}
	    \B_\bint{1,L}=\cs\left((\widetilde{M}_\perp\widetilde{R})^\perp\right).
	\end{equation}
\end{prop}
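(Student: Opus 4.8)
The plan is to convert the definition of the manifest behavior into a solvability statement for a linear system in the latent trajectory $\hat{\ell}$, and then read off the resulting solution set directly from the Moore--Penrose projectors already encoded in the notation $\widetilde{M}_\perp = I - \widetilde{M}\widetilde{M}^\dagger$ and $A^\perp = I - A^\dagger A$. No new machinery is needed beyond \eqref{eq:manifestbehavior} and Lemma \ref{lem:Ax=b}.

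First I would apply the definition \eqref{eq:manifestbehavior} restricted to the interval $[1,L]$: a trajectory $\hat{w}$ belongs to $\B_\bint{1,L}$ precisely when there exists an $\hat{\ell}$ with $\widetilde{M}\hat{\ell} = \widetilde{R}\hat{w}$, i.e.\ when the linear system $\widetilde{M}\hat{\ell} = \widetilde{R}\hat{w}$ in the unknown $\hat{\ell}$ is solvable. Lemma \ref{lem:Ax=b} supplies the solvability criterion: a system $Ax = b$ has a solution if and only if $A_\perp b = 0$, equivalently $b \in \cs(A)$. Taking $A = \widetilde{M}$ and $b = \widetilde{R}\hat{w}$ yields the characterization $\hat{w} \in \B_\bint{1,L} \iff \widetilde{M}_\perp \widetilde{R}\hat{w} = 0$, so that $\B_\bint{1,L} = \ker(\widetilde{M}_\perp\widetilde{R})$.

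It then remains to identify this kernel with a column span. Writing $A := \widetilde{M}_\perp\widetilde{R}$, the matrix $A^\perp = I - A^\dagger A$ is the orthogonal projector onto $\ker(A)$, because $A^\dagger A$ is the orthogonal projector onto the row space $(\ker A)^\perp$. Hence $\cs(A^\perp) = \ker(A)$, and combining with the previous step gives $\B_\bint{1,L} = \ker(\widetilde{M}_\perp\widetilde{R}) = \cs\big((\widetilde{M}_\perp\widetilde{R})^\perp\big)$, as claimed. There is no genuine obstacle here; the argument is bookkeeping with the two standard projector identities $\widetilde{M}_\perp v = 0 \iff v \in \cs(\widetilde{M})$ and $\cs(A^\perp) = \ker(A)$. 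The only point deserving care is to confirm these two facts straight from the definitions of $\cdot_\perp$ and $\cdot^\perp$ — namely that $\widetilde{M}\widetilde{M}^\dagger$ and $A^\dagger A$ are the orthogonal projectors onto $\cs(\widetilde{M})$ and onto the row space of $A$, respectively — which is exactly the content packaged into Lemma \ref{lem:Ax=b}, and this is why the authors dismiss the proof as a direct consequence of \eqref{eq:manifestbehavior} and that lemma.
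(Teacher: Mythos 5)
Your proof is correct and follows exactly the route the paper indicates when it omits the proof as ``a direct consequence from the definition of manifest behavior given by \eqref{eq:manifestbehavior} and the application of Lemma \ref{lem:Ax=b}'': solvability of $\widetilde{M}\hat{\ell}=\widetilde{R}\hat{w}$ via $\widetilde{M}_\perp\widetilde{R}\hat{w}=0$, then identifying $\ker(\widetilde{M}_\perp\widetilde{R})$ with $\cs\bigl((\widetilde{M}_\perp\widetilde{R})^\perp\bigr)$. No discrepancies to report.
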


From Proposition \ref{prop:manifestLstep}, all possible trajectories for $w_c$ before dissipativity synthesis are given by
\begin{equation}\label{eq:controllerManifestGeneral}
    \hat{w}_c=\begin{bmatrix}
        (\widetilde{\Pi}_{ip}\F_p)_\perp \widetilde{\Pi}_{ic}\\ \widetilde{\Pi}_c
    \end{bmatrix}^\perp z_c\eqqcolon\F_cz_c.
\end{equation}
While we use the same notations as those in Section \ref{sec:controldesgin} for various behaviors in this appendix, the coefficient matrices are not the same. Note that, different from the special case, in which $z_p$ in \eqref{eq:controllerManifest} shares the same variable as \eqref{eq:systemManifest}, the variable $z_c$ \eqref{eq:controllerManifestGeneral} is generally different from $z_p$ for a corresponding pair of trajectories $(\hat{w}_p,\hat{w}_c)$ admissible through the network.

Following similar rationale in the construction of \eqref{eq:controlImp}, an implemented trajectory can be represented by $\hat{w}_c=\F_{cc}z_c$. However, to obtain corresponding trajectories of $\hat{w}_p$, the implemented trajectory needs to be integrated directly into \eqref{eq:controlledLargest} to obtain the corresponding values of $z_p$. By Lemma \ref{lem:Ax=b}, $z_c$ constructed in this way definitely has solutions for $z_p$, and all of its solutions can be represented as
\begin{equation}
    z_p=(\widetilde{\Pi}_{ip}\F_p)^\dagger\widetilde{\Pi}_{ic}\F_{cc}z_c+(\widetilde{\Pi}_{ip}\F_p)^\perp z_h,
\end{equation}
leading to a general representation of all trajectories of $\hat{w}_p$ corresponding to $z_c$ as
\begin{equation}\label{eq:wpControlledGeneral}
\begin{split}
    \hat{w}_p&=\F_p(\widetilde{\Pi}_{ip}\F_p)^\dagger\widetilde{\Pi}_{ic}\F_{cc}z_c+\F_p(\widetilde{\Pi}_{ip}\F_p)^\perp z_h\\
    &\eqqcolon \F_{pc}z_c+\F_hz_h.
\end{split}
\end{equation}
In this general version, it is clear that the second component $\F_h$ represents the behavior ``blocked'' by the network interconnection because $\widetilde{\Pi}_{ip}\F_h=0$. Furthermore, while \eqref{eq:wpControlled} does not eliminate any trajectories from $\B_p$, there may be eliminations in \eqref{eq:wpControlledGeneral} because $\F_c$ in \eqref{eq:controllerManifestGeneral} may only parameterize a subset of trajectories represented by the column span of $\F_c$ in \eqref{eq:controllerManifest}. 

Suppose the aim is to ensure dissipativity up to step $k+L_p^+$ ($L_p^+\leq L^+)$. Similar to \eqref{eq:PhiLdissSOS}, we denote the segment of the coefficient matrix corresponding to trajectories in the interval $[k-L^-+l,k+l]$ with an additional subscript $l$, e.g., $\F_{c_l}$, $\F_{c_l}^j$, etc. The general result to Theorem \ref{thm:controlledexist} is then as follows.
\begin{thm}\label{thm:controlledexistGeneral}
    The desired implementable controlled behavior exists for the interval $[k-L^-,k+L_p^+]$ for any $k>L^-$, where $L^->\max\{\lag{\B_p},\lag{\B_c^\Pi},\lag{\B_{pc}^\Pi},K_{\Phi_d}\}$, if $\mathrm{rank}\left(\F _{fp}\right)=(L_p^++1)\mathrm{w_{f}}$, where $\F_{fp}$ is the sub-matrix of $\begin{bmatrix}
        \F_{pc} & \F_h
    \end{bmatrix}$ corresponding to $\hat{w}_{f\bint{k,k+L_p^+}}$, and there exist a $\widetilde{\Psi}\in\mathbb{S}^{L^-\mathrm{w_p}}$ and a set of matrices $\{\widetilde{\Phi}_c^j\}_{j=1}^{N_c}$, where $\widetilde{\Phi}_c^j\in\mathbb{S}^{(L^-+1)\mathrm{w}_\mathrm{c}^j}$, such that $\widetilde{\Psi}\geq0$, and , for all $l\in\Znn{L_p^+}$,
	\begin{subequations}\setlength{\arraycolsep}{2pt}\label{eq:controlledexistGeneral}
	    \begin{align}
	        &z_c^{j\top}\F_{cm_l}^{j\top}\widetilde{\Phi}_c^j\F_{cm_l}^{j}z_c^j\geq0, \ \forall j\in\Zp{N_c},\label{eq:DOFGeneral}\\
	        &z_{s_l}^\top{\footnotesize\begin{bmatrix}
	            z_c^\top(\F_{pc_l}^\top\widetilde{\Delta}\F_{pc_l}-\F_{cc_l}^\top P^\top\widetilde{\Phi}_cP\F_{cc_l})z_c &z_c^\top\F_{pc_l}^\top\widetilde{\Delta}\F_{h_l}\\
	            \F_{h_l}^\top\widetilde{\Delta}\F_{pc_l}z_c& \F_{h_l}^\top\widetilde{\Delta}\F_{h_l}
	        \end{bmatrix}}z_{s_l}\geq0,\label{eq:minimumGeneral}\
	    \end{align}
	\end{subequations}
	for all $\{z_c^j\}_{j=1}^{N_c}$, $z_c$ and $\{z_{s_l}\}_{l=0}^{L_p^+}$, where $\widetilde{\Delta}$ and $\widetilde{\Phi}_c$ are defined in Theorem \ref{thm:controlledexist}, $P$ is a matrix such that $\col\left\{\hat{w}_{c\bint{k-L^-,k}}^j\right\}_{j=1}^{N_c}=P\hat{w}_{c\bint{k-L^-,k}}$ and $\{z_{s_l}\}_{l=0}^{L_p^+}$ is a set of ($l$-dependant) auxiliary variables.
\end{thm}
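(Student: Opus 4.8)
The plan is to mirror the proof of Theorem~\ref{thm:controlledexist}, replacing the special-case decomposition \eqref{eq:wpControlled} by the general decomposition \eqref{eq:wpControlledGeneral} and carrying the window index $l$ throughout so that dissipativity is enforced on every shifted interval $[k-L^-+l,k+l]$, $l\in\Znn{L_p^+}$. First I would invoke Proposition~\ref{prop:Hankeldiss}: since $L^->\max\{\lag{\B_p},\lag{\B_c^\Pi},\lag{\B_{pc}^\Pi},K_{\Phi_d}\}$, the condition $\widetilde{\Psi}\geq0$ together with window-wise non-negativity of the supply-minus-storage QdF $Q_\Delta(w_p)\coloneqq Q_{\Phi_d}(w_p)-Q_{\nabla\Psi}(w_p)$ already implies $(Q_{\Phi_d}(w_p),Q_\Psi(w_p))$-dissipativity of the controlled behavior on the whole time axis. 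Hence it suffices to show that \eqref{eq:DOFGeneral}--\eqref{eq:minimumGeneral} force $Q_\Delta(w_p)\geq0$ on each window for every admissible controlled trajectory, that the resulting behavior is implementable by the distributed controllers, and that $\hat{w}_f$ is free. Implementability will follow from the block-diagonal structure $\widetilde{\Phi}_c=\diag\{\widetilde{\Phi}_c^j\}_{j=1}^{N_c}$, which renders each local condition checkable by controller $j$ alone.

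Next I would substitute \eqref{eq:wpControlledGeneral} restricted to window $l$, namely $\hat{w}_p=\F_{pc_l}z_c+\F_{h_l}z_h$, into $Q_\Delta(w_p)$, obtaining the quadratic form $z_c^\top\F_{pc_l}^\top\widetilde{\Delta}\F_{pc_l}z_c+2z_c^\top\F_{pc_l}^\top\widetilde{\Delta}\F_{h_l}z_h+z_h^\top\F_{h_l}^\top\widetilde{\Delta}\F_{h_l}z_h$, in which $z_h$ parameterizes exactly the blocked behavior the controllers cannot influence (recall $\widetilde{\Pi}_{ip}\F_h=0$). The distributed controllers enforce the local conditions $Q_{\Phi_c^j}(w_c^j)\geq0$, whose sum equals $z_c^\top\F_{cc_l}^\top P^\top\widetilde{\Phi}_cP\F_{cc_l}z_c$ after substituting the implemented trajectory $\hat{w}_c=\F_{cc}z_c$ and collecting blocks through the permutation $P$. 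Setting the homogenising scalar component of the auxiliary variable $z_{s_l}$ to unity, \eqref{eq:minimumGeneral} reads precisely $Q_\Delta(w_p)\geq\sum_{j}Q_{\Phi_c^j}(w_c^j)$ for every blocked $z_h$; thus whenever the controllers meet their local conditions, global dissipativity holds uniformly over the blocked behavior, which is the worst-case argument indicated after Theorem~\ref{thm:controlledexist}.

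The free-variable requirement I would treat separately. The rank condition $\rank(\F_{fp})=(L_p^++1)\mathrm{w_f}$ makes the map $(z_c,z_h)\mapsto\hat{w}_{f\bint{k,k+L_p^+}}$ surjective, so $\hat{w}_f$ is free in the uncontrolled behavior. Condition \eqref{eq:DOFGeneral}, applied to the past-determined component $\F_{cm_l}^jz_c^j$ of each $w_c^j$, guarantees that for any measured past there exists a current action admissible through $\B_c^\Pi$ and $\B_{pc}^\Pi$ that keeps dissipativity feasible --- the recursive-feasibility/freeness-of-the-past statement. Appealing to Corollary~\ref{cor:dissfree} for the $w_f$-partition of the controlled QdF behavior then upgrades freeness of the uncontrolled behavior to freeness of $\hat{w}_f$ after control, delivering both goals of Problem~\ref{prob:problem}.

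The hard part will be the homogenisation/S-procedure step. I must verify that requiring the matrix in \eqref{eq:minimumGeneral} to be positive semidefinite for all $z_c$ and all auxiliary $z_{s_l}$ is equivalent to the worst-case inequality $Q_\Delta(w_p)\geq0$ over all blocked $z_h$, given only the local conditions. The delicacy is that the top-left (``$A$'') block of that matrix still depends on $z_c$, so the argument is a $z_c$-parametrised family of Schur-complement conditions rather than a single LMI; I would invoke the standard fact that $A+2b^\top z_h+z_h^\top Cz_h\geq0$ for all $z_h$ is equivalent to positive semidefiniteness of $\left[\begin{smallmatrix} A & b^\top\\ b & C\end{smallmatrix}\right]$ (equivalently $C\geq0$, $b\in\cs(C)$, and the generalised Schur complement $A-b^\top C^\dagger b\geq0$), and check that it composes correctly across the $z_c$-parametrisation and across the $L_p^++1$ windows.
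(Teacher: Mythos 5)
Your proposal follows essentially the same route as the paper: the paper's own proof of this theorem is only a sketch that defers to the proof of Theorem~\ref{thm:controlledexist}, and your reconstruction --- worst case over the blocked component $\F_{h_l}z_h$, homogenisation of the resulting $z_c$-parametrised quadratic via $z_{s_l}$ and the generalised Schur complement, local-to-global transfer through $P$ and the block-diagonal $\widetilde{\Phi}_c$, and freeness of $\hat{w}_f$ from the rank condition plus Corollary~\ref{cor:dissfree} --- matches that argument, including the need to enforce the conditions on all $L_p^++1$ shifted windows with a common $z_c$ (the paper's SOS formulation). The only loose phrasing is attributing whole-axis dissipativity directly to Proposition~\ref{prop:Hankeldiss} when only $L_p^+\leq L^+$ windows are certified; the theorem itself claims existence only on the interval $[k-L^-,k+L_p^+]$, with the extension to the full axis handled by the receding-horizon overlap.
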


The proof of Theorem \ref{thm:controlledexistGeneral} follows similar rationale as that of Theorem \ref{thm:controlledexist}, so we only highlight the main differences. Firstly, the rank condition is generalized because, as explained above, \eqref{eq:wpControlledGeneral} may only be a subset of $\cs(\F_p)$. The rank condition guarantees that, while there may be restrictions on $\hat{w}_p$, no restrictions can be posed on $\hat{w}_{f\bint{k,k+L_p^+}}$. Furthermore, the aim is still to achieve \eqref{eq:disscomplete}, but for intervals $[k-L^-+l,k+l]$ for all $l\in\Znn{L_p^+}$ with the \emph{same} $z_p$, as required by Proposition \ref{prop:Hankeldiss}. Similarly, conditions in \eqref{eq:controlledexistGeneral} are generalized not only with the general behavior \eqref{eq:controllerManifestGeneral}, but also across multiple segments with the same set of variables. With the help of the auxiliary variables $\{z_{s_l}\}_{l=0}^{L_p^+}$ (and another variable applied similarly to $\widetilde{\Psi}$), conditions in Theorem \ref{thm:controlledexistGeneral} are transformed into a sum-of-square (SOS) problem, which is solvable using existing numerical toolboxes such as YALMIP \cite{Lofberg:2004}. The online optimization problem \eqref{eq:DDPC} can be modified accordingly as
\begin{equation}\label{eq:DDPCGeneral}
	\begin{split}
		\min_{z_{c_k}^1,z_{c_k}^2,\ldots,z_{c_k}^{N_c}} \ &\sum_{j=1}^{N_c}\sum_{m=0}^{L^+}{\revise{J^j}}\left(w_{c}^j(k+m)\right)\\
		\mathrm{s.t.} \ & \bar{w}_{c_k}^j=\F_c^jz_{p_k}^j,\ j\in\Zp{N_c},\\
		& \bar{w}_{c_l}^{j\top}\Phi_c^j\bar{w}_{c_l}^j\geq0, \ j\in\Zp{N_c}, \ l\in\Znn{L_p^+}\\
		& \eqref{eq:consesus},
	\end{split}  
\end{equation}
where $\bar{w}_{c_l}^j$ contains the past trajectory $\hat{w}_{c\bint{k-L^-+l,k-1}}^j$ and the to-be-determined future trajectory $\hat{w}_{c\bint{k,k-l}}^j$. 
}
\section{Proof of Results}
Many of the proofs rely on the following result:
\begin{lem}[\cite{Skelton:1997}]\label{lem:Ax=b}
	The equation $Ax=b$, where $A\in\R^{\bullet\times\mathrm{n}}$, $x\in\R^\mathrm{n}$ and $b\in\R^{\bullet}$ has solutions for $x$ if and only if $A_\perp b=0$. In such a case, all solutions are given by $x=A^\dagger b+A^{\perp}z$, where $z\in \R^\mathrm{n}$ is arbitrary.
\end{lem}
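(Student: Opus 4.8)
The plan is to prove this as a direct consequence of the defining properties of the Moore--Penrose inverse, so the argument will be entirely self-contained modulo those identities. The two facts I would lean on are $AA^\dagger A = A$ and the symmetry (hence orthogonal-projector nature) of the matrices $AA^\dagger$ and $A^\dagger A$. Together these show that $AA^\dagger$ is the orthogonal projector onto $\cs(A)$ and that $A_\perp = I - AA^\dagger$ projects onto $\cs(A)^\perp$, while $A^\dagger A$ projects onto the row space and $A^\perp = I - A^\dagger A$ projects onto $\ker(A)$. I would state these at the outset and then handle the existence criterion and the description of the solution set separately.

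For the existence equivalence, necessity is immediate: if $Ax = b$ for some $x$, then $A_\perp b = (I - AA^\dagger)Ax = Ax - AA^\dagger A x = 0$ by $AA^\dagger A = A$. For sufficiency, the condition $A_\perp b = 0$ rearranges to $b = AA^\dagger b = A(A^\dagger b)$, which exhibits $x = A^\dagger b$ as an explicit solution. This same computation establishes that $A_\perp b = 0$ is equivalent to $b \in \cs(A)$, which is the geometric content of the criterion.

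For the solution set, I would first verify that every $x = A^\dagger b + A^\perp z$ solves the equation: since $A A^\perp = A(I - A^\dagger A) = A - AA^\dagger A = 0$, we get $Ax = AA^\dagger b + AA^\perp z = b + 0 = b$, where $AA^\dagger b = b$ by the sufficiency step. Conversely, given any solution $x$, I would use $A^\dagger b = A^\dagger A x$ to write $x - A^\dagger b = (I - A^\dagger A)x = A^\perp x$, so that $x = A^\dagger b + A^\perp x$ already has the claimed form with $z = x$.

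This result is classical, so I do not anticipate a genuine obstacle; the only point requiring a little care is the completeness direction, namely confirming that the parameterization $A^\dagger b + A^\perp z$ captures \emph{every} solution rather than merely a particular one. This rests on recognizing that $A^\perp z$ ranges over exactly $\ker(A)$ as $z$ varies, which follows from the projector identities above and guarantees that the homogeneous part of the general solution is neither too large nor too small.
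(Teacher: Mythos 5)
Your proof is correct. Note that the paper itself gives no proof of this lemma at all---it imports the result by citation from Skelton et al.\ (1997) and uses it as a black box in the appendix---so there is no in-paper argument to compare against. Your derivation is the standard one: the Penrose identities $AA^\dagger A = A$ plus symmetry of $AA^\dagger$ and $A^\dagger A$ give the projector interpretations, necessity and sufficiency of $A_\perp b = 0$ follow by direct computation, and completeness of the parameterization $x = A^\dagger b + A^\perp z$ is handled cleanly by observing that any solution $x$ satisfies $x = A^\dagger b + A^\perp x$, i.e.\ one may take $z = x$ itself. That last step is the only point where such proofs sometimes go wrong (showing the family is exhaustive, not just that each member solves the equation), and you have it right.
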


\subsection{Proof of Theorem \ref{thm:behaviorPWinvariant}}\label{appx:proofthm:behaviorPWinvariant}
To prove this theorem, we need the auxiliary results below.
\begin{prop}\label{prop:truncintersect}
	Let $\B_1,\B_2\subset\W^\T$ be two time-invariant behaviors with lags $\lag{\B_1}$ and $\lag{\B_2}$. Then
	\begin{equation}\label{eq:intersectinclude}
		\left(\B_1\cap\B_2\right)_{\bint{1,L}}\subset\B_{1\bint{1,L}}\cap\B_{2\bint{1,L}}.
	\end{equation}
	Furthermore, if $L>\max\left\{\lag{\B_1},\lag{\B_2}\right\}$, then
	\begin{equation}\label{eq:intersectequal}
		\left(\B_1\cap\B_2\right)_{\bint{1,L}}=\B_{1\bint{1,L}}\cap\B_{2\bint{1,L}}.
	\end{equation}
\end{prop}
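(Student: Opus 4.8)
The plan is to treat the two inclusions separately and, for the nontrivial one, to pass to kernel representations so that the whole question collapses to a single identity about truncated coefficient matrices. The inclusion \eqref{eq:intersectinclude} needs no hypothesis on $L$: if $\hat{w}\in(\B_1\cap\B_2)_\bint{1,L}$, pick $w'\in\B_1\cap\B_2$ with $w'_\bint{1,L}=\hat{w}$; since $w'\in\B_1$ we get $\hat{w}\in\B_{1\bint{1,L}}$, and since $w'\in\B_2$ we get $\hat{w}\in\B_{2\bint{1,L}}$. Only monotonicity of restriction is used, so this holds for every $L$ and the lags are irrelevant.

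For \eqref{eq:intersectequal} only the reverse inclusion remains, under $L>\ell:=\max\{\lag{\B_1},\lag{\B_2}\}$. I would write minimal kernel representations $\B_i=\ker(R_i(\sigma))$, so the row degrees of $R_i$ are bounded by $\lag{\B_i}$, and let $\widetilde{R}_i$ be the banded coefficient matrix stacking every equation $R_i(\sigma)w=0$ whose support lies inside $[1,L]$ (the extended-coefficient construction already used for $\widetilde{\Pi}$). Since $L>\lag{\B_i}$, one has $\B_{i\bint{1,L}}=\ker(\widetilde{R}_i)$, and hence $\B_{1\bint{1,L}}\cap\B_{2\bint{1,L}}=\ker\col(\widetilde{R}_1,\widetilde{R}_2)$. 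On the other hand $\B_1\cap\B_2=\ker\col(R_1,R_2)(\sigma)$ is represented by an operator whose row degrees are still at most $\ell<L$ and whose in-window coefficient matrix is exactly $\col(\widetilde{R}_1,\widetilde{R}_2)$. The claim therefore reduces to $(\B_1\cap\B_2)_\bint{1,L}=\ker\col(\widetilde{R}_1,\widetilde{R}_2)$, i.e.\ that a finite trajectory annihilated by both truncated operators extends to a \emph{single} trajectory of $\B_1\cap\B_2$; this does not follow from the one-behavior facts alone, since $\col(R_1,R_2)$ need not be minimal.

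This simultaneous extension is where I expect the real work to lie. The one-behavior continuation ($\widetilde{R}_i\hat{w}=0\Rightarrow\hat{w}\in\B_{i\bint{1,L}}$) is the standard fact that solutions of a linear recursion prolong, equivalently repeated weaving of a length-$(>\lag{\B_i})$ block via Lemma \ref{lem:weaving}. For the joint extension I would argue recursively: given values on $[1,k]$ (with $k\ge L$) satisfying all in-window equations of both $R_1$ and $R_2$, the only fresh constraints from appending $w(k+1)$ are the two boundary equations whose supports end at $k+1$, and by Lemma \ref{lem:Ax=b} a compatible $w(k+1)$ exists precisely when this stacked boundary system is consistent; the hypothesis $L>\ell$ is exactly the overlap that should force consistency. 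Alternatively, a windowing argument reduces a general $L$ to the base length $L=\ell+1$, reassembling $\hat{w}$ from its length-$(\ell+1)$ windows through the bound $\lag{\B_1\cap\B_2}\le\ell$ of Lemma \ref{lem:interconproperty}. Either route isolates the genuine obstacle to simultaneous extendability, and this is exactly where the linear (subspace) structure is indispensable: for merely time-invariant behaviors the two one-step continuation sets can be disjoint even when $L>\ell$, so structural input beyond time-invariance is unavoidable.
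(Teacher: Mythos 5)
Your first inclusion is fine and is essentially the same argument the paper gives. For the equality, however, your proof has a genuine gap at exactly the point you yourself flag as ``where the real work lies'': the simultaneous extension is never actually established. Having reduced the claim to showing that a finite trajectory in $\ker\left(\col(\widetilde{R}_1,\widetilde{R}_2)\right)$ prolongs to a \emph{single} infinite trajectory of $\B_1\cap\B_2$, you propose to append $w(k+1)$ recursively and invoke Lemma \ref{lem:Ax=b}; but the consistency condition $A_\perp b=0$ for the stacked boundary system (with $A$ the stacked leading coefficient blocks of $R_1$ and $R_2$ and $b$ the contribution of the already-fixed past) is precisely the assertion to be proved, and the sentence ``$L>\ell$ is exactly the overlap that should force consistency'' restates the claim rather than proving it. The alternative windowing route leaves the same hole: the bound $\lag{\B_1\cap\B_2}\le\ell$ lets you conclude membership in $\B_1\cap\B_2$ only for an \emph{infinite} trajectory all of whose windows are admissible, so it cannot by itself certify that a finite $\hat{w}$ lies in $\left(\B_1\cap\B_2\right)_\bint{1,L}$ --- that is again the statement being proved, now for $L=\ell+1$. (A smaller technical slip: full row rank of $R_i$ does not bound its row degrees by $\lag{\B_i}$; you need a row-reduced, shortest-lag representation before you may assert $\ker(\widetilde{R}_i)=\B_{i\bint{1,L}}$.)

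The paper closes the argument differently, at the level of sets rather than individual trajectories: since $L>\max\{\lag{\B_1},\lag{\B_2}\}\ge\lag{\B_1\cap\B_2}$ by Lemma \ref{lem:interconproperty}(i), Lemma \ref{lem:truncextension} gives that $\left(\B_1\cap\B_2\right)_\bint{1,L}$ uniquely extends to $\B_1\cap\B_2$, while $\B_{1\bint{1,L}}$ and $\B_{2\bint{1,L}}$ uniquely extend to $\B_1$ and $\B_2$; both sides of the claimed identity then extend to the same behavior and coincide by Lemma \ref{lem:truncagree}. The hard content you are missing is thus outsourced to the cited results of \cite{Markovsky:2020}; if you want to keep your representation-based route you must supply the consistency argument yourself (e.g., via the shift-and-intersect or state construction for LTI behaviors). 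Your closing observation that mere time-invariance cannot suffice is a good one --- note that the paper's own proof also silently strengthens the hypothesis to LTI by invoking lemmas stated for $\LTI{\bullet}$.
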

\begin{proof}
	The proof requires two auxiliary results, which are stated in the following lemmas.
	\begin{lem}[\cite{Markovsky:2020}]\label{lem:truncextension}
		Let $\B\in\LTI{\bullet}$, then $\B_\bint{1,L}$ uniquely extends to $\B$ if $L>\lag{\B}$.
	\end{lem}
	\begin{lem}[\cite{Markovsky:2020}]\label{lem:truncagree}
		Let $\B_1,\B_2\in\LTI{\bullet}$ and let $L>\max\{\lag{\B_1},\lag{\B_2}\}$, then $\B_{1\bint{1,L}}=\B_{2\bint{1,L}}$ if and only if $\B_1=\B_2$.
	\end{lem}
	We are now in the position to prove this proposition. The behaviors in \eqref{eq:intersectinclude} are as follows
	\begin{align*}
		\left(\mathfrak{B}^1\cap\mathfrak{B}^2\right)_{|[1,L]}&=\bigl\{w|\exists w'\in \mathfrak{B}^1\cap\mathfrak{B}^2, \hat{w}=\hat{w}'_\bint{1,L}\bigr\},\\
		\mathfrak{B}^1_{|[1,L]}\cap\mathfrak{B}^2_{|[1,L]}&=\bigl\{w|\exists w_1'\in \mathfrak{B}^1, w_2'\in \mathfrak{B}^2,\\
		&\qquad\qquad\qquad\hat{w}=\hat{w}'_{1\bint{1,L}}=\hat{w}_{2\bint{1,L}}\bigr\}.
	\end{align*}
	We see that the latter is more relaxed than the former; hence we establish \eqref{eq:intersectinclude}. To show \eqref{eq:intersectequal}, note that, according to Lemma \ref{lem:interconproperty}(i), $L>\max\left\{\lag{\B_1},\lag{\B_2}\right\}\geq\lag{\B_1\cap\B_2}$. Therefore, according to Lemma \ref{lem:truncextension}, the LHS of \eqref{eq:intersectequal} uniquely extends to $\B_1\cap\B_2$. On the other hand, by similar reasoning, $\B_{1\bint{1,L}}$ and $\B_{2\bint{1,L}}$ uniquely extend to $\B_1$ and $\B_2$, respectively. Since both sides extend to the same behavior, they coincide according to Lemma \ref{lem:truncagree}. 
\end{proof}
\begin{prop}\label{prop:trunccartesian}
	Let $\B^1\subset\left(\W^1\right)^\T$ and $\B^2\subset\left(\W^2\right)^\T$ be two behaviors, then
	$ \left(\B^1\times\B^2\right)_\bint{1,L}=\B^1_\bint{1,L}\times\B^2_\bint{1,L}$.
	\end{prop}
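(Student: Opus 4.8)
The plan is to verify the asserted set equality directly from the definition of finite-interval restriction, exploiting the fact that a Cartesian product imposes \emph{no} coupling between its two factors, so that the single existential quantifier defining the restriction splits into two independent ones. Unlike the intersection case of Proposition~\ref{prop:truncintersect}, no time-invariance hypothesis and no lower bound on $L$ will be needed; this is precisely because intersection forces the two behaviors to share the \emph{same} trajectory (creating the ``overhang'' that demands $L>\max\{\lag{\B_1},\lag{\B_2}\}$), whereas in a product the components evolve independently.

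First I would unwind the left-hand side using the defining formula for the restriction: a finite trajectory $(\hat{w}^1,\hat{w}^2)$ lies in $\left(\B^1\times\B^2\right)_\bint{1,L}$ if and only if there exists a full trajectory $(w'^1,w'^2)\in\B^1\times\B^2$ whose restriction to $[1,L]$ equals $(\hat{w}^1,\hat{w}^2)$. The key observation is twofold: (i) the restriction of a product trajectory to $[1,L]$ is simply the pair of component restrictions, i.e. $(w'^1,w'^2)_\bint{1,L}=\left(w'^1_\bint{1,L},\,w'^2_\bint{1,L}\right)$; and (ii) membership $(w'^1,w'^2)\in\B^1\times\B^2$ is by definition the conjunction $w'^1\in\B^1$ \emph{and} $w'^2\in\B^2$, with no constraint linking the two. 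Consequently the existential over the product factors as an existential over $w'^1\in\B^1$ together with an independent existential over $w'^2\in\B^2$.

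Carrying this out, $(\hat{w}^1,\hat{w}^2)\in\left(\B^1\times\B^2\right)_\bint{1,L}$ holds if and only if there exist $w'^1\in\B^1$ with $\hat{w}^1=\hat{w}'^1_\bint{1,L}$ and $w'^2\in\B^2$ with $\hat{w}^2=\hat{w}'^2_\bint{1,L}$, which is exactly the statement that $\hat{w}^1\in\B^1_\bint{1,L}$ and $\hat{w}^2\in\B^2_\bint{1,L}$, i.e. $(\hat{w}^1,\hat{w}^2)\in\B^1_\bint{1,L}\times\B^2_\bint{1,L}$. This chain of equivalences gives both inclusions simultaneously and closes the proof. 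There is no genuine obstacle here; the entire content is the decoupling of the two existential witnesses, which is what distinguishes this (unconditional) product identity from the conditional intersection identity of Proposition~\ref{prop:truncintersect}, and it is worth stating explicitly that this is why the result requires no hypotheses on the lags or on $L$.
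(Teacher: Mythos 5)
Your proof is correct and is exactly the argument the paper intends: the paper omits the proof, remarking only that it is ``similar to the first part of the proof of Proposition~\ref{prop:truncintersect},'' i.e.\ the definitional unwinding of the restriction operator, and your observation that the existential witness decouples across the two factors of the product (so no lag condition on $L$ is needed, unlike for intersection) is precisely that argument carried out in full.
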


The proof is similar to the first part of the proof of Proposition \ref{prop:truncintersect} and is therefore omitted.
\begin{prop}\label{prop:truncproj}
	Let $\B\in(\W_1\times\W_2)^\T$, then
	\begin{equation}
	    \proj{w_1}{\B}_\bint{1,L}\subset\proj{w_1}{\B_\bint{1,L}}.
	\end{equation}
\end{prop}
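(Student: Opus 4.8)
The plan is to prove the inclusion by a direct element chase, unwinding the two nested definitions (projection and restriction) in the correct order and exhibiting an explicit witness. First I would take an arbitrary $\hat{w}_1\in\proj{w_1}{\B}_\bint{1,L}$ and apply the definition of restriction to obtain a full-length trajectory $w_1\in\proj{w_1}{\B}$ with $\hat{w}_1=\hat{w}_{1\bint{1,L}}$. The whole argument hinges on carrying this $w_1$ as a \emph{full-length} object so that the projection definition can be applied to it before any further restriction is performed.

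Next, applying the definition of the projection operator $\pi$ to $w_1\in\proj{w_1}{\B}$, I would extract a complementary full-length trajectory $w_2\in\W_2^\T$ such that the pair $(w_1,w_2)\in\B$. Restricting this pair to the interval $[1,L]$ then yields $(w_1,w_2)_\bint{1,L}=(\hat{w}_1,\hat{w}_{2\bint{1,L}})\in\B_\bint{1,L}$, directly from the definition of $\B_\bint{1,L}$, since $(w_1,w_2)$ is a full-length trajectory of $\B$ whose restriction matches the candidate. Projecting this finite-length pair onto its first component then shows $\hat{w}_1\in\proj{w_1}{\B_\bint{1,L}}$, with the restricted complement $\hat{w}_{2\bint{1,L}}$ serving as the required witness; as $\hat{w}_1$ was arbitrary, this establishes the claimed inclusion.

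The argument is essentially bookkeeping of nested existential quantifiers, so I do not expect a genuine obstacle. The only point requiring care is the order of operations noted above: the complement $w_2$ must be produced at the level of full trajectories (through the projection step) and only afterwards restricted, so that the witness certifying membership in $\proj{w_1}{\B_\bint{1,L}}$ genuinely lives on $[1,L]$. Unlike Proposition~\ref{prop:truncintersect}, no lag bound or length hypothesis on $L$ is needed here, since nowhere do I have to extend a finite trajectory back to the infinite axis.

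It is worth noting that the same chase read backwards also goes through: any witness $\hat{w}_2$ for $\hat{w}_1\in\proj{w_1}{\B_\bint{1,L}}$ arises, by definition of $\B_\bint{1,L}$, from a full pair $(w_1,w_2)\in\B$ whose first component restricts to $\hat{w}_1$, whence $\hat{w}_1\in\proj{w_1}{\B}_\bint{1,L}$. Thus one in fact has equality, and only the stated inclusion is recorded because that is the direction invoked in the proof of Theorem~\ref{thm:behaviorPWinvariant}. This stands in contrast to Proposition~\ref{prop:truncintersect}, where the analogous reverse inclusion genuinely fails without the lag/length assumption.
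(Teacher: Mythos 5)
Your proof is correct and is essentially the paper's own argument: the paper simply writes out both sets by unwinding the definitions of projection and restriction and observes that the first is the second with the finite-length witness taken to be the restriction of the full-length complement, which is exactly the element chase you describe. Your side remark that the reverse inclusion (hence equality) also holds is accurate, since every element of $\B_\bint{1,L}$ is by definition the restriction of a full trajectory; the paper records only the inclusion because that is the direction used in \eqref{eq:interconsupset}.
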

\begin{proof}
	The two behaviors in question are
	\begin{align*}
		\proj{w}{\mathfrak{B}}_{|[1,L]}&=\bigl\{w|\exists w',\ell, (w',\ell)\in\mathfrak{B},\hat{w}=\hat{w}'_\bint{1,L}\bigr\},\\
		\proj{w}{\mathfrak{B}_{|[1,L]}}&=\bigl\{w|\exists w',\ell,\ell',(w',\ell')\in\mathfrak{B},\\
		&\qquad\qquad\qquad\quad\hat{w}=\hat{w}'_\bint{1,L},\hat{\ell}=\hat{\ell}'_\bint{1,L}\bigr\}.
	\end{align*}
	\revise{After appropriate permutations to align the variable $w$ in the above equations,} the former is the latter with $\ell=\ell'$.
\end{proof}

We now prove Theorem~\ref{thm:behaviorPWinvariant}. On the one hand, by the combination of Lemma \ref{lem:interconbehaviourproj}(ii) and the propositions above, 
\begin{equation}\label{eq:interconsupset}
	\begin{split}
		&\left(\bigtimes_{i=1}^N\proj{w^i}{\B_\bint{1,L}}\right)\cap\B^\Pi_\bint{1,L}\\
		\supset&\left(\bigtimes_{i=1}^N\proj{w^i}{\B}_\bint{1,L}\right)\cap\B^\Pi_\bint{1,L}\\
		=&\left(\bigtimes_{i=1}^N\proj{w^i}{\B}\right)_\bint{1,L}\cap\B^\Pi_\bint{1,L}\\
		\supset&\left(\left(\bigtimes_{i=1}^N\proj{w^i}{\B}\right)\cap\B^\Pi\right)_\bint{1,L}=\B_\bint{1,L}.
	\end{split}
\end{equation}
On the other hand, according to Lemma \ref{lem:interconproperty}(i), $L>\lag{\B}$. Therefore, according to Lemma \ref{lem:interconbehaviourproj} and Proposition \ref{prop:truncintersect}, we have
\begin{equation}\label{eq:interconsubset}
	\begin{split}
		&\left(\bigtimes_{i=1}^N\proj{w^i}{\B_\bint{1,L}}\right)\cap\B^\Pi_\bint{1,L}\\
		\subset&\left(\bigtimes_{i=1}^N\B^i_\bint{1,L}\right)\cap\B^\Pi_\bint{1,L}
		=\left(\bigtimes_{i=1}^N\B^i\right)_\bint{1,L}\cap\B^\Pi_\bint{1,L}\\
		=&\left(\left(\bigtimes_{i=1}^N\B^i\right)\cap\B^\Pi\right)_\bint{1,L}=\B_\bint{1,L}.
	\end{split}
\end{equation}
The combination of \eqref{eq:interconsupset} and \eqref{eq:interconsubset} gives \eqref{eq:interconinvariant}.

\subsection{Proof of Theorem \ref{thm:behaviorPW}}\label{appx:proofthm:behaviorPW}
To begin with, viewing the interconnected system $\Sigma$ as a stand-alone system, the collection of all free variables in the subsystems are the free variables for $\Sigma$. \revise{Since the lower bound of $L$ is specified as the upper bound of $\mathtt{L}(\B)$ in Lemma \ref{lem:interconproperty}, we have $L>\lag{\B}$. Similar argument shows that the free variables are definitely persistently exciting of order no less than $L+\n{\B}$.} According to Lemma \ref{lem:behaviorpara}, a mosaic Hankel matrix $\Hk_L(\revise{\mathcal{W}})$ constructed from a set of trajectories $\mathcal{W}\subset\B_\bint{1,T}$ with its free component collectively persistently exciting of order $L+\n{\B}$ parameterizes $\B_\bint{1,L}$. It is easy to see that the sub-matrix of $\Hk_L(\revise{\mathcal{W}})$ containing the rows of trajectories of $w^i$ forms the mosaic Hankel matrix $\Hk_L(\revise{\mathcal{W}}^i)$, hence it parameterizes $\proj{w^i}{\B_\bint{1,L}}$. We therefore have \eqref{eq:Hankelproj}. 

Now, the behavior represented by 
\begin{subequations}\label{eq:fullintercon}
	\begin{align}
		\hat{w}&=\Hk_{L}g,\label{eq:fullintercon1} \\
		\widetilde{\Pi}\hat{w}&=0,\label{eq:fullintercon2}
	\end{align}
\end{subequations}
where $g\in\R^\bullet$ is an arbitrary vector, is precisely the LHS of \eqref{eq:interconinvariant}. Since $L>\lag{\B}$, from Theorem \ref{thm:behaviorPWinvariant}, \eqref{eq:fullintercon} represents the behavior $\B_\bint{1,L}$. Substituting \eqref{eq:fullintercon1} into \eqref{eq:fullintercon2} gives
\begin{equation}
	\widetilde{\Pi}\Hk_{L}g=0 \ \Rightarrow \ g=(\widetilde{\Pi}\Hk_{L})^\perp z,
\end{equation}
where $z\in\R^\bullet$ is arbitrary. Substituting into \eqref{eq:fullintercon1} gives \eqref{eq:Msystem}.
\subsection{Proof of Proposition \ref{prop:dissipativity}}\label{appx:proofprop:dissipativity}
The proof is inspired by the continuous-time counterpart in \cite{Willems:2007a}. For the \emph{if} part, summing both sides of \eqref{eq:dissstep} from $k_0$ to $k_1$ gives 
\begin{equation}\label{eq:disssum}
	-\sum_{k=k_0}^{k_1}s(k)\leq V(k_0-1)-V(k_1)\leq V(k_0-1).
\end{equation}
We therefore get \eqref{eq:diss} with $C=V_(k_0-1)$. For the \emph{only if} part, choose
\begin{equation*}
	V(k)=\sup_{K\geq k} \widetilde{V}_K(k), \ \widetilde{V}_k(k)=\begin{cases}
		 -\sum_{i=k+1}^{K}s(i), \ &\text{if } K>k,\\
		0, &  \text{if } K=k
	\end{cases}.
\end{equation*}
Note that $0\leq V(k)<\infty$ for all $k$. Then, for any $k\in\Znn{}$, if the supremum occurs at $K>k$, we have
\begin{align*}
	V(k)-V(k-1)&=\sup_{K}\left\{-\sum_{i=k+1}^{K}s(i)\right\}-\sup_{K}\left\{-\sum_{i=k}^{K}s(i)\right\}\\
	&\leq\sup_{K}\left\{\sum_{i=k}^{K}s(i)-\sum_{i=k+1}^{K}s(i)\right\}=s(k).
\end{align*}
On the other hand, if the supremum occurs at $K=k$, then $V(k)=0$ and $V(k-1)=-s(k)$, we immediately get \eqref{eq:dissstep}.

\subsection{Proof of Proposition \ref{prop:Hankeldiss}}\label{proofprop:Hankeldiss}
We need a technical tool to prove this proposition. This is stated in the following lemma.
\begin{lem}[\cite{Yan:2021}]\label{lem:projinout}
	Let $\Sigma=(\Sigma^1\sqcap\Sigma^2)\wedge\Sigma^\Pi$, then
	\begin{equation*}
		\proj{w^1}{(\B^1\times\B^2)\cap\B^\Pi}=\B^\revise{1}\cap\proj{w^1}{\left[(\W^1)^\T\times\B^2\right]\cap\B^\Pi}
	\end{equation*}
\end{lem}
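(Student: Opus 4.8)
The plan is to prove the identity by showing that the membership conditions defining each side reduce to the \emph{same} logical predicate on $w^1$, so the equality follows purely by unfolding the definitions of $\sqcap$, $\wedge$, and the projection $\pi$. No topological or representation-theoretic input is needed; the entire content is set-theoretic, and I would proceed by establishing equivalence of membership rather than a two-way inclusion (though either route works).

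First I would expand the left-hand side. By the definitions of $\sqcap$ and $\wedge$, the behavior $(\B^1\times\B^2)\cap\B^\Pi$ is the set of pairs $(w^1,w^2)$ with $w^1\in\B^1$, $w^2\in\B^2$, and $(w^1,w^2)\in\B^\Pi$. Applying the projection, $w^1\in\proj{w^1}{(\B^1\times\B^2)\cap\B^\Pi}$ holds precisely when there exists a witness $w^2$ satisfying these three conditions simultaneously.

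The key step is to observe that the clause $w^1\in\B^1$ does not involve the quantified variable $w^2$, so it may be pulled out of the existential quantifier: the predicate becomes ``$w^1\in\B^1$, and there exists $w^2$ with $w^2\in\B^2$ and $(w^1,w^2)\in\B^\Pi$.'' This is the only genuinely load-bearing manipulation, and it rests entirely on the product structure $\B^1\times\B^2$ decoupling the constraint on $w^1$ from that on $w^2$. I expect this to be the one point requiring care: keeping the quantifier scope straight when factoring $w^1\in\B^1$ out of the projection, since the factored clause constrains only the free index $w^1$ and not the bound witness $w^2$.

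Finally I would identify the surviving existential clause with the right-hand factor. Replacing $\B^1$ by the full space $(\W^1)^\T$ inside the product removes any restriction on $w^1$, because $w^1\in(\W^1)^\T$ holds automatically; hence $[(\W^1)^\T\times\B^2]\cap\B^\Pi$ consists exactly of the pairs with $w^2\in\B^2$ and $(w^1,w^2)\in\B^\Pi$, and its projection onto $w^1$ is precisely the predicate ``there exists $w^2$ with $w^2\in\B^2$ and $(w^1,w^2)\in\B^\Pi$.'' Intersecting with $\B^1$ then reinstates the factored-out clause $w^1\in\B^1$, so the right-hand predicate matches the left-hand one verbatim, giving the claimed equality. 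Beyond the quantifier bookkeeping noted above, I do not anticipate any real obstacle.
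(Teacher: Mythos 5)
Your proof is correct. The paper itself states this lemma as an imported result from the cited reference and gives no proof of its own, so there is nothing to compare against; your argument --- unfolding $\sqcap$, $\wedge$, and $\pi$ into membership predicates and pulling the $w^1\in\B^1$ clause out of the existential quantifier over the witness $w^2$, which is legitimate precisely because the product structure $\B^1\times\B^2$ decouples the two constraints --- is exactly the standard set-theoretic verification and is complete as written.
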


We are now ready to prove the proposition. Firstly, condition (ii) ensure that, within the $L$ step, \eqref{eq:LTI+diss2} holds for all trajectories such that \eqref{eq:LTI+diss1} holds. In terms of the behaviors, \eqref{eq:LTI+diss2} represents the behavior $\B_{sw}=\proj{w_p}{(\W_p^\T\times\B_s)\cap\B_{ps}^\Pi}$ and the assumption in the proposition means that the lag of this behavior is bounded by $L^-$. Therefore, condition (ii) guarantees that $\B_{p\bint{k-L^-,k+L^+}}\subset\B_{sw\bint{k-L^-,k+L^+}}$, or
\begin{align*}
	\B_{p\bint{k-L^-,k+L^+}}=\B_{p\bint{k-L^-,k+L^+}}\cap\B_{sw\bint{k-L^-,k+L^+}}.
\end{align*}
From condition (i), we have $L>\max\{\lag{\B_p},\lag{\B_{sw}}\}$, which, according to Proposition \ref{prop:truncintersect}, means that the RHS of the above equation can be combined, i.e.,
\begin{align*}
	\B_{p\bint{k-L^-,k+L^+}}&=(\B_p\cap\B_{sw})_\bint{k-L^-,k+L^+}\\
	&=\proj{w_p}{(\B_p\times\B_s)\cap\B_{ps}^\Pi}_\bint{k-L^-,k+L^+}\\
	&=\proj{w_p}{\B_{ps}}_\bint{k-L^-,k+L^+},
\end{align*}
where the second line of equality uses Lemma \ref{lem:projinout}. Finally, since the lags of both sides are less than $L$ according to the first condition and \eqref{eq:lagdissproj}, we have the equality with unrestricted interval according to Lemma \ref{lem:truncagree}. 

\revise{In the case when $L^+=0$, \eqref{eq:PhiLdissSOS} reduces to only one inequality, i.e., $z_k^\top\F^\top (\widehat{\Phi}-\nabla\widehat{\Psi})\F z_k\geq0$ for all $z_k$, which is equivalent to \eqref{eq:PhiLdissLMI}.}

\subsection{Proof of Proposition \ref{prop:dissmani}}\label{appx:proofprop:dissmani}
We construct the manifest behavior based on the signature of $R$. Perform eigendecomposition on $R$ as
\begin{equation}\label{eq:EVD}
	R=V\Lambda V^\top\coloneqq\begin{bmatrix}
		V_+^\top \\ V_0^\top \\ V_-^\top 
	\end{bmatrix}^\top \begin{bmatrix}
		\Lambda_+&0&0\\ 0 & 0&0\\ 0& 0 & -\Lambda_-
	\end{bmatrix}\begin{bmatrix}
		V_+^\top \\ V_0^\top \\ V_-^\top 
	\end{bmatrix},
\end{equation} 
where $V_+$, $V_0$ and $V_-$ are conformable partitions of $V$ to those of $\Lambda$. 

\noindent \textbf{(i) $\boldsymbol{R\leq0}$}

\noindent Firstly, differentiating LHS of \eqref{eq:dissmanifest} and setting it to 0 lead to the equality
\begin{equation*}\label{eq:max}
	S^\top \hat{w}_1+R\hat{w}_2=0.
\end{equation*}
According to Lemma \ref{lem:Ax=b}, this equation has solutions for $\hat{w}_2$ if and only if $R_\perp S^\top\hat{w}_1=0$. If this is the case, then all solutions correspond to the maximum of \eqref{eq:dissmanifest} with respect to $\hat{w}_2$, which can be computed as
\begin{equation}
	\begin{split}
		&\hat{w}_1^\top Q\hat{w}_1+\hat{w}_1^\top S\hat{w}_2+\hat{w}_2^\top S^\top \hat{w}_1+\hat{w}_2^\top R\ell\\
		=&\hat{w}_1^\top Q\hat{w}_1-\hat{w}_2^\top R\hat{w}_2+\hat{w}_2^\top (S^\top \hat{w}_1+R\hat{w}_2)\\
		=&\hat{w}_1^\top Q\hat{w}_1-\hat{w}_2^\top RR^{\dagger} R\hat{w}_2\\
		=&\hat{w}_1^\top Q\hat{w}_1-\hat{w}_1^\top SR^{\dagger} S^\top \hat{w}_1.
	\end{split}
\end{equation}
If $R_\perp S^T\hat{w}\neq0$, then, for any given $\hat{w}$, choose
\begin{equation}
	\hat{w}_2=\frac{1}{2}\rho R_\perp S^\top\hat{w}_1,
\end{equation}
where $\rho\in\R$. \eqref{eq:dissmanifest} then becomes
\begin{equation*}
	\begin{split}
		&\hat{w}_1^\top Q\hat{w}_1+\rho\hat{w}_1^\top S R_\perp S^\top\hat{w}_1+\frac{\rho^2}{4}\hat{w}_1^\top SR_\perp RR_\perp S^\top\hat{w}_1\\
		=&\hat{w}_1^\top (Q+\rho S R_\perp S^\top)\hat{w}_1,
	\end{split}
\end{equation*}
which can always be made positive for appropriate choice of $\rho$ because $R_\perp\geq0$. In other words, $w_1$ is free. Note that this includes $R=0$ as a special case.

\noindent \textbf{(ii) $\boldsymbol{R\geq0}$}

\noindent We firstly show the case when $R>0$. Completing the square for \eqref{eq:dissmanifest} with respect to $w_1$ gives
\begin{equation*}
	(S^\top \hat{w}_1+R\hat{w}_2)^\top R^{-1}(S^\top \hat{w}_1+R\hat{w}_2)\geq \hat{w}_1^\top (SR^{-1}S^\top -Q)\hat{w}_1.
\end{equation*}
This is automatically true if the RHS is negative. If the RHS is non-negative, a possible set of solutions for $\hat{w}_2$ can be constructed as
\begin{equation}\label{eq:latentsol}
	\hat{w}_2=\rho R^{\frac{1}{2}}\sqrt{\hat{w}_1^\top (SR^{-1}S^\top -Q)\hat{w}_1}-R^{-1}S^\top \hat{w}_1,
\end{equation}
where $\rho\geq1$. In other words, there is always a corresponding $\hat{w}_2$ for any $\hat{w}_1$, hence $\hat{w}_1$ is free. If $R\geq0$, then define $\hat{w}_2'=V^\top\hat{w}_2=\col(\hat{w}_{2+}',\hat{w}_{20}')$, where $V$ is the eigenvector matrix in \eqref{eq:EVD}. We then have 
\begin{equation*}
	\hat{w}_2^\top R\hat{w}_2=(\hat{w}_{2+}')^\top \Lambda_+\hat{w}_{2+}'.
\end{equation*}
Since $\Lambda_+>0$, we can arbitrarily choose $\hat{w}_{20}'$ and compute $\hat{w}_{2+}'$ in a similar way as \eqref{eq:latentsol} by replacing $\hat{w}_1\rightarrow\col(\hat{w}_1,\hat{w}_{20}')$, $Q\rightarrow \begin{bmatrix}
	Q & SV_0\\ V_0^\top S^\top & 0
\end{bmatrix}$, $S\rightarrow\begin{bmatrix}
SV_+ \\ 0
\end{bmatrix}$, $R\rightarrow\Lambda_+$. We therefore obtain an $\hat{w}_2'$, hence $\hat{w}_2$, for any $\hat{w}_1$. Note that this case does \emph{not} include the case when $R=0$.

\noindent \textbf{(iii) $\boldsymbol{R}$ is indefinite}

\noindent By defining $\hat{w}_2'=V^\top \hat{w}_2=\col(\hat{w}_{2+}',\hat{w}_{20}',\hat{w}_{2-}')$, and choosing $\hat{w}_{2-}'=0$, this case reduces to Case (ii) and we conclude that $w_1$ is free.
%

\subsection{Proof of Theorem \ref{thm:controlledexist}}\label{appx:proofthm:controlledexist}
\revise{As illustrated by \eqref{eq:wpControlled}, the controlled behavior before the integration of dissipativity conditions is decomposed into a component that we can manipulate (through $z_p$) and a component blocked by the network interconnection that is out of control. The control goal is therefore to further limit the choice of $z_p$ such that
\begin{equation}\label{eq:disscomplete}
    (\F_{pc}z_p+\F_hz_h)^\top\widetilde{\Delta}(\F_{pc}z_p+\F_hz_h)\geq0
\end{equation}
for all $z_h$. This requires firstly the existence of a lower bound with respect to $z_h$, which is equivalent to requiring that
\begin{subequations}\label{eq:minimumSchur}
    \begin{align}
        \F_h^\top\widetilde{\Delta}\F_h&\geq0,\label{eq:minimumSchur1}\\
        (\F_h^\top\widetilde{\Delta}\F_h)_\perp\F_h^\top\widetilde{\Delta}\F_{pc}&=0.\label{eq:minimumSchur2}
    \end{align}
\end{subequations}
Secondly, it requires the guarantee of the existence of $z_p$ such that the lower bound is non-negative, i.e., the existence of $z_p$ such that (by substituting \eqref{eq:minimumSchur2} into \eqref{eq:disscomplete})
\begin{equation}\label{eq:dissWorst}
    z_p^\top\F_{pc}^\top\left[\widetilde{\Delta}-\widetilde{\Delta} \F_h\left(\F_h^\top \widetilde{\Delta} \F_h\right)^\dagger\F_h^\top \widetilde{\Delta}\right]\F_{pc}z_p\geq0.
\end{equation}
If \eqref{eq:minimum} is satisfied, then, by Schur complement, both conditions in \eqref{eq:minimumSchur} are guaranteed and
\begin{equation}
    \F_{pc}^\top\left[\widetilde{\Delta}-\widetilde{\Delta} \F_h\left(\F_h^\top \widetilde{\Delta} \F_h\right)^\dagger\F_h^\top \widetilde{\Delta}\right]\F_{pc}\geq\F_{cc}^\top P^\top\widetilde{\Phi}_cP\F_{cc}.
\end{equation}
In other words, if there exists a $z_p$ such that
\begin{equation}\label{eq:globalcontrolDiss}
    z_p^\top\F_{cc}^\top P^\top\widetilde{\Phi}_cP\F_{cc}z_p\geq0, 
\end{equation}
then \eqref{eq:dissWorst} is satisfied and we have achieved the desired global performance requirement.

Now, for each controller $\Sigma_c^j$ all possible trajectories to choose from are given by \eqref{eq:controllerLocal}. Its local QdF condition can therefore be written as
\begin{equation}\label{eq:localDiss}
    \hat{w}_c^{j\top}\widetilde{\Phi}_c^j\hat{w}_c^j=\begin{bmatrix}
        z_p^j\\ z_m^j
    \end{bmatrix}^\top\begin{bmatrix}
        \F_{cm}^{j\top}\widetilde{\Phi}_c^j\F_{cm}^j & \F_{cm}^{j\top}\widetilde{\Phi}_c^j\F_{cf}^j\\
        \F_{cf}^{j\top}\widetilde{\Phi}_c^j\F_{cm}^j & \F_{cf}^{j\top}\widetilde{\Phi}_c^j\F_{cf}^j
    \end{bmatrix}\begin{bmatrix}
        z_p^j\\ z_m^j
    \end{bmatrix}.
\end{equation}
According to Corollary \ref{cor:dissfree}, the satisfaction of \eqref{eq:DOF} guarantees that $z_p^j$ is free. In other words, for any past trajectory, there exists a local solution $w_c^j(k)$ such that \eqref{eq:localDiss} is non-negative. Furthermore, from the construction of $\F_{cm}$ and those of $\F_{cm}^j$, it is easy to see that $\cs({P\F_{cm}})\subset\cs(\diag\{\F_{cm}^j\}_{j=1}^{N_c})$. As a result, \eqref{eq:DOF} also ensures that
\begin{equation}\label{eq:dissImp}
    \F_{cm}^\top P^\top\widetilde{\Phi}_cP\F_{cm}\geq0.
\end{equation}
Again, from Corollary \ref{cor:dissfree}, this means that $z_p$ is free, i.e., it is always possible to find $z_m$ in \eqref{eq:controllerGlobal} such that $\hat{w}_c^\top P^\top\widetilde{\Phi}_cP\hat{w}_c^\top\geq0$. As noted after Eq. \eqref{eq:controllerGlobal}, each choice of $(z_p,z_m)$ in \eqref{eq:controllerGlobal} has a correspondence of $z_p$ in \eqref{eq:controlImp}. This means that the satisfaction of \eqref{eq:DOF} guarantees the existence of $z_p$ such that \eqref{eq:dissImp} is satisfied for any past trajectories, which leads to the fulfillment of \eqref{eq:dissWorst} by satisfying \eqref{eq:minimum} and, ultimately, the guarantee of the desired global performance condition accommodating the blocked behavior with any past system trajectories. Finally, since $\F_{fk}$ is of full row rank, $w_f(k)$ is free in the uncontrolled behavior. The satisfaction of \eqref{eq:disscomplete} for any $z_h$ means that behavior relating $w_f(k)$ is not restricted in the controlled behavior, hence it is free.}

\bibliographystyle{IEEEtran}

\bibliography{ref,ref2}

\end{document}